\newtheorem{theorem}{Theorem}[section]
\newtheorem{lemma}[theorem]{Lemma}
\theoremstyle{definition}
\theoremstyle{remark}
\newtheorem{remark}[theorem]{Remark}
\numberwithin{equation}{section}
\numberwithin{figure}{section}
\numberwithin{table}{section}
\DeclarePairedDelimiter{\norm}{\lVert}{\rVert}
\newbox{\bigpicturebox}
\begin{document}
\title{\bf On the Geographic Spread of Chikungunya between Brazil and Florida: A Multi-patch Model with Time Delay\thanks{Research was partially supported by the National Science Foundation  (DMS-2424605). }} 

\author{\sc Antonio Gondim$^{a}$, Xi Huo$^{a}$, Jaqueline Mesquita$^{b}$ and Shigui Ruan$^{a}$}

\affil[]{$^{a}$Department of Mathematics, University of Miami, Coral Gables, FL 33146, USA\\
$^{b}$Departamento de Matem\'atica, Universidade de Bras\'ilia, 
Bras\'ilia - DF - 70910-900, Brazil}


\maketitle
\begin{abstract}
Chikungunya (CHIK) is a viral disease transmitted to humans through the bites of {\it Aedes} mosquitoes infected with the chikungunya virus (CHIKV). CHIKV has been imported annually to Florida in the last decade due to Miami's crucial location as a hub for international travel, particularly from Central and South America including Brazil, where CHIK is endemic. This work addresses to the geographic spread of CHIK, incorporating factors such as human movement, temperature dependency, as well as vertical transmission, and incubation periods, for different patches. Central to the model is the integration of a multi-patch framework, and in the numerical analysis it is considered human movement between endemic Brazilian states and Florida. We establish crucial correlations between the mosquito reproduction number $\mathcal{R}_{m}$ and the disease reproduction number $\mathcal{R}_{0}$ with the disease dynamics in a multi-patch environment, encompassing not only a numerical analysis but also from a theoretical perspective. Through numerical simulations, validated with real population and temperature data, it is possible to understand the disease dynamics under many different scenarios and make future projections, offering insights for potential effective control strategies, as well as addressing the timing for these strategies to be adopted.

\textbf{Keywords}: Chikungunya; Time-dependent delays; Multi-patch model; Basic reproduction;  number; Geographic spread.

\textbf{AMS Subject Classifications}:  34K20, 34K60, 92D30.
\end{abstract}

\section{Introduction}

Chikungunya (CHIK), caused by chikungunya virus (CHIKV),  is a vector-borne disease transmitted by the {\it Aedes} mosquitoes ({\it Aedes aegypti} and {\it Aedes albopictus}), the same vector responsible for the transmission of dengue and  Zika. CHIK poses a significant public health threat, particularly in tropical and subtropical regions, where it recurrently manifests, posing risks to individuals of all ages. Although these diseases share similar symptoms such as fever, headache, joint pain, nausea, and rash, certain key differences set them apart. The main symptom of CHIK is severe joint pain (polyarthralgia), which can affect multiple joints but is particularly intense in the hands and feet, including the fingers, ankles, and wrists. This pain comes from joint inflammation and is often accompanied by swelling and stiffness. While dengue and Zika can also cause joint pain, the severity differs. In dengue and Zika, joint pain tends to be mild to moderate, whereas in CHIK, it is typically intense and debilitating, significantly impacting productivity and quality of life. Despite a low mortality rate, in its subacute or chronic phase, the related joint pain in CHIK can persist for months or even years, especially in older patients. For more details, see \cite{fiocruz-1}. The transmission dynamics of CHIK are complex, with an intrinsic incubation period in humans ranging from 1 to 12 days and an extrinsic incubation period in mosquitoes about 10 days post mosquito bite. With no effective vaccine, symptomatic treatment remains the primary approach.


CHIKV was first isolated from human serum and mosquitoes in an epidemic in Tanzania in 1952-1953 \cite{Lumsden1955}. In 2004, an outbreak originating on the coast of Kenya subsequently spread to Comoros, La R\'eunion, several other Indian Ocean islands, and India in the following two years. Viremic travelers then spread outbreaks from India to the Andaman and Nicobar Islands, Sri Lanka, the Maldives, Singapore, Malaysia, Indonesia \cite{WHOchik}.  The first evidence of autochthonous CHIK transmission in the Americas was recorded in St. Martin Island in December 2013, consequently autochthonous transmission was detected in more than 40 countries and territories of the Americas and the Pan American Health Organization (PAHO) reported a total of 1,071,696 suspected cases (including 169 deaths) in 2014. Currently, CHIK is endemic in the Americas \cite{PAHOchikungunya}. 



Brazil remains heavily impacted by CHIKV, with the virus maintaining its endemic status in multiple regions \cite{Ferreira2023}. PAHO reported a total of 423,898 CHIK cases in Brazil in 2024. Genetic analysis has shown that the predominant strain in Brazil belongs to the East/Central/South African (ECSA) lineage, which demonstrates high adaptability to local vectors, including {\it Aedes albopictus}. This enhanced adaptability raises concerns about the virus's epidemic potential in less urbanized regions, expanding its reach beyond traditional hotspots \cite{Costa-da-Silva2017}.

Beginning in 2014, CHIK cases were reported among U.S. travelers returning from affected areas in the Americas. While CHIKV transmission is not endemic in Florida, the state has experienced multiple travel-associated cases and Florida Department of Health reported 12 locally-transmitted cases in 2014 \cite{fdh}. Environmental conditions, including increased rainfall and elevated summer temperatures, have bolstered mosquito populations, enhancing the potential for sustained local transmission. Studies reveal the presence of CHIKV in vertical mosquito transmission cycles, further emphasizing the risk of autochthonous outbreaks. Monitoring data highlight the importance of vector control strategies to limit disease spread \cite{Honorio2018}.


Factors such as global warming, reduced herd immunity, substandard living conditions, and increased human mobility have fueled the rapid expansion of CHIKV, particularly in the Americas \cite{12}. Rising temperatures and altered rainfall patterns directly influence {\it Aedes} mosquito dynamics, accelerating development rates \cite{couret2014temperature} and modifying reproduction and feeding behaviors \cite{13, 14}. This interplay has been especially evident in both Brazil and Florida, where localized data indicate increased mosquito longevity and bite frequency. The combination of climatic and environmental changes underscores the necessity for integrated vector management and disease surveillance systems to curb future outbreaks 

Mathematical modeling has played a crucial role in understanding general disease dynamics \cite{Schultz}, and particularly vector-borne disease dynamics where
various models have been developed to explore transmission dynamics and control strategies, we refer to the reviews \cite{Aguiar2022, Ganesan2017, Ogunlade2023, RL2025} and the references cited therein. Importation and transmission dynamics of vector-borne diseases in Florida have been investigated via mathematical analysis \cite{chen2018zika, chen2016westnile, Lord2020}. Since an outbreak of CHIK originating on the coast of Kenya subsequently spread to La R\'eunion Island in 2004, mathematical models were first proposed to understand the transmission dynamics of CHIK and the reported data of CHIK in La R\'eunion Island, such as \cite{Dumont2008, Dumont2010, Moulay2011, Yakob2013}. Since then, various models have been constructed to describe several features of CHIK in different regions, see for example \cite{D-R2020} for CHIK epidemics (2016, 2018, 2019) in the city of Rio de Janeiro, \cite{Feng2019} for the CHIK outbreak in Italy, and so on. The effects of climate change, temperature, rainfall, extrinsic incubation period, and maturation delay have also been included in CHIK models \cite{Chadsuthi2016, Liu2020}. Metapopulation models including populations mobility on a large-scale network have also been studied \cite{Moulay2013}.

Given the annual importation of CHIKV to Miami from endemic regions as Latin America, this paper aims to present a {multi-patch model} incorporating some epidemiological and environmental factors— {human movement, temperature, vertical transmission, and incubation period }— to analyze CHIKV's geographic spread.  We propose metapopulation models in a periodical environment and investigate the effect of migrations of infectious hosts from one patch to another, potentially introducing a disease outbreak in a disease free patch. We will apply the results in \cite{couret2014temperature} to approximate the mean development stage time in terms of the temperature to a periodic function, which will be defined as the delay for the development stage $\tau_{l}(t),$ that will be capturing the seasonality effect on the mean time of the development stage of the larvae. More precisely, in this paper we present a periodic time delay differential model accounting for seasonal variability and periodic time-dependent delays, incorporating temperature impacts on CHIKV transmission dynamics, which plays a crucial role on the mosquitoes dynamics, specially on determining the conditions under which the disease will either remain endemic or die out. Furthermore, we estimate numerically the vertical transmission probability \cite{verticaltrans, Heath2020}, as well as the influence of various control strategies on both the basic reproduction number $\mathcal{R}_{0}$ and the infectious population. Finally, we project cumulative infection counts under different scenarios, considering alternative control strategies in Miami-Dade through 2030, and forecast the number of infectious individuals in Brazil through 2025. 

\section{The Multi-patch Model with Delay}
\setcounter{equation}{0}\setcounter{figure}{0}

Within the continuous model, humans, mosquitoes, and mosquito larvae are considered. Humans serve as hosts for chikungunya infection, acquiring the virus through bites from infectious mosquitoes and spreading the virus via the bites by uninfected mosquitoes. Acting as vectors, mosquitoes facilitate disease transmission to humans and may also infect larvae, which subsequently develop into infectious mosquitoes. In this study, we utilize temperature and population data from various regions, aggregating them into a single model with $n$ patches. Each patch represents a distinct sub-population within the broader endemic context. The deterministic model incorporates branches for humans, mosquitoes, and mosquito larvae, each replicate across the $n$ patches to capture the multi-patch framework’s complexity. 

Let $S_{h}^i(t), E_{h}^i(t), I_{h}^i(t), A_{h}^i(t), R_{h}^i(t)$ represent the number of susceptible, exposed, infected, asymptomatic, and recovered humans in patch $i$ at time $t$. Similarly, let $S_{m}^i(t), E_{m}^i(t), I_{m}^i(t)$ denote the number of susceptible, exposed, and infected mosquitoes, and $L_{s}^i(t), L_{I}^i(t)$ denote the number of susceptible and infected larvae in patch $i$ at time $t$, where $i = 1, \dots , n.$ The parameters along with their biological meanings are listed in Table \ref{Table1}.  All time dependent parameters and delays are $w$-periodic positive functions for some $w>0$. 
The schematic diagram and interactions among the subpopulations in patches are depicted in Figure \ref{fig:chikungunya_model_patch}.
\begin{figure}[ht]
\centering
\begin{tikzpicture}[scale=0.8,
compartment/.style={rectangle, draw, minimum width=1.6cm, minimum height=0.6cm, node distance=1.6cm, font=\tiny},
arrow/.style={-Stealth, thin},
crosspatch/.style={arrow, dashed, color=blue},
curved/.style={arrow, bend right=45},
every node/.style={font=\tiny}
]
\node[compartment, fill=red!20] at (0,4) (Sh) {$S_h^i$};
\node[compartment, fill=red!20] at (4,7) (Ih) {$I_h^i$};
\node[compartment, fill=red!20] at (8,4) (Ah) {$A_h^i$};
\node[compartment, fill=red!20] at (4,4) (Eh) {$E_h^i$};
\node[compartment, fill=red!20] at (8,7) (Rh) {$R_h^i$};
\node[compartment, fill=blue!20] at (-2,0) (Sm) {$S_m^i$};
\node[compartment, fill=blue!20] at (4,0) (Em) {$E_m^i$};
\node[compartment, fill=blue!20] at (8,0) (Im) {$I_m^i$};
\node[compartment, fill=green!20] at (2,-3) (Ls) {$L_s^i$};
\node[compartment, fill=green!20] at (6,-3) (Li) {$L_I^i$};
\draw[arrow] (Eh) -- node[right] {$\tau_h$} (Ih);
\draw[arrow] (Eh) -- node[above] {$\tau_h$} (Ah);
\draw[arrow] (Ah) -- node[right] {$\eta_h$} (Rh);
\draw[arrow] (Sh) -- node[above] {$b_m^i(t)\beta_{mh}I_m^i$} (Eh);
\draw[arrow] (Sm) -- node[above] {$\tiny b_m^i(t)\beta_{hm}(I_h^i + A_h^i)$} (Em);
\draw[arrow] (Em) -- node[above] {$\tau^i(t)$} (Im);
\draw[dashed] (Ih) -- (Sm);
\draw[dashed] (Ah) -- (Sm);
\draw[dashed] (Im) -- (Sh);
\draw[curved] (Sm) to node[left] {$o^i_2(t)$} (Ls);
\draw[arrow] (Ls) to node[right] {$\tau_l^i(t)$} (Sm);
\draw[curved] (Im) to node[right] {$o_4^i(t)$} (Li);
\draw[arrow] (Im) to node[right, below] {$o_3^i(t)$} (Ls);
\draw[arrow] (Li) to node[right] {$\tau_l^i(t)$} (Im);
\draw[arrow] (Em) to node[left] {$o_1^i(t)$} (Ls);
\draw[arrow] (Ih) to node[above] {$\eta_h$} (Rh);
\draw[crosspatch, <-] (Sh) to ++(-2,1) node[left] {$R^i + \sum_{j \neq i} m_{ji} S_h^j$};
\draw[crosspatch, <-] (Ah) to ++(2,1) node[right] {$\sum_{j\neq i} m_{ji} A_h^j$};
\draw[crosspatch, ->] (Sh) to ++(-2,0) node[left] {$\sum_{j\neq i} m_{ij} $};
\draw[crosspatch, ->] (Ah) to ++(2,0) node[right] {$\sum_{j\neq i} m_{ij}A_h^i $};
\draw[crosspatch, ->] (Rh) to ++(2,0) node[right] {$\sum_{j\neq i} m_{ij}R^i_h $};
\draw[crosspatch, <-] (Rh) to ++(2,1) node[right] {$\sum_{j\neq i} m_{ji} R_h^j$};
\draw[crosspatch, <-] (Eh) to ++(2,1) node[right] {$\sum_{j\neq i} m_{ji} E_h^j$};
\draw[crosspatch, ->] (Eh) to ++(2,-1) node[right] {$\sum_{j\neq i} m_{ij}E^i_h $};  
\draw[arrow] (Sh) -- ++(-1,-1) node[left] {$\mu_h$};
\draw[arrow] (Eh) -- ++(-1,-1) node[left] {$\mu_h$};
\draw[arrow] (Ih) -- ++(0,1) node[right] {$\mu_h$};
\draw[arrow] (Ah) -- ++(1,-1) node[right] {$\mu_h$};
\draw[arrow] (Rh) -- ++(-1,1) node[left] {$\mu_h$};
\draw[arrow] (Sm) -- ++(-1,-1) node[left] {$\mu_m$};
\draw[arrow] (Em) -- ++(0,-1) node[below] {$\mu_m$};
\draw[arrow] (Im) -- ++(1,-1) node[right] {$\mu_m$};
\draw[arrow] (Ls) -- ++(-1,-1) node[left] {$\mu_l$};
\draw[arrow] (Li) -- ++(1,-1) node[right] {$\mu_l$};
\end{tikzpicture}
\caption{Schematic diagram of  the multi-patch model (\ref{0}). Arrows represent transitions between compartments, with dashed arrows representing interactions and blue dashed arrows indicating cross-patch flows and the recruitment and death rates for \( S_h^i, A_h^i \). Curved arrows depict larvae maturation to susceptible or infected mosquitoes represented by the delay \( \tau_l^i(t) \), and larvae birth from susceptible or infected mosquitoes.}
\label{fig:chikungunya_model_patch}
\end{figure}
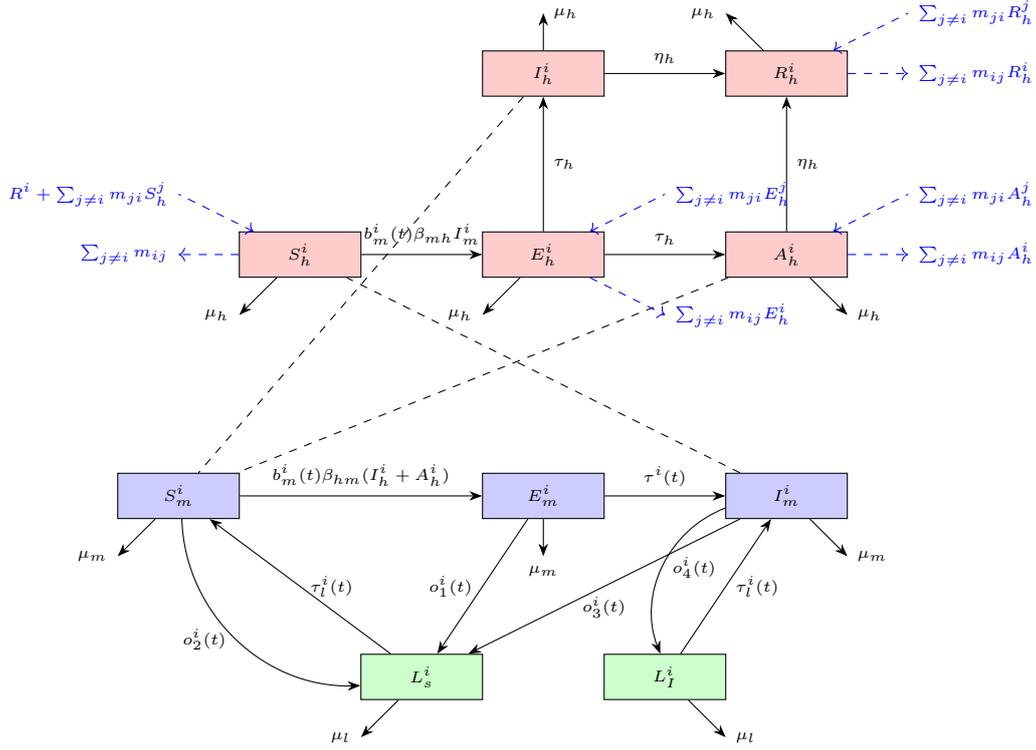
 
 {\small
\begin{table}[h!]
\centering
\caption{Parameters used in the model}
\begin{tabular}{|c|c|}
\hline
$R^i$ & Human population recruitment rate in patch $i$\\
$\mu_{h}$ & Human natural death rate \\
$\mu^i_{m}(t)$ & Mosquito natural death rate in patch $i$\\
$b_{m}^i(t)$ & Per capita mosquito biting rate in patch $i$\\
$\beta_{mh}$ & Human-mosquito transmission probability \\
$\beta_{hm}$ & Mosquito-human transmission probability \\
$\tau^i(t)$ & The extrinsic incubation period in mosquitoes in patch $i$\\
    $\tau^i_{l}(t)$ & The mean development stage time in larvae in patch $i$\\
$\tau_{h}$ & The intrinsic incubation period in humans\\
$\mu_{b}^i(t)$ & Larvae birth rate in patch $i$\\
$a$ & Proportion of asymptomatic infectious humans \\
$p$ & Vertical transmission probability \\
$K^i(t)$ & Carrying capacity for mosquitoes in patch $i$\\
$\mu_{l}^i(t)$ & Larvae death rate in patch $i$\\
$m_{ij}$ & Traveling rate from patch $i$ to patch $j$
\\
\hline
\end{tabular}\label{Table1}
\end{table}
}
\clearpage

Let $\tau_h, \ \tau^i(t),\ \tau_l^i(t)$ represent the intrinsic incubation period, extrinsic incubation period and the larvae development period, respectively.  Based on the flowchart in Figure \ref{fig:chikungunya_model_patch}, the meta-population model is consisted of the following coupled subsystems:
{\small
\begin{equation}
\begin{aligned}[t]
\frac{dS_{h}^{i}}{dt} &= R^{i} - b_{m}^{i}(t)\beta_{mh} S_{h}^{i}(t) I_{m}^{i}(t) - \mu_{h} S_{h}^{i}(t) + \sum_{j \neq i} m_{ji} S_{h}^{j}(t) - \sum_{j \neq i} m_{ij} S_{h}^{i}(t) \\
\frac{dE_{h}^{i}}{dt} &= b_{m}^{i}(t) \beta_{mh} S_{h}^{i}(t) I_{m}^{i}(t) - b_{m}^{i}(t - \tau_h) \beta_{mh}(t - \tau_h) S_{h}^{i}(t - \tau_h) I_{m}^{i}(t - \tau_h) e^{-\mu_h \tau_h}  \\
& \quad - \mu_h E_{h}^{i}(t) + \sum_{{j \neq i}} m_{ji} E_{h}^{j}(t) - \sum_{{j \neq i}} m_{ij} E_{h}^{i}(t) \\
\frac{dI_{h}^{i}}{dt} &= (1-a)\beta_{mh} b_{m}^{i}(t - \tau_{h}) S_{h}^{i}(t - \tau_{h}) I_{m}^{i}(t - \tau_{h}) e^{-\mu_{h} \tau_{h}} - (\eta_h + \mu_{h}) I_{h}^{i}(t) \\
\frac{dA_{h}^{i}}{dt} &= a \beta_{mh} b_{m}^{i}(t - \tau_{h}) S_{h}^{i}(t - \tau_{h}) I_{m}^{i}(t - \tau_{h}) e^{-\mu_{h} \tau_{h}} - (\eta_{h} + \mu_{h}) A_{h}^{i}(t) + \sum_{j \neq i} m_{ji} A_{h}^{j}(t) - \sum_{j \neq i} m_{ij} A_{h}^{i}(t) \\
\frac{dR_{h}^{i}}{dt} &= \eta_h (I_{h}^{i}(t) + A_{h}^{i}(t)) - \mu_h R_{h}^{i}(t) + \sum_{j \neq i} m_{ji} R_{h}^{j}(t) - \sum_{j \neq i} m_{ij} R_{h}^{i}(t) \\
\frac{dL_{I}^{i}}{dt} &= \mu_{b}(t)\left(1-\frac{L_{s}^{i}(t)+L_{I}^{i}(t)}{K^{i}(t)}\right)p I_{m}^{i}(t) - \mu^{i}_{l}(t) L_{I}^{i}(t) \\
\frac{dL_{s}^{i}}{dt} &= \mu_{b}(t)\left(1-\frac{L_{s}^{i}(t)+L_{I}^{i}(t)}{K^{i}(t)}\right)\left(S_{m}^{i}(t) + E_{m}^{i}(t) + (1-p) I_{m}^{i}(t)\right) - \mu^{i}_{l}(t) L_{s}^{i}(t) \\
\frac{dS_{m}^{i}}{dt} &= (1-\dot{\tau}^{i}_{l}(t))\mu_{b}(t-\tau^{i}_{l}(t))\left(1-\frac{L_{s}^{i}(t-\tau^{i}_{l}(t))+L_{I}^{i}(t-\tau^{i}_{l}(t))}{K^{i}(t-\tau^{i}_{l}(t))}\right) \\
&\quad \times \left(S_{m}^{i}(t-\tau^{i}_{l}(t)) + E_{m}^{i}(t-\tau^{i}_{l}(t)) + (1-p) I_{m}^{i}(t-\tau^{i}_{l}(t))\right)e^{-\int_{t-\tau_{l}(t)}^{t} \mu^{i}_{l}(s) \, ds}  \\
&\quad  - b_{m}^{i}(t) \beta_{hm} S_{m}^{i}(t) (I_{h}^{i}(t) + A_{h}^{i}(t)) - \mu_{m}^{i}(t) S_{m}^{i}(t) \\
\frac{dE_{m}^{i}}{dt} &= b_{m}^{i}(t) \beta_{hm} S_{m}^{i}(t) (I_{h}^{i}(t) + A_{h}^{i}(t)) - \mu_{m}^{i}(t) E_{m}^{i}(t)  \\
&\quad - (1-\dot{\tau}^{i}(t)) \beta_{hm} b^{i}(t-\tau^{i}(t))(I_{h}^{i}(t-\tau^{i}(t)) + A_{h}^{i}(t-\tau^{i}(t))) S_{m}^{i}(t-\tau^{i}(t)) e^{-\int_{t-\tau^{i}(t)}^{t} \mu_{m}^{i}(s) \, ds} \\
\frac{dI_{m}^{i}}{dt} &= \left(1-\dot{\tau}^{i}_{l}(t)\right)\mu_{b}(t-\tau^{i}_{l}(t))\left(1-\frac{L_{I}^{i}(t-\tau^{i}_{l}(t)) + L_{s}^{i}(t-\tau^{i}_{l}(t))}{K^{i}(t-\tau^{i}_{l}(t))}\right)p I_{m}^{i}(t-\tau^{i}_{l}(t)) e^{-\int_{t-\tau^{i}_{l}(t)}^{t} \mu^{i}_{l}(s) \, ds}  \\
&\quad + \left(1-\dot{\tau}^{i}(t)\right)\beta_{hm} b_{m}^{i}(t-\tau^{i}(t))(I_{h}^{i}(t-\tau^{i}(t)) + A_{h}^{i}(t-\tau^{i}(t))) S_{m}^{i}(t-\tau^{i}(t)) e^{-\int_{t-\tau^{i}(t)}^{t} \mu^{i}_{m}(s) \, ds} \\
&\quad - \mu_{m}^{i}(t) I_{m}^{i}(t)
\end{aligned}\label{0}
\end{equation}
}
with initial value conditions
\begin{equation}
\begin{aligned}
& S_{h}^{i}(s)= \varphi^{i}_{1}(s), \;\; E_{h}^{i}(0)= \varphi^{i}_{2}(0), \;\;  I_{h}^{i}(s)= \varphi^{i}_{3}(s), \;\;  A_{h}^{i}(s)= \varphi^{i}_{4}(s), \;\; R_{h}^{i}(0)= \varphi^{i}_{5}(0), \\
& L_{I}^{i}(s)= \varphi^{i}_{6}(s), \;\; L_{s}^{i}(s)= \varphi^{i}_{7}(s), \;\; S_{m}^{i}(s)= \varphi^{i}_{8}(s), \;\; E_{m}^{i}(s)= \varphi^{i}_{9}(s), \;\; I_{m}^{i}(s)= \varphi^{i}_{10}(s),
\end{aligned}\label{0-1}
\end{equation}
for $s \in [-\widetilde{\tau}, 0],$ in which 
$$
\widetilde{\tau} := \max\{\tau_{0}^{i}\}_{i=1,\dots,n} \text{ where } \tau_{0}^{i} := \max \{ \sup_{0 \leq t \leq w} \tau^{i}(t), \tau^{i}_{h}, \sup_{0 \leq t \leq w} \tau^{i}_{l}(t) \},
$$ 
and $\tau^{i}(t), \tau_{l}^{i}(t)$ are strictly positive functions. 
Since the variables $E_h^i$ and $R_h^i$ are only present in their own equations, we only need to focus on a subsystem, that is, system (\ref{0})--(\ref{0-1}) without the $E_h^i$ and $R_h^i$ equations.  

To simplify the notation in Figure \ref{fig:chikungunya_model_patch} we have defined  
\begin{align*}
o_1^i(t),\ o_2^i(t) &= \mu_{b}(t)\left(1-\frac{L_{s}^{i}(t)+L_{I}^{i}(t)}{K^{i}(t)}\right),\\
o_3^i(t) &=(1-p)\mu_{b}(t)\left(1-\frac{L_{s}^{i}(t)+L_{I}^{i}(t)}{K^{i}(t)}\right),\\
o_4^i(t) &=p\ \mu_{b}(t)\left(1-\frac{L_{s}^{i}(t)+L_{I}^{i}(t)}{K^{i}(t)}\right).
\end{align*}

\section{Mathematical Analysis}
\setcounter{equation}{0}\setcounter{figure}{0}

\subsection{Well-posedness}
Now we will follow \cite{xiaoMain} to define the basic reproduction numbers for the mosquito population and human population, respectively.

Define a Banach space $\Omega = C([-\widetilde{\tau}, 0], \mathbb{R}^{10n})$ for $t \geq 0$ and the space
\begin{align}\label{assump}
 \Omega(t) &= \left\{ \varphi \in C([-\widetilde{\tau}, 0], \mathbb{R}_{+}^{10n}) \ \middle| \ \varphi_{6}^{i}(s) + \varphi_{7}^{i}(s) \leq K^{i}(t+s), \ \forall s \in [-\widetilde{\tau}, 0], \right. \nonumber \\
 & \qquad \left. \varphi_{9}^{i}(0) = \int_{-\tau^{i}(0)}^{0} b_{m}^{i}(x) \beta_{hm} \varphi^{i}_{8}(x) (\varphi^{i}_{3}(x) + \varphi^{i}_{4}(x)) \, dx \right\}.
\end{align}

Assume that 
\begin{equation}\label{assumption}
K^{i}(t) \leq M^{i}, \ \forall t \in [-\widetilde{\tau}, \infty),
\end{equation}
for some $M^{i} > 0$, $i=1, \dots, n$.

\begin{lemma}\label{lemazimm}
Given $\varphi \in \Omega(0),$ system (\ref{0})--(\ref{0-1}) has a unique solution $$x(t,\varphi)=(x^{1}_{1}(t,\varphi),...,x^{1}_{10}(t,\varphi),...,x^{n}_{1}(t,\varphi),...,x^{n}_{10}(t,\varphi)),$$ defined on $[0,\infty)$, which is bounded and nonnegative, where $x(0,\varphi)=\varphi.$
\end{lemma}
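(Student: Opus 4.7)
The plan is to proceed in four steps: local existence via standard functional differential equation (FDE) theory, positivity via boundary analysis, \emph{a priori} bounds compartment by compartment, and a continuation argument. First, writing the right-hand side of (\ref{0}) as $F(t,\varphi)$, every component of $F$ is a sum of products of continuous $w$-periodic coefficients (hence bounded on each bounded time interval), point evaluations of $\varphi$ at arguments in $[-\widetilde\tau,0]$, and smooth functions of the state (including the ratios with $K^i(t)$, whose denominators are uniformly positive by (\ref{assumption}) and an implicit positive lower bound on $K^i$). Consequently $F$ is continuous in $t$ and locally Lipschitz in $\varphi$ on bounded subsets of $\Omega$, so the Hale--Verduyn Lunel existence theorem for FDEs with variable delay yields a unique solution $x(t,\varphi)$ on a maximal interval $[0,T_{\max})$ with $x_0=\varphi$. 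The compatibility identity for $\varphi_{9}^{i}(0)$ built into $\Omega(0)$ guarantees that the DDE form of the $E_m^i$ equation is consistent with the initial data at $t=0$.

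Second, I would verify nonnegativity on $[0,T_{\max})$ by a step-by-step argument on the intervals $[k\widetilde\tau,(k+1)\widetilde\tau]$. For each compartment $x_{k}^{i}$, every loss term on the right-hand side of its equation either contains $x_{k}^{i}(t)$ itself as a factor (mortality, biting, out-migration, transition) or is the ``delayed gain'' of an earlier positive quantity multiplied by a survival probability of the form $e^{-\int \mu}$ and by $(1-\dot\tau)$, which is positive because $\tau^{i}$, $\tau_{l}^{i}$ are admissible delays. Hence whenever $x_{k}^{i}(t)=0$ and the remaining coordinates together with the nonnegative history are nonnegative, $\dot x_{k}^{i}(t)\ge 0$, which rules out negative excursions; the induction on delay-length intervals propagates nonnegativity to all of $[0,T_{\max})$.

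Third, for boundedness I would treat the three subsystems separately. Summing the two larvae equations,
\[
\frac{d(L_s^i+L_I^i)}{dt}=\mu_{b}(t)\!\left(1-\frac{L_s^i+L_I^i}{K^i(t)}\right)\!(S_m^i+E_m^i+I_m^i)-\mu_l^i(t)(L_s^i+L_I^i),
\]
so on the boundary $L_s^i+L_I^i=K^i(t)$ the recruitment term vanishes and the sum is strictly decreasing; a comparison argument with the moving fence $K^i(t)$ together with (\ref{assumption}) keeps $L_s^i+L_I^i\le K^i(t)\le M^i$ on $[0,T_{\max})$. With larvae bounded, the maturation input into $S_m^i+E_m^i+I_m^i$ is bounded, and combining the three mosquito equations with the common death rate $\mu_m^i$ gives an \emph{a priori} bound on the total mosquito population via a linear differential inequality. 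For humans, adding the five equations for patch $i$ and summing over $i$ cancels the migration terms, leaving $\tfrac{d}{dt}\sum_i N_h^i=\sum_i R^i-\mu_h\sum_i N_h^i$, a scalar linear ODE with uniform bounds. Boundedness in every compartment on $[0,T_{\max})$ combined with the continuation theorem for FDEs forces $T_{\max}=\infty$.

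The main obstacle is the invariance of the larvae constraint $L_s^i+L_I^i\le K^i(t)$, because the upper fence is itself time-dependent: one must check that a trajectory touching the moving boundary cannot escape through it. Making this fully rigorous requires handling tangential contact points carefully and, if $K^i$ oscillates sharply, either an additional regularity condition of the form $\dot K^i(t)\ge -\mu_l^i(t)K^i(t)$ at boundary points or, alternatively, an integrating-factor/comparison argument against the linear equation $\dot u=-\mu_l^i(t)u$ with $u(0)=K^i(0)$. Once this step is secured, the remaining positivity and boundedness claims reduce to routine estimates and a direct application of the standard continuation result.
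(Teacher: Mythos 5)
Your local-existence step, your treatment of the larvae and human totals, and the continuation argument all follow essentially the same route as the paper (for the adult mosquitoes the paper uses a $\limsup$/contradiction argument rather than your summed linear differential inequality, but both are serviceable). The genuine gap is in your nonnegativity step. You assert that for \emph{every} compartment, whenever $x_{k}^{i}(t)=0$ with nonnegative history one has $\dot x_{k}^{i}(t)\geq 0$, because each loss term either carries the factor $x_{k}^{i}(t)$ or is a delayed gain. This is false for the exposed-mosquito compartment $E_{m}^{i}$: its equation contains the term
\[
-\,(1-\dot{\tau}^{i}(t))\,\beta_{hm}\,b^{i}(t-\tau^{i}(t))\bigl(I_{h}^{i}(t-\tau^{i}(t))+A_{h}^{i}(t-\tau^{i}(t))\bigr)S_{m}^{i}(t-\tau^{i}(t))\,e^{-\int_{t-\tau^{i}(t)}^{t}\mu_{m}^{i}(s)\,ds},
\]
which is a delayed \emph{outflow} (the cohort leaving the exposed class after the incubation period): it carries a minus sign, does not vanish when $E_{m}^{i}(t)=0$, and is nonpositive whenever the other coordinates are nonnegative. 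So the boundary argument does not close for $E_{m}^{i}$, and no induction on intervals of length $\widetilde\tau$ repairs it. (The same structural issue appears in the $E_{h}^{i}$ equation.) The paper resolves this exactly where you only gesture: the compatibility condition $\varphi_{9}^{i}(0)=\int_{-\tau^{i}(0)}^{0}b_{m}^{i}(x)\beta_{hm}\varphi_{8}^{i}(x)(\varphi_{3}^{i}(x)+\varphi_{4}^{i}(x))\,dx$ built into $\Omega(0)$ is not merely a consistency requirement; it is what allows one to show that $e^{\int_{0}^{t}\mu_{m}}E_{m}^{i}(t)$ dominates the nonnegative integral $h_{i}(t)=\int_{t-\tau^{i}(t)}^{t}b_{m}^{i}(s)\beta_{hm}S_{m}^{i}(s)(I_{h}^{i}(s)+A_{h}^{i}(s))e^{-\int_{s}^{t}\mu_{m}^{i}}\,ds$, i.e.\ that $E_{m}^{i}$ is (up to survival weighting) the accumulated inflow over the last incubation window, hence nonnegative. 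You need this integral-representation argument, or an equivalent, for $E_{m}^{i}$; without it the positivity claim is unproved.

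Two smaller remarks. First, your concern about the moving fence $L_{s}^{i}+L_{I}^{i}\leq K^{i}(t)$ is legitimate as a question about invariance of $\Omega(t)$, but it is not needed for the boundedness asserted in the lemma: once $L_{s}^{i}+L_{I}^{i}>K^{i}(t)$ the recruitment factor $\bigl(1-\frac{L_{s}^{i}+L_{I}^{i}}{K^{i}(t)}\bigr)$ becomes negative, so the sum decreases and stays below $\max\{L_{s}^{i}(0)+L_{I}^{i}(0),\sup_{t}K^{i}(t)\}\leq M^{i}$. Second, your phrase ``the maturation input into $S_{m}^{i}+E_{m}^{i}+I_{m}^{i}$ is bounded because the larvae are bounded'' is not literally what the model says: the recruitment term is proportional to the \emph{past adult population} $S_{m}^{i}+E_{m}^{i}+(1-p)I_{m}^{i}$ evaluated at $t-\tau_{l}^{i}(t)$, so what you actually obtain is a linear delay inequality of the form $\dot M^{i}(t)\leq C\,M^{i}(t-\tau_{l}^{i}(t))-\mu_{m}^{i}(t)M^{i}(t)$; this suffices for no finite-time blow-up and hence for $T_{\max}=\infty$, but a genuine uniform-in-time bound requires the extra observation that the integrated larval birth flux over any bounded window is controlled by $K^{i}$.
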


\begin{proof}
 Let $\varphi \in \Omega.$ For fixed $t\in \mathbb{R}_{+},$ define a family of continuous operators  $f^{i}_{t}:\Omega \rightarrow \mathbb{R}^{10n}_{+}$ by 
$$
f^{i}_{t}(\varphi)=[{f_{t}}^{i}_{1}\varphi),{f_{t}}^{i}_{2}(\varphi),{f_{t}}^{i}_{3}(\varphi),{f_{t}}^{i}_{4}(\varphi),{f_{t}}^{i}_{5}(\varphi),{f_{t}}^{i}_{6}(\varphi),{f_{t}}^{i}_{7}(\varphi),{f_{t}}^{i}_{8}(\varphi),{f_{t}}^{i}_{9}(\varphi),{f_{t}}^{i}_{10}(\varphi)]^{T},
$$ 
\begin{align*}
\fontsize{5}{7}\selectfont
f_{t}(\varphi)=&\begin{bmatrix}
R^{i} - b_{m}^{i}(t)\beta_{mh} \varphi_{1}^{i}(t) \varphi_{8}^{i}(0) -\mu_{h}^{i} \varphi_{1}^{i}(0) + \sum_{j \neq i} m_{ji}\varphi_{1}^{j}(0)-\sum_{j \neq i} m_{ij}\varphi_{1}^{i}(0) \\
{f^{i}_{t}}_{2}(\varphi)\\
(1-a)\beta_{mh}b_{m}^{i}(t-\tau_{h}) \varphi_{1}^{i}(-\tau_{h}) \varphi_{8}^{i}(-\tau_{h})e^{-\mu_{h}\tau_{h}}-(\eta_h + \mu_{h})\varphi_{5}^{i}(0) \\
a\beta_{mh}b_{m}^{i}(t-\tau_{h})\varphi_{1}^{i}(-\tau_{h}) \varphi_{8}^{i}(-\tau_{h})e^{-\mu_{h}\tau_{h}} - (\eta_{h} + \mu_{h})\varphi_{3}^{i}(0)+ \sum_{j \neq i} m_{ji}\varphi_{3}^{j}(0)-\sum_{j \neq i} m_{ij}\varphi_{3}^{i}(0) \\
\eta_{h} (\varphi_{3}^{j}(0) + \varphi_{4}^{j}(0)) - \mu_{h} \varphi_{5}^{j}(0)+ \sum_{j \neq i} m_{ji}\varphi_{5}^{j}(0)-\sum_{j \neq i} m_{ij}\varphi_{5}^{j}(0)
\\
\mu_{b}(t)\left(1-\frac{\varphi_{6}^{i}(0)+\varphi_{7}^{i}(0)}{K^{i}(t)}\right)p\varphi_{10}^{i}(0)-\mu^{i}_{l}(t)\varphi_{6}^{i}(0) \\
\mu_{b}(t)\left(1-\frac{\varphi_{6}^{i}(0)+\varphi_{7}^{i}(0)}{K^{i}(t)}\right)((1-p)\varphi_{10}^{i}(0)+\varphi_{9}^{i}(0)+\varphi_{8}^{i}(0))-\mu^{i}_{l}(t)\varphi_{7}^{i}(0)\\
{f_{t}}^{i}_{8}(\varphi)\\
{f_{t}}^{i}_{9}(\varphi)\\
{f_{t}}^{i}_{10}(\varphi)
\end{bmatrix}_{i=1,...,n}
\end{align*}
\begin{equation}
\begin{aligned}
{f_{t}}^{i}_{2}(\varphi) =&b_{m}^{i}(t)\beta_{mh}\varphi_{1}^{i}(0)\varphi_{10}^{i}(0)-b_{m}^{i}(t-\tau_{h})\beta_{mh}(t-\tau_{h}) \varphi_{1}^{i}(-\tau_{h}) \varphi_{10}^{i}(-\tau_{h})e^{-\mu_{h}\tau_{h}} -\mu_{h}\varphi_{2}^{i}(0) \\
&+\sum_{j \neq i} m_{ji}\varphi_{2}^{j}(0) -\sum_{j \neq i} m_{ij}\varphi_{2}^{i}(0) \\
{f_{t}}^{i}_{8}(\varphi) =&(1-\dot{\tau^{i}_{l}}(t))\mu_{b}(t-\tau^{i}_{l}(t))\left(1-\frac{\varphi_{6}^{i}(-\tau^{i}_{l}(t))+\varphi_{7}^{i}(-\tau^{i}_{l}(t))}{K^{i}(t-\tau^{i}_{l}(t))}\right)\times\\
&\left(\varphi_{8}^{i}(-\tau^{i}_{l}(t)) + \varphi_{9}^{i}(-\tau^{i}_{l}(t)) + (1-p)\varphi_{10}^{i}(-\tau^{i}_{l}(t))\right)e^{-\int_{t-\tau_{l}(t)}^{t} \mu^{i}_{l}(s) \, ds } \\
&- b_{m}^{i}(t)\beta_{hm}\varphi_{8}^{i}(0) (\varphi_{2}^{i}(0)+\varphi_{3}^{i}(0))  -\mu_{m}^{i}(t)\varphi_{8}^{i}(0), \\
{f_{t}}^{i}_{9}(\varphi)=&b^{i}_{m}(t)\beta_{hm}\varphi_{6}^{i}(0) (\varphi_{4}^{i}(0)+\varphi_{3}^{i}(0)) -\mu_{m}^{i}(t)\varphi_{7}^{i}(0)\\
&-(1-\dot{\tau}^{i}(t))\beta_{hm}b^{i}(t-\tau^{i}(t))(\varphi_{4}^{i}(-\tau^{i}(t))+\varphi_{3}^{i}(-\tau^{i}(t)))\varphi_{8}^{i}(-\tau^{i}(t))e^{-\int_{t-\tau^{i}(t)}^{t} \mu_{l}^{i}(s) \, ds },\\
{f_{t}}^{i}_{10}(\varphi) &= \left(1-\dot{\tau^{i}_{l}}(t)\right)\mu_{b}(t-\tau^{i}_{l}(t))\left(1-\frac{\varphi_{7}^{i}(-\tau^{i}_{l}(t))+\varphi_{6}^{i}(-\tau^{i}_{l}(t))}{K^{i}(t-\tau^{i}_{l}(t))}\right)p\varphi_{10}^{i}(-\tau^{i}_{l}(t))e^{-\int_{t-\tau^{i}_{l}(t)}^{t} \mu^{i}_{l}(s) \, ds } \\
&+ \left(1-\dot{\tau}^{i}(t)\right)\beta_{hm}b_{m}^{i}(t-\tau^{i}(t))(\varphi_{4}^{i}(-\tau^{i}(t))+\varphi_{3}^{i}(-\tau^{i}(t)))\varphi_{8}^{i}(-\tau^{i}(t))e^{-\int_{t-\tau^{i}(t)}^{t} \mu^{i}_{l}(s) \, ds } \\
& -\mu_{m}^{i}(t) \varphi_{10}^{i}(0)
\end{aligned}
\end{equation}
Thus, the Fréchet derivative of $f_{t}$ at a fixed $\varphi \in \Omega$, $Df_{t}(\varphi)$ is clearly a continuous linear operator. So $f_{t}\in C^{1}(\Omega)$ and hence $f_{t}$ is locally Lipschitz in $\Omega.$ 
It follows that the continuous mapping given by $f(t,\varphi)=f_{t}(\varphi)$ is locally Lipschitz on $\Omega(t)$ for fixed $t\in \mathbb{R}_{+}$. Hence, given $\varphi \in \Omega (0),$ system (\ref{0})--(\ref{0-1}) has a unique solution $x(t,\varphi)$ on $[0,\epsilon_{\varphi})$ for some $\epsilon_{\varphi} \in (0,\infty],$ due to Theorem 2.2.3 from \cite{hale1993introduction}.

Fix $\varphi\in\Omega(0)$ and let $x(t)=x(t,\varphi)$. Since 
$$
\frac{d(x_6^i(t)+x_7^i(t))}{dt} = \mu_b(t)\left( 1 - \frac{x_6^i(t)+x_7^i(t)}{K^i(t)}\right)(x_8^i(t)+x_9^i(t)+x_{10}^{i}(t)) - \mu_l^i(t)(x_6^i(t)+x_7^i(t)),
$$  
we have that $x_6^i(t)+x_7^i(t) \leq K^i(t) \leq \ M^i,\ \forall t \geq 0$, due to assumptions (\ref{assump}) and (\ref{assumption}). In particular, $x^{i}_{k}(t)$ is bounded in $[0,\epsilon_{\varphi}),\ k=6,7$. 
Now let $y(t)=\sum_{i=1}^{n}x^{i}_{1}(t)+x^{i}_{2}(t)+x^{i}_{3}(t)+x_{4}^{i}(t)+x_{5}^{i}(t)$, and so $\frac{dy}{dt}(t)=(\sum_{i=1}^{n}R^{i})-\mu_{h}y(t),$ and thus $0<\limsup_{t\to \epsilon_{\varphi}}y(t)< \infty.$ 
Consequently, $x^{i}_{k}(t)$ is bounded in $[0,\epsilon_{\varphi})$, $\forall \ 1\leq i\leq n$, and $k=1,2,3,4,5.$

Suppose that $\limsup_{t \to \epsilon_{\varphi}}x_{10}^{i}(t)=\infty$. By continuity, $\lim_{t \to \epsilon_{\varphi}}x_{10}^{i}(t-\tau_{l}^{i}(t))<\infty$ and $\lim_{t \to \epsilon_{\varphi}}x_{8}^{i}(t-\tau_{l}^{i}(t))<\infty$. Therefore, $\liminf_{t \to \epsilon_{\varphi}}\frac{dx^{i}_{10}}{dt}(t)=-\infty,\ \exists \epsilon_{\varphi}>0,$ from system (\ref{0})--(\ref{0-1}), which is a contradiction as this would imply that there exists $0<\epsilon<\epsilon_{\varphi}$ such that $x^{i}_{k}(t)$ is strictly decreasing on the interval $(\epsilon, \epsilon_{\varphi})$. Hence, $x^{i}_{10}(t)$ is bounded. Analogously, $x^{i}_{k}(t)$ is bounded, for $k=8, 9.$ From Theorem 2.3.1 of \cite{hale1993introduction}, $\epsilon_{\varphi}=\infty.$ 

Lastly, as $\varphi\geq 0,$ suppose that there exists a sequence $\{ t_{k} \}_{k=1}^{K}$, $t_{k}\in\mathbb{R}_{+}$, where $t_k$ is the $k^{th}$ least positive real number where  $x_{m}^{i}(t_{k})=0$ for the first time, for some $m=1,\dots ,10,\ i=1,\dots,n.$ Thus, $1\leq K\leq 10 n$, and by analyzing system (\ref{0})--(\ref{0-1}), we have that $\frac{d x_{m}^{i}}{d t}(t_{k})\geq 0,\ \forall 1\leq m \leq 10,\ m \neq 9,\ 1\leq i\leq n.$ This guarantees that $x_{m}^{i}$ will not assume negative values. If $m=9,$ then we have \begin{align*}\frac{dx_{9}^{i}}{dt} &=  b_{m}^{i}(t) \beta_{hm} x_{8}^{i}(t) (x_{3}^{i}(t) + x_{4}^{i}(t)) - \mu_{m}^{i}(t) x_{9}^{i}(t)\\&
\quad - (1-\dot{\tau}^{i}(t)) \beta_{hm} b^{i}(t-\tau^{i}(t))(x_{3}^{i}(t-\tau^{i}(t)) + x_{4}^{i}(t-\tau^{i}(t))) x_{8}^{i}(t-\tau^{i}(t)) e^{-\int_{t-\tau^{i}(t)}^{t} \mu_{m}^{i}(s) \ ds}\ ,
\end{align*}

$\forall 1\leq i\leq n.$ Let
$$h_{i}(t):=\int_{t-\tau^{i}(t)}^{t} b_{m}^{i}(s) \beta_{hm} x^{i}_{8}(s) (x^{i}_{3}(s) + x^{i}_{4}(s))e^{-\int_{s}^{t} \mu_{l}^{i}(x) \, dx }ds \geq 0,$$ and consider $$ \tilde{x}_{9}^{i}(t) : = e^{\int_{0}^{t} \mu_m(s)ds} x_{9}^{i}(t), \ \forall 1 \leq i \leq n. $$Therefore, $$\frac{d\tilde{x}_{9}^{i}}{dt} = \frac{dh_{i}}{dt} e^{\int_{0}^{t}\mu_m(s)ds} \geq \frac{dh_{i}}{dt},$$ what implies that $$\tilde{x}_{9}^{i}(t) \geq h_{i}(t)\geq 0,\forall t \geq 0, $$since $\tilde{x}_{9}^{i}(0)=x_{9}^{i}(0)= h_{i}(0),\ \forall 1 \leq i \leq n.$
Thus, we have that $\tilde{x}_{9}^{i}(t) \geq 0$, implying that $x_{9}^{i}(t) \geq 0,\ \forall t \in [0,\infty).$

Hence, $x(t,\varphi)$ is a nonnegative solution. Finally, we obtain that $x(t,\varphi)$ is a unique nonnegative solution defined on $[0,\infty)$.
\end{proof}

\subsection{Mosquito Population Dynamics}
Considering the disease-free state in system (\ref{0})--(\ref{0-1}) without the $E_h^i$ and $R_h^i$ equations, for $i=1,..,n$, we obtain the following system:
\begin{equation}
\begin{aligned}[t]
\frac{dS_{h}^{i}}{dt} =& R^{i} -\mu_{h} S_{h}^{i}(t) + \sum_{j \neq i} m_{ji}S_{h}^{j}(t)-\sum_{j \neq i} m_{ij}S_{h}^{i}(t)\\
\frac{dL_{s}^{i}}{dt} =& \mu_{b}(t)\left(1-\frac{L_{s}^{i}(t)}{K^{i}(t)}\right)S_{m}^{i}(t)-\mu^{i}_{l}(t)L_{s}^{i}(t) \\
\frac{dS_{m}^{i}}{dt} =& ((1-\dot{\tau}^{i}_{l}(t))\mu^{i}_{b}(t-\tau^{i}_{l}(t))\left(1-\frac{L_{s}^{i}(t-\tau^{i}_{l}(t))}{K^{i}(t-\tau^{i}_{l}(t))}\right)\\&\times\left(S_{m}^{i}(t-\tau^{i}_{l}(t))\right)e^{-\int_{t-\tau_{l}(t)}^{t} \mu_{l}^{i}(s) \, ds } - \mu_{m}^{i}(t)S_{m}^{i}(t).
\end{aligned}\label{pem}
\end{equation}
The disease-free equilibrium in \( \mathbb{R}^{3n} \) is given by
\[ 
({{\bar{S}_{h}}^{i}}, 0, 0)_{1 \leq i \leq n}, 
\]
where
\[ 
{{\bar{S}_{h}}^{i}} = \frac{R^{i} + \sum_{i \neq j} m_{ji}{{\bar{S}_{h}}^{j}}}{\mu_{h} + \sum_{i \neq j} m_{ij}}. 
\]

Now, since \( S^{i}_{h} \) only appear in their own equations for \( i = 1, \dots, n \), we can simplify this system to
\begin{equation}
\begin{aligned}[t]
\frac{dL_{s}^{i}}{dt} &= \mu^{i}_{b}(t)\left(1-\frac{L_{s}^{i}(t)}{K^{i}(t)}\right)S_{m}^{i}(t)-\mu^{i}_{l}(t)L_{s}^{i}(t) \\
\frac{dS_{m}^{i}}{dt} &=((1-\dot{\tau}^{i}_{l}(t))\mu^{i}_{b}(t-\tau^{i}_{l}(t))\left(1-\frac{L_{s}^{i}(t-\tau^{i}_{l}(t))}{K^{i}(t-\tau^{i}_{l}(t))}\right)\\&\times\left(S_{m}^{i}(t-\tau^{i}_{l}(t))\right)e^{-\int_{t-\tau_{l}(t)}^{t} \mu_{l}^{i}(s) \, ds } - \mu_{m}^{i}(t)S_{m}^{i}(t).
\end{aligned}\label{3}
\end{equation}
Then, linearizing (\ref{pem}) at the given equilibria, we have the system 
\begin{equation}
\begin{aligned}[t]
\frac{dS_{h}^{i}}{dt} &=-(\mu_{h}+\sum_{j \neq i} m_{ij} )S_{h}^{i}(t) + \sum_{j \neq i} m_{ji}S_{h}^{j}(t)\\
\frac{dL_{s}^{i}}{dt} &= \mu_{b}^{i}(t)S_{m}^{i}(t)-\mu_{l}^{i}(t)L_{s}^{i}(t) \\
\frac{dS_{m}^{i}}{dt} &=(1-\dot{\tau}^{i}_{l}(t))\mu^{i}_{b}(t-\tau^{i}_{l}(t))e^{-\int_{t-\tau_{l}(t)}^{t} \mu_{l}^{i}(s) \, ds }S_{m}^{i}(t-\tau^{i}_{l}(t)) - \mu_{m}^{i}(t)S_{m}^{i}(t),
\end{aligned}\label{4}
\end{equation}
which further simplifies to the following system:
\begin{equation}
\begin{aligned}[t]
\frac{dL_{s}^{i}}{dt} &= \mu_{b}^{i}(t)S_{m}^{i}(t)-\mu_{l}^{i}(t)L_{s}^{i}(t) \\
\frac{dS_{m}^{i}}{dt} &=(1-\dot{\tau^{i}_{l}}(t))\mu^{i}_{b}(t-\tau^{i}_{l}(t))e^{-\int_{t-\tau_{l}(t)}^{t} \mu_{l}^{i}(s) \, ds }S^{i}_{m}(t-\tau^{i}_{l}(t)) - \mu_{m}^{i}(t)S_{m}^{i}(t).
\end{aligned}\label{5}
\end{equation}
Define linear operators $\Bar{V}_{i}: \mathbb{R} \rightarrow \mathcal{L}(\mathbb{R}^{2}, \mathbb{R}^{2})$ by
$$\Bar{V}_{i}(t)=
\begin{bmatrix}
\mu_{l}^{i}(t) &  0\\
0 & \mu_{m}^{i}(t)\\
\end{bmatrix},
$$and 
$\Bar{F}_{i}: \mathbb{R} \rightarrow \mathcal{L}(C_{i}, \mathbb{R}^{2})$ by $$\Bar{F}_{i}(t)\begin{bmatrix}
\varphi^{i}_{1} \\
\varphi^{i}_{2} \\
\end{bmatrix}:=\begin{bmatrix}
\mu_{b}^{i}(t) \varphi^{i}_{2}(0) \\
(1-\dot{\tau}^{i}_{l}(t))\mu^{i}_{b}(t-\tau^{i}_{l}(t)) e^{-\int_{t-\tau^{i}_{l}(t)}^{t} \mu_{l}^{i}(s) \, ds } \varphi^{i}_{2}(-\tau^{i}_{l}(t)) \\
\end{bmatrix}
$$
for $C_{i}:=C([-\bar{\tau}^{i},0],\mathbb{R}^{2}),\ C_{i}^{+}:=C([-\bar{\tau}^{i},0],\mathbb{R}_{+}^{2})$ and $\bar{\tau}^{i}:=\max_{0 \leq t\leq w}\tau_{l}^{i}(t).$ 

Now we are in the position to define $\mathcal{R}_{m},$ the basic reproduction number for the mosquito population. Consider the system 
$$
\frac{dx}{dt}=-\Bar{V}_{i}(t)x.
$$ 
Let 
$$
\Bar{\Phi}_{i}(t,k):=\begin{bmatrix}
e^{-\int_{k}^{t} (\eta_{l}^{i}(t) + \mu_{l}^{i}(t))dt} &  0\\
0 & e^{-\int_{k}^{t}(\eta_{l}^{i}(t) + \mu_{l}^{i}(t))dt}\\
\end{bmatrix}, \;\; \forall t\geq k, 
$$
be its fundamental solution, where $\Bar{\Phi}(k,k)= I$.
Consequently, analogous to \cite{xiaoMain}, consider the following linear operator which, similar to the proof of Lemma \ref{R0}, is a compact operator, defined by $\Bar{L}_{i}:C_{w} \rightarrow C(\mathbb{R},\mathbb{R}^{2}),$ 
 $$\Bar{L}_{i}(\varphi)(t):= \int_{0}^{\infty}\Bar{\Phi}_{i}(t,t-s)\Bar{F}_{i}(t-s)\varphi(t-s+. )ds,$$ where $C_{w}$ is the set of $w$-periodic functions from $\mathbb{R}$ to $\mathbb{R}^{2}$. $\bar{L}_{i}$ can be proved to be compact, similarly to Lemma \ref{lema9}.

Therefore, the mosquito reproduction number is defined as the spectral radius of $\Bar{L}_{i}$ by 
\begin{equation}\label{Rm}
\mathcal{R}_{m}^{i}:=r(\Bar{L}_{i}).
\end{equation}
Hence, as $\Bar{L}_{i}$ is a compact operator, $\mathcal{R}^{i}_{m}<\infty$ is an eigenvalue of $\Bar{L}_{i},$ whose eigenvector lies in the positive cone $C_{w}^{+},$ by Krein-Rutman Theorem.

Consider the sets $\mathcal{I}^{i}_{1}:=  C([-\bar{\tau}^{i}_{l},0], \mathbb{R}^{2})$ and $ \mathcal{I}^{i}_{2}:= C([-\tau^{i}_{l}(0),0],\mathbb{R}^{2}),$ and their respective positive cones ${\mathcal{I}^{i}_{1}}^{+}:=   C([-\bar{\tau}^{i}_{l},0], \mathbb{R}_{+}^{2}), $ $ {\mathcal{I}^{i}_{2}}^{+}:= C([-\tau^{i}_{l}(0),0], \mathbb{R}_{+}^{2}).$ 



Let us state the following lemmas regarding the existence and uniqueness of the initial value problems of (\ref{3}), defined over $\mathcal{I}^{i}_{1},$ $\mathcal{I}^{i}_{2}$ respectively, whose proofs will be omitted, as they are analogous to the proofs of Lemmas \ref{lemazimm} and \ref{lema7}, respectively.

\begin{lemma}\label{lemma1}
For $i=1,...,n$, given $\mathcal{\phi} \in \mathcal{I}^{i}_{1},$ system (\ref{3}) has a unique nonnegative solution $\bar{{P}}^{i}(t):\mathcal{I}^{i}_{1} \rightarrow \mathcal{I}^{i}_{1}$ 
 with $\bar{{P}}^{i}(0)=\mathcal{\phi}$ for all $t \in [0, \infty)$, for i=1,...,n.
\end{lemma}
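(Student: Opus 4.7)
The plan is to follow the same pattern used in Lemma \ref{lemazimm}, since system (\ref{3}) is a $2n$-dimensional subsystem with the same type of periodic time-dependent delays and the same structural nonlinearities (a logistic-type larval factor and delayed mosquito maturation). First I would rewrite the right-hand side of (\ref{3}) as a family of continuous operators $\bar f_t^i : \mathcal{I}_1^i \to \mathbb{R}^2$ depending on $t \in \mathbb{R}_+$, with components
\[
\bar f_t^i(\phi) = \begin{bmatrix} \mu_b^i(t)\bigl(1-\tfrac{\phi_1(0)}{K^i(t)}\bigr)\phi_2(0) - \mu_l^i(t)\phi_1(0) \\[4pt] (1-\dot\tau_l^i(t))\mu_b^i(t-\tau_l^i(t))\bigl(1-\tfrac{\phi_1(-\tau_l^i(t))}{K^i(t-\tau_l^i(t))}\bigr)\phi_2(-\tau_l^i(t))e^{-\int_{t-\tau_l^i(t)}^{t}\mu_l^i(s)\,ds} - \mu_m^i(t)\phi_2(0) \end{bmatrix}.
\]
Since all time-dependent coefficients and the delay $\tau_l^i(t)$ are continuous and $w$-periodic, $\bar f_t^i$ is continuously Fréchet differentiable in $\phi$ and hence locally Lipschitz on $\mathcal{I}_1^i$. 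Applying Theorem 2.2.3 of \cite{hale1993introduction} yields a unique solution $\bar P^i(t)$ on some maximal interval $[0,\epsilon_\phi)$.

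Next I would establish the a priori bounds needed to conclude $\epsilon_\phi = \infty$. For $L_s^i$, note that at any point where $L_s^i(t) = K^i(t)$ the first equation of (\ref{3}) gives $\tfrac{d L_s^i}{dt} \le -\mu_l^i(t) K^i(t) \le 0$; combined with assumptions (\ref{assump}) and (\ref{assumption}) on the initial data and on $K^i$, this yields $L_s^i(t) \le K^i(t) \le M^i$ on $[0,\epsilon_\phi)$. For $S_m^i$, the factor $\bigl(1 - L_s^i(t-\tau_l^i(t))/K^i(t-\tau_l^i(t))\bigr)$ is bounded above by $1$, and the $w$-periodicity and continuity of the coefficients imply that there are constants $C_1,C_2 > 0$ with
\[
\frac{dS_m^i}{dt} \le C_1 \sup_{s \in [t-\bar\tau^i, t]} S_m^i(s) - C_2\, S_m^i(t).
\]
A standard comparison/Gronwall argument for delay differential inequalities then gives at most exponential growth of $S_m^i$ on any finite subinterval of $[0,\epsilon_\phi)$, so $S_m^i$ is bounded on bounded intervals. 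Theorem 2.3.1 of \cite{hale1993introduction} then extends the solution to $[0,\infty)$.

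Finally, I would verify nonnegativity by the standard "first touching time" argument used in the proof of Lemma \ref{lemazimm}: suppose $t_0 > 0$ is the first time at which either $L_s^i$ or $S_m^i$ vanishes. If $L_s^i(t_0) = 0$, then $\tfrac{dL_s^i}{dt}(t_0) = \mu_b^i(t_0) S_m^i(t_0) \ge 0$; if $S_m^i(t_0) = 0$, then the undelayed loss term vanishes and the delayed maturation term is nonnegative by induction on the first zero of $S_m^i$, giving $\tfrac{dS_m^i}{dt}(t_0) \ge 0$. Hence no component can cross into negative values, and $\bar P^i(t) \in \mathcal{I}_1^{i,+}$ for all $t \ge 0$.

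The only part that requires care beyond quoting Lemma \ref{lemazimm} is the bound on $S_m^i$, because unlike the human compartments there is no conservation law of the form $\sum_k x_k^i \le \text{const}$; the estimate must come from the delayed linear inequality above. Everything else is a straightforward transcription of the argument already given for the full system.
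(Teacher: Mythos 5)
Your proposal is correct and follows exactly the route the paper intends: the proof of this lemma is omitted in the paper and declared ``analogous to the proof of Lemma \ref{lemazimm}'', and your argument is precisely that analogy carried out (local Lipschitz continuity plus Theorem 2.2.3 of \cite{hale1993introduction}, a priori bounds via $L_s^i \le K^i \le M^i$ and continuation via Theorem 2.3.1, then nonnegativity by a first-touching-time argument). The only cosmetic difference is that you bound $S_m^i$ through a delayed Gronwall-type inequality, whereas the paper's Lemma \ref{lemazimm} uses a $\limsup$-contradiction exploiting the strict positivity of the delay; both are valid and your observation that no conservation law is available for the mosquito compartment is exactly the point that makes this step nontrivial.
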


\begin{lemma}\label{lemma2}
For $i=1,...,n$, given $\mathcal{\phi} \in \mathcal{I}_{2}^{i},$ system (\ref{3}) has a unique nonnegative solution ${P}^{i}(t):\mathcal{I}^{i}_{2} \rightarrow \mathcal{I}^{i}_{2}$ 
 with ${P}^{i}(0)=\mathcal{\phi}$ for all $t \in [0, \infty)$, for i=1,...,n.
\end{lemma}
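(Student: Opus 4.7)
The plan is to transplant the proof scheme of Lemma \ref{lemazimm} to the reduced two-equation mosquito subsystem (\ref{3}), with the twist that the initial data lives on the shorter interval $[-\tau_l^i(0),0]$ instead of $[-\bar\tau^i,0]$. The geometric fact that makes this reduction legitimate is the standing assumption $\dot\tau_l^i(t)<1$ (already implicit in the model, since otherwise the factor $1-\dot\tau_l^i(t)$ appearing as a maturation coefficient would fail to be positive): under it, the map $t\mapsto t-\tau_l^i(t)$ is strictly increasing with initial value $-\tau_l^i(0)$, so $t-\tau_l^i(t)\ge -\tau_l^i(0)$ for every $t\ge 0$. Every delayed evaluation required by (\ref{3}) is therefore drawn either from $\phi\in\mathcal{I}_2^i$ or from the already-constructed forward solution, and the smaller history window suffices to pose the functional differential equation.

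With this observation in hand, I would repeat the three-stage argument of Lemma \ref{lemazimm}. For local existence and uniqueness, define the right-hand side of (\ref{3}) as a family of continuous operators $g_t^i$ whose domain is $\mathcal{I}_2^i$, compute the Fréchet derivative component by component (the formulas are much simpler than those of Lemma \ref{lemazimm} because only two compartments and no cross-patch couplings are present), conclude that $g_t^i\in C^1$ and is therefore locally Lipschitz on $\mathcal{I}_2^i$, and apply Theorem 2.2.3 of \cite{hale1993introduction} to obtain a unique solution $P^i(t)\phi=(L_s^i(t),S_m^i(t))$ on a maximal interval $[0,\epsilon_\phi)$. For the a priori bounds, a comparison argument on the scalar equation for $L_s^i$, combined with the assumption $K^i(t)\le M^i$, yields $L_s^i(t)\le K^i(t)\le M^i$ exactly as was done for $x_6^i+x_7^i$ in Lemma \ref{lemazimm}; the $S_m^i$ equation is then a linear delay equation with bounded coefficients, and the same Gronwall-type estimate that bounded $x_{10}^i$ in the earlier proof shows that $S_m^i$ cannot blow up in finite time. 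Theorem 2.3.1 of \cite{hale1993introduction} then extends the solution to $[0,\infty)$. Nonnegativity follows by the first-touch argument of Lemma \ref{lemazimm}: at any first time $t_*>0$ where either $L_s^i$ or $S_m^i$ reaches zero while the other remains nonnegative, the corresponding equation in (\ref{3}) forces the derivative at $t_*$ to be nonnegative, so the component does not cross into the negative half-plane.

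I expect the only genuinely delicate point to be the set-up of Step 1, where one must verify that $\mathcal{I}_2^i=C([-\tau_l^i(0),0],\mathbb{R}^2)$ is indeed large enough to accommodate every history value called for by (\ref{3})---this is exactly where the monotonicity of $t-\tau_l^i(t)$ enters and is what distinguishes this statement from Lemma \ref{lemma1}. The remaining steps are minor variants of the corresponding estimates in Lemma \ref{lemazimm} and should not introduce any new technical difficulties.
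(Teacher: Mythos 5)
Your proposal contains the right key idea, but the formal machinery you choose for the existence step does not match it, and it also differs from what the paper actually does. The paper omits the proof of Lemma \ref{lemma2} and states that it is analogous to Lemma \ref{lema7}, not to Lemma \ref{lemazimm}: there the argument is the method of steps. Since $1-\dot{\tau}_l^i(t)\ge 0$ gives $-\tau_l^i(0)\le \hat{\tau}-\tau_l^i(\hat{\tau})\le 0$ for $\hat{\tau}=\min\{\tau_l^i(0)\}$, on $[0,\hat{\tau}]$ every delayed argument $t-\tau_l^i(t)$ lands in $[-\tau_l^i(0),0]$ where the data $\phi$ is prescribed, so variation of parameters yields existence and uniqueness on $[0,\hat{\tau}]$; one then inducts on $[k\hat{\tau},(k+1)\hat{\tau}]$. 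Your opening paragraph identifies exactly this geometric fact (monotonicity of $t\mapsto t-\tau_l^i(t)$, so delayed evaluations come from $\phi$ or the already-built forward solution), which is the whole content of the reduction to the shorter window.

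The tension is in your Step 1: you propose to define operators $g_t^i$ with domain $\mathcal{I}_2^i=C([-\tau_l^i(0),0],\mathbb{R}^2)$ and invoke Theorem 2.2.3 of \cite{hale1993introduction}. That theorem requires the right-hand side to be a functional of the history segment $x_t$ of a \emph{fixed} length, and for times $t$ with $\tau_l^i(t)>\tau_l^i(0)$ the value $x(t-\tau_l^i(t))$ lies outside the window $[t-\tau_l^i(0),t]$, so no such functional on $\mathcal{I}_2^i$ exists. The fix is precisely the construction your first paragraph describes informally: either run the method of steps directly (as in Lemma \ref{lema7}), or pose the problem on the larger space $\mathcal{I}_1^i$ (Lemma \ref{lemma1}) and then use Remark \ref{rmk1}-style uniqueness to see that the solution depends only on the restriction of the data to $[-\tau_l^i(0),0]$. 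With that substitution, your a priori bound $L_s^i\le K^i\le M^i$, the resulting boundedness of $S_m^i$, the extension to $[0,\infty)$, and the first-touch nonnegativity argument all go through as you describe.
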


By Lemmas \ref{lemma1} and \ref{lemma2}, $\bar{{P}}^{i}(t)|_{{\mathcal{I}^{i}_{1}}^{+}}: {\mathcal{I}^{i}_{1}}^{+} \rightarrow {\mathcal{I}^{i}_{1}}^{+}$ and ${P}^{i}(t)|_{{\mathcal{I}^{i}_{2}}^{+}}: {\mathcal{I}^{i}_{2}}^{+} \rightarrow {\mathcal{I}^{i}_{2}}^{+}$.
Also as a consequence of the last two Lemmas, we have the following remarks.

\begin{remark}\label{rmk00}
The same results established in Lemmas \ref{lemma1} and \ref{lemma2} will remain valid for system (\ref{5}). So the linear operators $\bar{Q}^{i}(t)\colon \mathbb{R}\times C([-\bar{\tau}^{i}_{l},0],\mathbb{R})\rightarrow \mathbb{R}\times C([-\bar{\tau}^{i}_{l},0],\mathbb{R})$ and ${Q}^{i}(t) \colon \mathbb{R}\times C([-{\tau}^{i}_{l}(0),0],\mathbb{R})\rightarrow \mathbb{R}\times C([-{\tau}^{i}_{l}(0),0],\mathbb{R})$ are solutions to system (\ref{5}) and both are invariant in their respective positive cones.
\end{remark}

\begin{remark}\label{rmk1}
By the uniqueness of solutions in Lemmas \ref{lemma1} and \ref{lemma2}, it follows that for any $\phi \in \mathcal{I}^{i}_{1}$ and $\bar{\phi} \in \mathcal{I}^{i}_{2},$ with $\phi^{i}_{1}=\bar{\phi}^{i}_{1}$, $\phi^{i}_{2}=\bar{\phi}^{i}_{2}$  for all $\theta$ in $[-\tau^{i}_{l}(0),0]$ and $1 \leq i \leq n$, one has $\bar{{P}}^{i}(t) = {P}^{i}(t)$ for all t in $[0, \infty],$ $1\leq i\leq n.$
\end{remark}

Similar to the proof of Lemma 3.5 from \cite{lou2017}, $\bar{{P}}^{i}(t)$ and ${P}^{i}(t)$  are $w-$periodic semiflows. 

\begin{remark}\label{rmk2}
Fix $1\leq i \leq n$ and let $x^{i}(t,x)$ be the solution of (\ref{3}) such that $x^{i}(0,x)=x \in \mathcal{I}^{i}_{1}$, and for fixed $t \geq 0,$ let $Dx^{i}$ be its Fréchet derivative at the equilibria $0 \in \mathbb{R}^{2}$ of system (\ref{3}), and define 
$$
f_{i}(x^{i}(t,x),t):=\begin{bmatrix}
\mu^{i}_{b}(t)\left(1-\frac{x_{1}^{i}(t,x)}{K^{i}(t)}\right)x_{2}^{i}(t,x)-\mu_{l}(t)x^{i}_{1}(t,x) \\
((1-\dot{\tau}^{i}_{l}(t))\mu^{i}_{b}(t-\tau^{i}_{l}(t))\left(1-\frac{x_{1}^{i}(t-\tau^{i}_{l}(t),x)}{K^{i}(t-\tau^{i}_{l}(t))}\right)\times\\
\left(x_{2}^{i}(t-\tau^{i}_{l}(t),x)\right)e^{-\int_{t-\tau_{l}(t)}^{t} \mu(s) \, ds } - \mu_{m}^{i}(t)x_{2}^{i}(t,x)\\
\end{bmatrix},
$$
with Fréchet derivative at the origin $Df_{i}(x^{i}(t,.),t):\mathcal{I}^{i}_{1} \rightarrow \mathcal{I}^{i}_{1} $, for $t\geq 0$ fixed, given by 
$$
Df_{i}\begin{bmatrix}
x^{i}_{1}(t,x) \\
x^{i}_{2}(t,x) \\
\end{bmatrix}:= \begin{bmatrix}
\mu_{b}^{i}(t)x^{i}_{1}(t,x)-\mu_{l}^{i}(t) x^{i}_{2}(t,x) \\
(1-\dot{\tau}^{i}_{l}(t))\mu^{i}_{b}(t-\tau^{i}_{l}(t)) e^{-\int_{t-\tau^{i}_{l}(t)}^{t} \mu^{i}(s) \, ds } x^{i}_{2}(t-\tau^{i}_{l}(t),x)-\mu_{m}^{i}(t)x^{i}_{2}(t,x) \\
\end{bmatrix},$$
for $i=1,\dots,n.$
Thus, we can take the Fréchet Derivative at the origin and use the chain rule, considering that $x^{i}(t,0)=0$ is an equilibria of (\ref{5}):
$$
D\dot{x}^{i}=Df_{i}(x^{i}(t,x),t)Dx^{i}, \ \forall i=1,..,n.
$$
Since the maps $(t,x) \mapsto D\dot{x}^{i}(t,x)$ and $(t,x) \mapsto \frac{d}{dt}Dx^{i}(t,x)$ are continuous, we have $D\dot{x}^{i}=\frac{d}{dt}Dx^{i}$, for any $i=1,\dots,n$.\label{rmk} 
\end{remark}

Fix $i=1,\dots,n$. Let $\bar{P}^{i}(t,\varphi): \mathbb{R}\times \mathcal{I}^{i}_{1} \rightarrow \mathcal{I}^{i}_{1}$, where $\bar{P}^{i}(t,\varphi)|_{\mathbb{R}_{+}\times {\mathcal{I}^{i}_{1}}^{+}}$ is the solution map of system (\ref{3}), which is invariant in the positive cone $\mathbb{R_{+}}\times {\mathcal{I}^{i}_{1}}^{+}$.
Therefore by Lemma \ref{lemma2}, Remarks \ref{rmk00}, \ref{rmk1} and \ref{rmk}, we obtain that $D\bar{P}^{i}|_{\mathbb{R}_{+}\times C([-\bar{\tau}^{i}_{l},0],\mathbb{R}_{+})}(t,\varphi)=\bar{Q}^{i}(t,\varphi)$ is solution map of system (\ref{5}), once restricted to the positive cone \small{${\mathbb{R}_{+}\times C([-\bar{\tau}^{i}_{l},0],\mathbb{R}_{+})}$}. 

In view of Remark \ref{rmk1}, the same results hold for the solution map ${P}^{i}(t,\varphi): \mathbb{R}\times \mathcal{I}^{i}_{2} \rightarrow \mathcal{I}^{i}_{2}.$ Therefore, we define 
$$
Q^{i} := DP^{i}(w, \cdot),\ \bar{Q}^{i} := D\bar{P}^{i}(w, \cdot)
$$ 
to be the Poincar\'e maps of system (\ref{5}), for $1\leq i \leq n$ fixed. 

 
\begin{remark}\label{rmkk}
Since $Q^{i},\ \bar{Q}^{i}$ are Fr\'{e}chet derivatives of compact operators, they are compact linear operators, so by Krein-Rutman Theorem and similarly to \cite[Lemma 3.8]{lou2017}, 
$r(\bar{Q}^{i})=r(Q^{i})<\infty,$ where $r(\bar{Q}^{i}), \ r(Q^{i})$ are eigenvalues of $Q^{i},\ \bar{Q}^{i},$ respectively, whose eigenvector lies on $\mathbb{R}_{+}\times C([-\bar{\tau}^{i}_{l},0],\mathbb{R}_{+})-\{0\}$.
\end{remark}

\begin{lemma}\label{precompact}
Let $1\leq i\leq n$ and $\varphi \in {{\mathcal{I}}_{1}^{{i}^{+}}}.$ Then the orbit $t \mapsto \bar{P}^{i}(t,\varphi)$ is precompact. 
\end{lemma}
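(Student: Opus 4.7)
The plan is to apply the Arzel\`a--Ascoli theorem in $\mathcal{I}^{i}_{1} = C([-\bar{\tau}^{i}_{l},0],\mathbb{R}^{2})$. Since $\bar{P}^{i}(t,\varphi)(\theta) = (L_{s}^{i}(t+\theta,\varphi), S_{m}^{i}(t+\theta,\varphi))$ for $\theta \in [-\bar{\tau}^{i}_{l},0]$, the orbit $\{\bar{P}^{i}(t,\varphi) : t \geq 0\}$ is precompact in $\mathcal{I}^{i}_{1}$ as soon as it is pointwise bounded and equicontinuous as a family of continuous maps on the compact interval $[-\bar{\tau}^{i}_{l},0]$.

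For pointwise (indeed uniform) boundedness I would argue as in the proof of Lemma \ref{lemazimm}. The logistic factor $(1 - L_{s}^{i}/K^{i})$ in the first equation of (\ref{3}), together with the standing assumption $K^{i}(t) \leq M^{i}$, yields $L_{s}^{i}(t) \leq K^{i}(t) \leq M^{i}$ on $[0,\infty)$, exactly as the bound on $x_{6}^{i}+x_{7}^{i}$ was obtained in Lemma \ref{lemazimm}. The finite-time blow-up argument used there for the adult-mosquito compartments $x^{i}_{8}, x^{i}_{9}, x^{i}_{10}$, combined with the continuity and $w$-periodicity (hence boundedness) of $\mu_{b}^{i}, \mu_{l}^{i}, \mu_{m}^{i}, \dot{\tau}_{l}^{i}$ and the strict positivity of $\tau_{l}^{i}$, then rules out blow-up of $S_{m}^{i}$ and produces a uniform bound on $[0,\infty)$. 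Together with the continuity of $\varphi$ on the compact set $[-\bar{\tau}^{i}_{l},0]$, the solution $(L_{s}^{i},S_{m}^{i})(\cdot,\varphi)$ is then uniformly bounded on the whole half line $[-\bar{\tau}^{i}_{l},\infty)$.

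Once this boundedness is in hand, the right-hand sides of (\ref{3}) become uniformly bounded for $t \geq 0$: the exponential survival factor $e^{-\int_{t-\tau_{l}^{i}(t)}^{t}\mu_{l}^{i}(s)\,ds}$ lies in $(0,1]$, the saturation factor lies in $[0,1]$, and the remaining coefficients are bounded by periodicity. Hence $\dot{L}_{s}^{i}(t)$ and $\dot{S}_{m}^{i}(t)$ are uniformly bounded on $[0,\infty)$, and the mean value theorem gives uniform continuity of $(L_{s}^{i},S_{m}^{i})(\cdot,\varphi)$ on $[0,\infty)$. Combining with the uniform continuity of $\varphi$ on the compact interval $[-\bar{\tau}^{i}_{l},0]$, I obtain uniform continuity on $[-\bar{\tau}^{i}_{l},\infty)$, which is precisely the equicontinuity of $\{\bar{P}^{i}(t,\varphi) : t \geq 0\}$ in $\mathcal{I}^{i}_{1}$. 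The Arzel\`a--Ascoli theorem then delivers the precompactness of the orbit.

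The only mildly delicate point I foresee is the transition across $t = 0$: since $\varphi$ is only assumed continuous, uniform continuity of the orbit near $\theta = 0$ for small $t$ cannot be derived from a derivative bound on $\varphi$ and must instead be deduced from the uniform continuity of $\varphi$ on the compact set $[-\bar{\tau}^{i}_{l},0]$. Once this is noted, the verification of the Arzel\`a--Ascoli hypotheses is routine.
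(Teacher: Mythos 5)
Your proof is correct, but it follows a genuinely different route from the paper's. You argue directly via Arzel\`a--Ascoli: uniform boundedness of the solution of (\ref{3}) on $[0,\infty)$ makes the right-hand side of (\ref{3}) uniformly bounded, hence the solution is Lipschitz on $[0,\infty)$ and, patched with the uniform continuity of $\varphi$ on the compact interval $[-\bar{\tau}^{i}_{l},0]$, uniformly continuous on all of $[-\bar{\tau}^{i}_{l},\infty)$; this is exactly equicontinuity of the family of segments, and together with uniform boundedness it yields precompactness of the orbit. The paper instead argues by contradiction through the relation $\bar{Q}^{i}\bar{P}^{i}(t,\phi)=\bar{P}^{i}(t,\bar{Q}^{i}\phi)$ and the compactness of the linear operator $\bar{Q}^{i}$: a bounded trajectory pushed through a compact operator has convergent subsequences, and the convergence is then pulled back to the original trajectory. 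Your approach is more elementary and self-contained, and it avoids the delicate last step of the paper's argument, where convergence of $\bar{Q}^{i}\bar{P}^{i}(t_{j_{m}},\phi)$ is claimed to imply convergence of $\bar{P}^{i}(t_{j_{m}},\phi)$ ``by continuity of $\bar{Q}^{i}$'' --- a deduction that would really require continuity of the inverse of $\bar{Q}^{i}$ on its range, not of $\bar{Q}^{i}$ itself. Your remark about the junction at $t=0$ (equicontinuity near $\theta=0$ for small $t$ must come from uniform continuity of $\varphi$, not from a derivative bound) is exactly the right point to flag and is handled correctly. One caveat, which applies equally to the paper's version: the blow-up argument of Lemma \ref{lemazimm} by itself only excludes finite-time escape, whereas Arzel\`a--Ascoli (and, in the paper's proof, the application of compactness of $\bar{Q}^{i}$ to the whole trajectory) needs a bound on $S_{m}^{i}$ that is uniform over all of $[0,\infty)$; this does hold --- integrating the larval equation shows the recruitment into $S_{m}^{i}$ over any unit time window is bounded by a constant depending only on $M^{i}$ and the periodic coefficients, and the positive lower bound on $\mu_{m}^{i}$ then forces eventual uniform boundedness --- but it deserves an explicit sentence rather than a pointer to the local-existence argument.
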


\begin{proof}
Fix $1\leq i \leq n,$ and suppose $\exists \ \phi^{i} \in {\mathcal{I}}_{1}^{i}$ such that there exists a time sequence $0 \leq t_{1}<t_{2}<\dots<t_{k}<\dots$ where the sequence $(\bar{P}^{i}(t_{k},\phi))_{k\geq 1}$ has no convergent subsequences. 
By uniqueness of solutions for system (\ref{3}), since $\bar{P}^{i}(0,\phi) = \phi$, we have 
$$
Q^{i}\bar{P}^{i}(t,\phi) = \bar{P}^{i}(t,\bar{Q}^{i}\phi)
$$ 
for all $t \geq 0$. Similar to the proof of Lemma 3.1, it is possible to prove that the trajectory $t \mapsto \bar{P}^{i}(t,\bar{Q}^{i}\phi)$ is bounded. Therefore, since $\bar{Q}^{i}$ is compact, due to the above equation, this trajectory is precompact. Consequently, the sequence $(\bar{P}^{i}(t_{k},\bar{Q}^{i}\phi))_{k\geq 1}$ has a convergent subsequence, i.e., the limit $$\lim_{m\rightarrow \infty}\bar{P}^{i}(t_{j_{m}},\bar{Q}^{i}\phi)=\lim_{m\rightarrow \infty}\bar{Q}^{i}\bar{P}^{i}(t_{j_{m}},\phi)$$ exists. 

By the continuity of $\bar{Q}^{i}$, $\lim_{m\rightarrow \infty}\bar{P}^{i}(t_{j_{m}},\phi)$ exists, which is a contradiction. Therefore, $\forall 1\leq i \leq n, \ \phi \in \mathcal{I}_{1}^i,$ the orbit $t \mapsto \bar{P}^{i}(t,\varphi)$ is precompact.
\end{proof}

According to \cite{xiaoMain}, we have the following result.

\begin{lemma}
$\mathcal{R}^{i}_{m} - 1\leq 0$ if and only if $r(Q^{i}) - 1\leq 0$ for $i=1,...,n$.
\end{lemma}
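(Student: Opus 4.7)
First I would invoke the framework of \cite{xiaoMain}, which establishes this type of equivalence for periodic systems with time-varying delays via a family of perturbed next-generation operators. Specifically, introduce $\bar{L}_i^\lambda \colon C_w \to C_w$, $\lambda > 0$, defined by
$$
\bar{L}_i^\lambda(\varphi)(t) := \int_0^\infty \bar{\Phi}_i(t,t-s)\,\frac{1}{\lambda}\,\bar{F}_i(t-s)\,\varphi(t-s+\cdot)\,ds,
$$
so that $\bar{L}_i^1 = \bar{L}_i$. By the same compactness and positivity arguments already invoked for $\bar{L}_i$, each $\bar{L}_i^\lambda$ is compact and strongly positive on $C_w$, so the Krein--Rutman theorem yields a positive eigenvalue $r(\bar{L}_i^\lambda)$ with a positive eigenvector. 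Standard perturbation arguments then give continuity and strict monotonicity of $\lambda \mapsto r(\bar{L}_i^\lambda)$ with range $(0,\infty)$, so there is a unique $\lambda^* > 0$ satisfying $r(\bar{L}_i^{\lambda^*}) = 1$. Since the positive eigenvector $\varphi$ of $\bar{L}_i$ at $r(\bar{L}_i)$ satisfies $\bar{L}_i^{r(\bar{L}_i)}\varphi = \varphi$, one has $\lambda^* = r(\bar{L}_i) = \mathcal{R}_m^i$ by \eqref{Rm}.

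Next, I would associate to each $\lambda>0$ the linear periodic delay system obtained from (\ref{5}) by replacing $\mu_b^i$ by $\mu_b^i/\lambda$, whose Poincar\'e map $Q^i_\lambda$ on $\mathbb{R}\times C([-\bar{\tau}^i_l,0],\mathbb{R})$ satisfies $Q^i_1 = Q^i$. Following the renewal-equation correspondence of \cite{xiaoMain}, a positive eigenvector of $\bar{L}_i^\lambda$ at eigenvalue $1$ generates a $w$-periodic solution of the perturbed linear system, and conversely any such periodic solution gives rise to such an eigenvector; hence $r(\bar{L}_i^\lambda)=1$ if and only if $r(Q^i_\lambda)=1$. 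Combining with the previous step yields $r(Q^i_{\mathcal{R}_m^i}) = 1$, and since $\lambda \mapsto r(Q^i_\lambda)$ is also continuous and strictly decreasing, $r(Q^i)-1 = r(Q^i_1)-1$ and $\mathcal{R}_m^i-1 = \lambda^*-1$ share the same sign; in particular each is $\leq 0$ precisely when the other is, which is the claimed equivalence.

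The main obstacle is rigorously carrying out the renewal-equation correspondence of \cite{xiaoMain} in the present setting with the time-varying delay factor $(1-\dot\tau_l^i(t))$ and the integrated death term $\exp\bigl(-\int_{t-\tau_l^i(t)}^{t} \mu_l^i(s)\,ds\bigr)$, together with verifying the strict monotonicity of $\lambda \mapsto r(\bar{L}_i^\lambda)$ and $\lambda \mapsto r(Q^i_\lambda)$. However, this is precisely the setting treated in \cite{xiaoMain}, and combined with the compactness of $Q^i$ already noted in Remark \ref{rmkk}, the argument carries over essentially verbatim.
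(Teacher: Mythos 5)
The paper offers no proof of this lemma at all---it simply cites \cite{xiaoMain}---and your outline is precisely the standard argument from that reference: the $\lambda$-perturbed next-generation operators $\tfrac{1}{\lambda}\bar{F}_{i}$, the correspondence between positive eigenvectors of $\bar{L}_{i}^{\lambda}$ at eigenvalue $1$ and positive $w$-periodic solutions of the perturbed linear system (hence $r(Q^{i}_{\lambda})=1$), and the continuity and strict monotonicity of $\lambda \mapsto r(Q^{i}_{\lambda})$ to transfer the sign of $\mathcal{R}_{m}^{i}-1$ to that of $r(Q^{i})-1$. So your approach is essentially the same as the paper's, just carried out rather than cited.
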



So we are in a position to prove the following threshold dynamics for the mosquito population.


\begin{lemma}\label{mainlemma}
If $\mathcal{R}_{m}^{i}\leq 1$ for all $i=1,..., n,$ then 
$(\bar{S}^{i}_{h},0,0)_{i=1,..,n}\in \mathbb{R}^{3n}$ is globally stable in $\prod_{i=1}^{n}{\mathcal{I}^i_{2}}^{+}$. Otherwise, consider without loss of generality that $1< i_{1}<..<i_{k}< n$, such that $\mathcal{R}^{i_{j}}_{m}>1$ for $j=1,…,k, \ 1\leq k< n$. Thus, by letting
$x_{i}:=({\bar{S}^{i}_{h}},0,0)$, and
$y_{i}(t):=({\bar{S}^{i}_{h}},{L^{i}_{s}}^{*}(t),{S^{i}_{m}}^{*}(t))$,
there exists a unique $w$-periodic solution $\gamma(t)$ of system (\ref{3}), given by $\gamma(t)=$
$(x_{1},...,y_{i_{1}}(t),…,y_{i_{k}}(t),...,x_{n}),$ which is globally stable in $\prod_{i=1}^{n}{\mathcal{I}^i_{2}}^{+}-\{(x_{1},..,x_{n})\}$. \label{lemmaRm}
\end{lemma}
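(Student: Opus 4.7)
My plan is to exploit the patch decoupling of system (\ref{3}): the only cross-patch coupling in (\ref{pem}) lives in the linear $S_h^i$ equation, while each two-dimensional larvae--mosquito block $(L_s^i, S_m^i)$ evolves autonomously. I will therefore first prove the threshold dichotomy one patch at a time, and only afterwards assemble the product statement.

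Fix $i$. By Lemma \ref{precompact} the Poincar\'e map $P^i(w,\cdot)$ of (\ref{3}) is a compact map on $(\mathcal{I}_2^i)^+$; its linearization at the origin is exactly $Q^i$; and the preceding lemma gives $\mathcal{R}_m^i\le 1 \iff r(Q^i)\le 1$. In the subcritical/critical case $r(Q^i)\le 1$, I would dominate the nonlinear right-hand side of (\ref{3}) above by the linear system (\ref{5}) --- this is legitimate because the carrying-capacity factor $1-L_s^i/K^i$ lies in $[0,1]$ on the forward-invariant set $\{L_s^i \le K^i\}$ --- and combine this comparison with the strict sub-homogeneity coming from the same factor (strictly less than $1$ off the $S_m^i$-axis) to conclude $P^i(nw,\phi)\to 0$ for every $\phi\in(\mathcal{I}_2^i)^+$. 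In the supercritical case $r(Q^i)>1$, I would take the positive Krein--Rutman eigenvector of $Q^i$ from Remark \ref{rmkk} to construct a strict sub-solution, establish uniform persistence of positive orbits via a standard acyclicity/Morse-decomposition argument in the periodic setting, and then produce a unique positive $w$-periodic fixed point $(L_s^{i*}(\cdot), S_m^{i*}(\cdot))$ of $P^i(w,\cdot)$ via a fixed-point argument on the global compact attractor of the persistent semiflow, applying the threshold framework for periodic delay equations developed in \cite{xiaoMain} and the analogous results in \cite{lou2017}.

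For the $S_h^i$ block at the disease-free state, the linear inhomogeneous system
\[
\dot S_h^i = R^i - \bigl(\mu_h + \textstyle\sum_{j\neq i} m_{ij}\bigr) S_h^i + \sum_{j\neq i} m_{ji} S_h^j
\]
has a Metzler coefficient matrix with column sums all equal to $-\mu_h<0$, hence is Hurwitz, and $S_h^i(t)\to \bar S_h^i$ unconditionally. Combined with the patch-wise mosquito result this yields the two conclusions of the lemma: when every $\mathcal{R}_m^i\le 1$, the profile $(\bar S_h^i,0,0)_i$ is globally attractive; when the supercritical indices $i_1<\cdots<i_k$ satisfy $\mathcal{R}_m^{i_j}>1$, the function $\gamma(t)$ obtained by placing $x_i$ on the subcritical patches and $y_{i_j}(t)$ on the supercritical ones is the unique $w$-periodic solution of (\ref{3}) and attracts every orbit in $\prod_i (\mathcal{I}_2^i)^+$ other than $(x_1,\ldots,x_n)$, which sits on the stable manifold of the origin in the supercritical coordinates.

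The hard part is the supercritical case. The carrying-capacity factor $1-L_s^i/K^i$ prevents the right-hand side from being globally cooperative (in both the original and the $\tilde L^i = K^i - L_s^i$ coordinates, one of the relevant cross-partials has the wrong sign), so the classical monotone-semiflow theorem does not apply off the shelf. I would handle this by showing that the boundary $\{L_s^i = K^i\}$ is strictly repelling (there $\dot L_s^i = -\mu_l^i L_s^i<0$), so orbits eventually enter the interior of an absorbing set on which the required comparison and strict concavity of $P^i(w,\cdot)$ along rays through the origin do hold; uniqueness of the positive periodic orbit then follows from this concavity, and global attractivity from the combination of uniform persistence and compactness. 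The irreducibility of the linearization needed for the Krein--Rutman step is automatic from the strict positivity of the recruitment kernel $(1-\dot\tau_l^i(t))\mu_b^i(t-\tau_l^i(t)) e^{-\int \mu_l^i}$ built into the model's standing assumptions.
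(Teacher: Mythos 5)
Your overall strategy coincides with the paper's: reduce to a single patch using the fact that the $(L_s^i,S_m^i)$ blocks of (\ref{3}) are mutually uncoupled, settle the $S_h^i$ block by linear (Metzler/Hurwitz) theory, and run a threshold argument on each two-dimensional delayed block governed by $\operatorname{sgn}(r(Q^i)-1)$. Your subcritical step (domination of the nonlinear block by the cooperative linear system (\ref{5}) on the invariant set $\{L_s^i\le K^i\}$, supplemented by subhomogeneity to cover $r(Q^i)=1$) and your treatment of the $S_h^i$ equations are sound. The difference is that the paper does not rebuild the machinery: it verifies monotonicity, strong subhomogeneity, precompactness of orbits (Lemma \ref{precompact}) and compactness of the derivative, and then invokes Theorem 2.3.4 of \cite{zhao2017} wholesale, which delivers existence, uniqueness \emph{and} global attractivity of the positive periodic orbit in one stroke.

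The genuine gap is in your supercritical case, at the last two steps. First, uniqueness of the positive $w$-periodic orbit via ``strict concavity of $P^i(w,\cdot)$ along rays through the origin'' and, second, ``global attractivity from the combination of uniform persistence and compactness'' are both theorems about \emph{monotone} subhomogeneous maps; without order-preservation, uniform persistence plus compactness only yields an interior global attractor, which could a priori contain several periodic orbits or more complicated dynamics. You correctly observe that quasimonotonicity fails because $\partial \dot S_m^i/\partial L_s^i(t-\tau_l^i(t)) = -(1-\dot\tau_l^i)\mu_b^i\,S_m^i(t-\tau_l^i)e^{-\int\mu_l^i}/K^i<0$, but your proposed repair --- retreating to an interior absorbing set because $\{L_s^i=K^i\}$ is repelling --- does not fix this: the wrong-signed cross-partial is present throughout the interior of the positive cone whenever $S_m^i>0$, not merely near the boundary, so neither the cooperative comparison nor the ray-concavity uniqueness argument is restored there. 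To close the argument you either need to establish monotonicity of the solution semiflow with respect to a suitable order (which is what the paper asserts before applying Theorem 2.3.4 of \cite{zhao2017}), or replace the concavity/persistence combination by some other mechanism that forces the interior attractor to be a single periodic orbit.
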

\begin{proof}
In system (\ref{3}) fix $1 \leq i \leq n$. Due to Lemma \ref{lemma2}, $\bar{P}^{i}(t)$ is monotone for all $t\in \mathbb{R}_{+}$. Now we claim that $\bar{P}^{i}(t)$ is strongly subhomogeneous. 
Let $v\in$ ${\mathcal{I}^{i}_{1}}^{+}-\{(0,0)\}$, $\lambda \in (0,1),$ and consider the solution maps $\bar{P}^{i}(t,v), \ \bar{P}^{i}(t,\lambda v)$ of system (\ref{3}), where $\bar{P}^{i}(0,v)=v,$ $\bar{P}^{i}(0,\lambda v)=\lambda v.$ 
Consider the maps $h^{i}(t):=\lambda \bar{P}^{i}(t,v), \ g^{i}(t):=\bar{P}^{i}(t,\lambda v).$ 
Let $b>0,\ a=0$. Using the same notation as in Theorem 4 of \cite{walter1997} and Chapter 2 of \cite{zhao2017}, notice that $0=Pg^{i} \gg Ph^{i}$ in $(a,b]$, and hence, we obtain $h\ll g$ in $(a,b],$ by Theorem 4 in \cite{walter1997}. Since $b>0$ is arbitrary and $h(t), g(t)$ are defined over $\mathbb{R}_{+}$ by Lemma \ref{precompact}, we find that $h \ll g$ in $(0,\infty)$. 
Therefore, $\bar{P}^{i}(t,v)$ is strongly subhomogeneous. 

Analogous to Lemma \ref{lemazimm}, we can prove that the map $t\mapsto \bar{P}^{i}(t,v)$ is bounded. 
Moreover, similar to the proof of Lemma \ref{lema9}, $D\bar{P}^{i}(t)$ is a compact linear operator.
Lastly, according to the proof of Theorem 2.3.4 \cite{zhao2017}, it suffices to show that every positive orbit of $t\rightarrow \bar{P}^{i}(t,v)$ is precompact, i.e, has a compact closure. This fact is a consequence of Lemma \ref{precompact}.






In view of Remark \ref{rmk1}, all the above results are valid for the solution ${P}^{i}:\mathbb{R}\times\mathcal{I}^{i}_{2} \rightarrow \mathcal{I}^{i}_{2}$. Now we are in position to use Theorem 2.3.4 of \cite{zhao2017} to obtain that, by letting $t=w$, $\bar{P}^{i}:=\bar{P}^{i}(w),$ we find that $\mathcal{R}_{m}^{i}-1=r(\bar{Q}^{i})-1=r(Q^{i})-1\leq 0 \implies x_{i}$ is globally stable in ${\mathcal{I}^{i}}_{2}^{+}$, and $\mathcal{R}_{m}^{i}-1=r(\bar{Q}^{i})-1=r(Q^{i})-1>0 \implies y_{i}(t)$ is globally stable in $ {\mathcal{I}^{i}}_{2}^{+}-\{(0,0) \}$ for all $i=1,\dots, n$, where $r(Q^{i})$ is a positive eigenvalue of $Q^{i}$, whose eigenvector lies on the positive cone ${\mathcal{I}}^{i}_{2}{}^{+}$, by Krein-Rutman Theorem.

Without loss of generality, let $\mathcal{R}_{m}^{i_{j}}>1$ for $j=1,...,k.$ Consider a trajectory $\Gamma(t) \in \bar{\mathcal{I}}^{+}_{2}-\{(x_{1},..,x_{n})\}$, and let $\Pi:\prod_{i=1}^{n}\mathcal{I}_{2}^{i}\rightarrow \prod_{j=1}^{k}\mathcal{I}_{2}^{i_{j}}$ and $\bar{\Pi}:\prod_{i=1}^{i=n}\mathcal{I}_{2}^{i} \rightarrow \prod_{j \in \{1,...,n \}-\{i_{1},...,i_{k} \}}\mathcal{I}_{2}^{j}$ be natural projections. Since as we vary $i=1,...,n$, system (\ref{3}) is uncoupled, using the results in the last paragraph, we find that $\Pi(\Gamma(t))\rightarrow (\gamma_{i_{1}}(t),...,\gamma_{i_{k}}(t)),$ $\bar{\Pi}(\Gamma(t))\rightarrow (0,...,0)\in \mathbb{R}^{2(n-k)}$ as $t\rightarrow \infty$. Without loss of generality consider that $\mathcal{R}_{m}^{1},\mathcal{R}_{m}^{n}\leq 1.$ Thus we obtain that $\gamma(t):=(x_{1},...,y_{i_{1}}(t),…,y_{i_{k}}(t),...,x_{n})$ a periodic solution of system (\ref{pem}), as we vary $i=1,...,n,$ is globally stable in $\prod_{i=1}^{n}{\mathcal{I}^i_{2}}^{+}-\{(x_{1},..,x_{n})\}$. On the other hand, analogously, if $\mathcal{R}_{m}^{i}\leq 1,\ \forall 1\leq i\leq n,$ then $(x_{1},...,x_{n})$ is globally stable in $\prod_{i=1}^{n}{\mathcal{I}^i_{2}}^{+}.$
\end{proof}

\subsection{Human Population Dynamics}
In the following, we shall proceed according to \cite{xiaoMain}. We assume, without loss of generality, that $\mathcal{R}^{1}_{m},\mathcal{R}_{m}^{n}\leq 1$, $\mathcal{R}_{m}^{i_{j}}>1$ for $j=1,...,k$ with $1\leq k<n$ and $1\leq i_{1}<...<i_{k}<n$, and then linearize the infective system of (\ref{0})--(\ref{0-1}) over the disease-free periodic trajectory established in Lemma \ref{mainlemma}, but now regarding it as a trajectory of system (\ref{0})--(\ref{0-1}) without the $E_h^i$ and $R_h^i$ equations given by $\gamma(t):=(x_{1},...,y_{i_{1}}(t),...,y_{i_{k}}(t),...,x_{n})$ with $x_{i}=({S_{h}^{i}}^{*},0,0,0,0,0,0,0)$ and $y_{i_{j}}(t)=({S_{h}^{i_{j}}}^{*},0,0,{L_{s}^{i_{j}}}^{*}(t),0,{S_{m}^{i_{j}}}^{*}(t),0,0)$ for $j=1,...,k.$
For $1\leq j \leq k<n$ such that $\mathcal{R}_{m}^{i_{j}}> 1,$
\begin{equation}
\begin{aligned}[t]
\frac{dI_{h}^{{i}_{j}}}{dt} =&(1-a)\beta_{mh}b_{m}^{i_{j}}(t-\tau_{h}) {S_{h}^{i_{j}}}^{*} I_{m}^{i_{j}}(t-\tau_{h})e^{-\mu_{h}\tau_{h}}-(\eta_h + \mu_{h})I^{i_{j}}_{h}(t)\\
\frac{dA_{h}^{i_{j}}}{dt} =& a\beta_{mh}b_{m}^{i_{j}}(t-\tau_{h}) {S_{h}^{i_{j}}}^{*}I_{m}^{i_{j}}(t-\tau_{h})e^{-\mu_{h}\tau_{h}} - (\eta_{h} + \mu_{h}+\sum_{l \neq i_{j}} m_{i_{j}l})A_{h}^{i_{j}}(t)+ \sum_{l \neq i_{j}} m_{li_{j}}A_{h}^{l}(t) \\
\frac{dL_{I}^{i_{j}}}{dt} =&\mu^{i_{j}}_{b}(t)\left(1-\frac{{L_{s}^{i_{j}}}^{*}(t)}{K^{i_{j}}(t)}\right)pI_{m}^{i}(t)-\mu_{l}(t)L_{I}^{i_{j}}(t)\\
\frac{dI_{m}^{i_{j}}}{dt} =& \left(1-\dot{\tau}^{i_{j}}_{l}(t)\right)\mu_{b}^{i_{j}}(t-\tau^{i_{j}}_{l}(t))\left(1-\frac{{{L_{s}^{i_{j}}}^{*}}(t-\tau^{i_{j}}_{l}(t))}{K^{i_{j}}(t-\tau^{i_{j}}_{l}(t))}\right)pI_{m}^{i_{j}}(t-\tau^{i_{j}}_{l}(t))e^{-\int_{t-\tau^{i_{j}}_{l}(t)}^{t} \mu^{i_{j}}_{l}(s) \, ds } \\
&+ \left(1-\dot{\tau}^{i_{j}}(t)\right)\beta_{hm}b^{i_{j}}(t-\tau^{i_{j}}(t))(I_{h}^{i_{j}}(t-\tau^{i_{j}}(t))+A_{h}^{i_{j}}(t-\tau^{i_{j}}(t))){S_{m}^{i_{j}}}^{*}(t-\tau^{i_{j}}(t))e^{-\int_{t-\tau^{i_{j}}(t)}^{t} \mu^{i_{j}}_{l}(s) \, ds } \\
&-\mu_{m}^{i_{j}}(t) I_{m}^{i_{j}}(t), 
\end{aligned}\label{6}
\end{equation}
and for $1\leq i\leq n$ such that $\mathcal{R}_{m}^{i}\leq 1$,
\begin{equation}
\begin{aligned}[t]
\frac{dI_{h}^{i}}{dt} &=(1-a)\beta_{mh}b_{m}^{i}(t-\tau_{h}) {S_{h}^{i}}^{*} I_{m}^{i}(t-\tau_{h})e^{-\mu_{h}\tau_{h}}-(\eta_h + \mu_{h})I^{i}_{h}(t)\\
\frac{dA_{h}^{i}}{dt} &= a\beta_{mh}b_{m}^{i}(t-\tau_{h}) {S_{h}^{i}}^{*}I_{m}^{i}(t-\tau_{h})e^{-\mu_{h}\tau_{h}} - (\eta_{h} + \mu_{h}+\sum_{l \neq i} m_{il})A_{h}^{i}(t)+ \sum_{l \neq i} m_{li}A_{h}^{l}(t) \\
\frac{dL_{I}^{i}}{dt} &=\mu^{i}_{b}(t)pI_{m}^{i}(t)-\mu^{i}_{l}(t)L_{I}^{i}(t)\\
\frac{dI_{m}^{i}}{dt} &= \left(1-\dot{\tau}^{i}_{l}(t)\right)\mu_{b}^{i}(t-\tau^{i}_{l}(t))pI_{m}^{i}(t-\tau^{i}_{l}(t))e^{-\int_{t-\tau^{i}_{l}(t)}^{t} \mu_{l}^{i}(s) \, ds } -\mu_{m}^{i}(t) I_{m}^{i}(t).
\end{aligned}\label{7}
\end{equation}

Let $\tau^{i}_{0}:=\max \{ \sup_{0\leq t\leq w}\tau^{i}(t), \ \tau^{i}_{h},\ \sup_{0\leq t\leq w}\tau^{i}_{l}(t)\}$ for $i=1,...,n$, and define $C:=\prod_{i=1}^{n}C([-\tau^{i}_{0},0],\mathbb{R}^{4})$ and $C^{+}:=\prod_{i=1}^{n}C([-\tau^{i}_{0},0],\mathbb{R}^{4}_{+}).$
Define a mapping $F:\mathbb{R}\rightarrow \mathcal{L}(C^{+},\mathbb{R}^{4n})$ by 
\begin{equation}\nonumber
F(t)\begin{bmatrix}
\varphi^{1}_{1} \\
\varphi^{1}_{2} \\
\varphi^{1}_{3} \\
\varphi^{1}_{4}\\
\vdots \\
\varphi^{i_{j}}_{1} \\
\varphi^{i_{j}}_{2} \\
\varphi^{i_{j}}_{3} \\
\varphi^{i_{j}}_{4} \\
\vdots \\
\varphi^{n}_{1} \\
\varphi^{n}_{2} \\
\varphi^{n}_{3} \\
\varphi^{n}_{4}\\
\end{bmatrix}:=\begin{bmatrix}
(1-a)\beta_{mh}b_{m}^{1}(t-\tau_{h}) ({S_{h}^{1}}^{*}) \varphi^{1}_{4}(-\tau_{h})e^{-\mu_{h}\tau_{h}}\\
 a\beta_{mh}b_{m}^{1}(t-\tau_{h}) ({S_{h}^{1}}^{*})\varphi^{1}_{4}(-\tau_{h})e^{-\mu_{h}\tau_{h}} + \sum_{l \neq 1} m_{l1}\varphi^{l}_{2}(0)\\
\mu^{1}_{b}(t)p\varphi^{1}_{4}(0)\\
 \left(1-\dot{\tau}^{1}_{l}(t)\right)\mu_{b}^{1}(t-\tau^{1}_{l}(t))p\varphi_{4}^{1}(-\tau^{1}_{l}(t))e^{-\int_{t-\tau^{1}_{l}(t)}^{t} \mu_{l}^{1}(s) \, ds }\\
 \vdots \\
(1-a)\beta_{mh}b_{m}^{i_{j}}(t-\tau_{h}) ({S_{h}^{i_{j}}}^{*}) \varphi^{i_{j}}_{4}(-\tau_{h})e^{-\mu_{h}\tau_{h}}\\
 a\beta_{mh}b_{m}^{i_{j}}(t-\tau_{h}) ({S_{h}^{i_{j}}}^{*})\varphi^{i_{j}}_{4}(-\tau_{h})e^{-\mu_{h}\tau_{h}} + \sum_{l \neq i_{j}} m_{li_{j}}\varphi^{l}_{2}(0)\\
\mu^{i_{j}}_{b}(t)\left(1-\frac{{L_{s}^{i_{j}}}^{*}(t)}{K^{i_{j}}(t)}\right)p\varphi^{i_{j}}_{4}(0)\\
\left(1-\dot{\tau}^{i_{j}}_{l}(t)\right)\mu_{b}^{i_{j}}(t-\tau^{i_{j}}_{l}(t))\left(1-\frac{({{L_{s}^{i_{j}}}^{*}}(t-\tau^{i_{j}}_{l}(t)))}{K^{i_{j}}(t-\tau^{i_{j}}_{l}(t))}\right)p\varphi_{4}^{i_{j}}(-\tau^{i_{j}}_{l}(t))e^{-\int_{t-\tau^{i_{j}}_{l}(t)}^{t} \mu^{i_{j}}_{l}(s) \, ds } +\\
+ \left(1-\dot{\tau}^{i_{j}}(t)\right)\beta_{hm}b^{i_{j}}(t-\tau^{i_{j}}(t))(\varphi_{1}^{i_{j}}(-\tau^{i_{j}}(t))+\varphi_{2}^{i_{j}}(-\tau^{i_{j}}(t))({S_{m}^{i_{j}}}^{*}(t-\tau^{i_{j}}(t)))e^{-\int_{t-\tau^{i_{j}}(t)}^{t} \mu^{i_{j}}_{l}(s) \, ds }\\
 \vdots \\
(1-a)\beta_{mh}b_{m}^{n}(t-\tau_{h}) ({S_{h}^{n}}^{*}) \varphi^{n}_{4}(-\tau_{h})e^{-\mu_{h}\tau_{h}}\\
 a\beta_{mh}b_{m}^{n}(t-\tau_{h}) ({S_{h}^{n}}^{*})\varphi^{n}_{4}(-\tau_{h})e^{-\mu_{h}\tau_{h}} + \sum_{l \neq n} m_{ln}\varphi^{l}_{2}(0)\\
\mu^{n}_{b}(t)p\varphi^{n}_{4}(0)\\
 \left(1-\dot{\tau}^{n}_{l}(t)\right)\mu_{b}^{n}(t-\tau^{n}_{l}(t))p\varphi_{4}^{n}(-\tau^{n}_{l}(t))e^{-\int_{t-\tau^{n}_{l}(t)}^{t} \mu_{l}^{n}(s) \, ds }
\end{bmatrix},
\end{equation}
where $j=1,...,k<n.$ In fact $F(t)C^{+} \subset C^{+}$.
Moreover, for $t\geq 0$, let us define 
\begin{eqnarray*}
& V(t)=-\text{diag}(-(\eta_{h}+\mu_{h}),-(\eta_{h}+\mu_{h}+\sum_{l \neq 1}m_{1l}), -\mu_{l}^{1}(t), -\mu_{m}^{1}(t),..., \\
&\qquad\quad -(\eta_{h}+\mu_{h}),-(\eta_{h}+\mu_{h}+\sum_{l \neq n}m_{nl}), -\mu_{l}^{n}(t), -\mu_{m}^{n}(t)).
\end{eqnarray*}

Consider the system $\frac{dx}{dt}=-V(t)x(t)$ and let $\Phi(t,k)$ be its fundamental matrix. Hence, $\frac{d\Phi(t,k)}{dt}=-V(t)\Phi(t,k),\ \forall t \geq k,$ where $\Phi(k,k)=I,\ \forall k\in \mathbb{R}.$
In fact, $l(\Phi)<0,$ where $l(\Phi):=\inf \{\tilde{l}: \exists M\geq 1;\| \Phi(t,t-s)  \| \leq M e^{\tilde{l}s},\ \forall s \in \mathbb{R}, t \geq 0      \},$ since 
\begin{align}
\tiny
\Phi(t,k) = & \text{diag}\left( e^{-(\eta_{h}+\mu_{h})(t-k)}, e^{-(\eta_{h}+\mu_{h}+\sum_{l \neq 1}m_{1l})(t-k)}, e^{-\int_{k}^{t}\mu_{l}^{1}(t)dt}, e^{-\int_{k}^{t}\mu_{m}^{1}(t)dt}, \dots, \right. \nonumber \\
&\quad  \left. e^{-(\eta_{h}+\mu_{h})(t-k)},e^{-(\eta_{h}+\mu_{h}+\sum_{l \neq n}m_{nl})(t-k)}, e^{-\int_{k}^{t}\mu_{l}^{n}(t)dt}, e^{-\int_{k}^{t}\mu_{m}^{n}(t)dt} \right)
\end{align}
Consequently, analogous to \cite{xiaoMain}, consider the following linear operator $L$ on $\bar{C}_{w}$, the space of all continuous $w-$periodic functions from $\mathbb{R}$ to $\mathbb{R}^{4n}$:
$$ 
L(\varphi(.))(t):= \int_{0}^{\infty}\Phi(t,t-s)F(t-s)\varphi(t-s+ \cdot )ds.
$$ 
The basic reproduction number is defined as the spectral radius of $L(\varphi(\cdot))$ by 
\begin{equation}\label{R0}
\mathcal{R}_{0}:=r(L).
\end{equation}


\begin{lemma}\label{R0}
The linear operator $L:C_{w} \rightarrow C_{w}$ is compact.  $\mathcal{R}_{0}< \infty$ is an eigenvalue of $L$.
\end{lemma}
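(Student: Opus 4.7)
The plan is to mirror the compactness argument already invoked for the mosquito operator $\bar{L}_i$ and then apply the Krein--Rutman theorem on the positive cone of $C_w$. First I check that $L$ is a bounded operator from $C_w$ into itself. The explicit diagonal form of $\Phi(t,t-s)$ displayed just before the definition of $L$ yields $\|\Phi(t,t-s)\|\le M e^{\tilde{l} s}$ for some $\tilde{l}<0$ (since $l(\Phi)<0$), while continuity and $w$-periodicity of every coefficient appearing in $F$ imply that $\|F(t)\|$ is bounded uniformly in $t$ by some constant $\|F\|_{\infty}$. Hence
\begin{equation*}
\|L\varphi(t)\|\le \frac{M\,\|F\|_{\infty}}{|\tilde{l}|}\,\|\varphi\|_{\infty},
\end{equation*}
so the integral converges absolutely and $L$ is a bounded linear operator. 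Periodicity $L\varphi(t+w)=L\varphi(t)$ follows from the substitution $s\mapsto s+w$ together with $\Phi(t+w,t+w-s)=\Phi(t,t-s)$, the $w$-periodicity of $F$, and that of $\varphi$, giving $L\varphi\in C_w$.

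Next, for compactness I apply Arzel\`a--Ascoli, identifying $C_w$ with $C(\mathbb{R}/w\mathbb{Z},\mathbb{R}^{4n})$. Uniform boundedness of $L(\{\|\varphi\|\le 1\})$ is immediate from the estimate above; it remains to show equicontinuity. Given $\epsilon>0$, choose $T>0$ with $\int_T^{\infty}M e^{\tilde{l} s}\,ds<\epsilon$, so that the tail contribution to $\|L\varphi(t)-L\varphi(t')\|$ is bounded by $2\|F\|_{\infty}\epsilon$ uniformly in $t,t'$ and $\varphi$ in the unit ball. On the truncated range $s\in[0,T]$, the integrand depends continuously on $t$: this combines continuity of $\Phi$, uniform continuity of the delays $\tau^i(\cdot),\tau^i_l(\cdot)$ and of $\dot{\tau}^i(\cdot),\dot{\tau}^i_l(\cdot)$ on one period, uniform continuity of the periodic coefficients $b_m^i,\mu_b^i,\mu_l^i,\mu_m^i$, and uniform continuity of $\varphi$ on the compact circle $\mathbb{R}/w\mathbb{Z}$. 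A standard uniform-continuity argument on $[0,w]\times[0,T]$ then yields a modulus of continuity in $t$ independent of $\varphi$ in the unit ball. Arzel\`a--Ascoli thus gives relative compactness of $L(\{\|\varphi\|\le 1\})$, so $L$ is compact.

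Finally, inspection of $F(t)$ shows $F(t)C^{+}\subset\mathbb{R}^{4n}_{+}$, since every component of $F(t)\varphi$ is a nonnegative linear combination of values of the nonnegative test function, and $\Phi(t,t-s)$ is diagonal with positive entries; therefore $L$ leaves the positive cone $C_w^{+}$ invariant. Being a compact positive linear operator on the ordered Banach space $C_w$, $L$ satisfies the hypotheses of the Krein--Rutman theorem, so $\mathcal{R}_0=r(L)<\infty$ is an eigenvalue of $L$ whose eigenvector lies in $C_w^{+}\setminus\{0\}$. The main technical obstacle in this plan is the equicontinuity estimate on $[0,T]$: because $F(t-s)$ samples $\varphi$ at the time-dependent arguments $-\tau^i(t-s)$ and $-\tau^i_l(t-s)$, a small change in $t$ perturbs both the coefficients of $F$ and the evaluation point of $\varphi$ simultaneously, and these perturbations must be controlled uniformly in $s$; this uniformity is precisely what the exponential-decay truncation to $s\in[0,T]$, followed by joint uniform continuity of every ingredient on a compact set, is designed to deliver.
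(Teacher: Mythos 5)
Your overall strategy --- exponential decay of $\Phi$ to truncate the integral, Arzel\`a--Ascoli for compactness, Krein--Rutman for the eigenvalue --- is the same as the paper's, and your treatment of boundedness, $w$-periodicity, and positivity of $L$ is actually more complete than what the paper records. The eigenvalue conclusion is fine once compactness is secured.

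The gap sits in the equicontinuity step, precisely where you flag the main obstacle. You argue that on the truncated range $s\in[0,T]$ the integrand is jointly uniformly continuous because, among other ingredients, each $\varphi$ is uniformly continuous on the circle; but the modulus of continuity of $\varphi$ is \emph{not} uniform over the unit ball of $C_{w}$ (the ball is bounded, not equicontinuous --- think of highly oscillatory test functions), so ``a modulus of continuity in $t$ independent of $\varphi$ in the unit ball'' does not follow from what you list. The paper's own proof has the same soft spot (its constant $M_{n}$ depends on $n$, and its displayed estimate ignores the $t$-dependence of $\Phi$, of the coefficients of $F$, and of the delayed evaluation points), so you are no worse off, but neither argument closes. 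Two genuine fixes: (i) substitute $v=t-s$ and then $u=v-\sigma(v)$ in each term of $F$, where $\sigma$ is the relevant delay (these maps are monotone since $1-\dot{\tau}^{i},\,1-\dot{\tau}^{i}_{l}\geq 0$); after the substitution $\varphi$ appears only as $\varphi(u)\,du$ against bounded kernels, the $t$-dependence sits in the integration limits and in $\Phi(t,\cdot)$, and one gets $\lvert L\varphi(t)-L\varphi(t')\rvert\leq\lVert\varphi\rVert_{\infty}\,\omega(\lvert t-t'\rvert)$ with $\omega$ independent of $\varphi$; or (ii) observe that $y=L\varphi$ is the $w$-periodic solution of $y'(t)=-V(t)y(t)+F(t)\varphi(t+\cdot)$, whence $\lVert (L\varphi)'\rVert_{\infty}\leq\left(\lVert F\rVert_{\infty}+\lVert V\rVert_{\infty}\lVert L\rVert\right)\lVert\varphi\rVert_{\infty}$, so the image of the unit ball is uniformly Lipschitz and Arzel\`a--Ascoli applies directly. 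Either repair makes your (and the paper's) argument rigorous.
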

 
\begin{proof}
Let $B(0;1) \subset C_{w}$ be the unitary ball centered at the origin. For $\{ f_{n} \}_{n\geq 1} \subset B(0;1),$ we have 
$$
\| L(f_{n})(t)-L(f_{n})(t_{0}) \| \leq M_{n}\int_{0}^{\infty} e^{w(\Phi) s}(\|f_{n}(t)-f_{n}(t_{0})\|)ds
$$ 
with $M_{n}>0,$ since $F(t)$ is a linear bounded operator.
As $f_{n}$ is continuous, $$t\rightarrow t_{0} \implies \|L(f_{n})(t)-L(f_{n})(t_{0})\| \rightarrow 0 \ \forall \ t_{0} \in \mathbb{R}, n \in \mathbb{N}.$$ Therefore, $\{ f_{n} \}_{n\geq 1}$ is an equicontinuous sequence. By Ascoli-Arzel\`{a} Theorem, there exists a subsequence $\{ f_{n_{k}} \}_{k\geq 1}$ such that $L(f_{n_{k}})$ converges in the $C_{w}$ norm.
Since $\{ f_{n} \}_{n\geq 1} \subset B(0;1)$ is arbitrary, $L$ is a compact operator.
Thus, by Krein-Rutman Theorem, $\mathcal{R}_{0}$ is a positive eigenvalue of $L$.
\end{proof}

\begin{remark}
As noticed in systems (\ref{6})--(\ref{7}), the values of $\text{sgn}(\mathcal{R}_{m}^{i}-1)$, $i=1,...,n$, are of utmost importance for the basic reproduction number $\mathcal{R}_{0}.$ While defining $\mathcal{R}_{0}$, we considered that without loss of generality the assumptions $\mathcal{R}_{m}^{1},\ \mathcal{R}_{m}^{n}>1$ and $\mathcal{R}_{m}^{i_{j}}\leq 1$ for $j=1,...,k<n.$
\end{remark} 

Define $\Theta_{i}:=C([-\tau^{i}(0),0],\mathbb{R}_{+}^{2}) \times \mathbb{R}_{+} \times C([-\bar{\tau}^{i},0],\mathbb{R}_{+})$, where $\bar{\tau}^{i}:=\max\{ \tau_{l}^{i}(0), \tau_{h} \}$ and $\Theta:=\prod_{i=1}^{n}\Theta_{i}.$ Moreover, define $\Gamma_{i}=C([-\tilde{\tau},0], \mathbb{R}_{+}^{4})$ and $\Gamma=\prod_{i=1}^{n}\Gamma_{i}.$ 

\begin{lemma}\label{lema7}
Given $\theta^{i}=(\theta_{1}^{i},...,\theta_{4}^{i})\in \Theta_{i},$ then system (\ref{6})--(\ref{7}) admits a unique nonnegative solution $x(t,\theta)$, defined on $[0, \infty)$, such that $x(0, \theta)=\theta$, where $\theta=(\theta^{1},...,\theta^{n}) \in \Theta$.
\end{lemma}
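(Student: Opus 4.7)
The plan is to adapt the proof template of Lemma \ref{lemazimm} to the smaller linearized system (\ref{6})--(\ref{7}). First I would reformulate (\ref{6})--(\ref{7}) as a functional differential equation $\dot{x}(t)=g(t,x_t)$ on the Banach space $\Gamma=\prod_{i=1}^n C([-\widetilde{\tau},0],\mathbb{R}_+^4)$, where $x_t(s)=x(t+s)$, extending histories defined on the shorter intervals in $\Theta_i$ to $[-\widetilde{\tau},0]$ by the obvious embedding. The right-hand side $g$ is assembled componentwise from (\ref{6}) for indices with $\mathcal{R}_m^{i_j}>1$ and from (\ref{7}) for the remaining indices, and it is linear in the state. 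Its coefficients involve only $\beta_{mh}$, $\beta_{hm}$, $a$, $p$, $\eta_h$, $\mu_h$, the migration rates $m_{ij}$, the $w$-periodic functions $b_m^i(\cdot)$, $\mu_b^i(\cdot)$, $\mu_l^i(\cdot)$, $\mu_m^i(\cdot)$, the bounded carrying capacities $K^i(\cdot)$, and the periodic disease-free trajectory $\gamma(t)$, which is bounded in view of Lemma \ref{mainlemma} and assumption (\ref{assumption}). Hence the coefficients of $g$ are continuous and uniformly bounded on $[0,\infty)$.

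Second, since $g(t,\cdot)$ is a continuous linear operator on $\Gamma$ with operator norm bounded uniformly in $t$, its Fréchet derivative is itself (up to the trivial constant shift coming from the linearization), so $g \in C^1$ in its second argument and in particular locally Lipschitz. Invoking Theorem~2.2.3 of \cite{hale1993introduction} (exactly as in Lemma \ref{lemazimm}) yields a unique solution $x(t,\theta)$ on some maximal interval $[0,\epsilon_\theta)$ with $x(0,\theta)=\theta$. To extend to $[0,\infty)$ I would argue that, because $g(t,\cdot)$ is linear with uniformly bounded operator norm $C$, Gronwall's inequality applied to $\|x_t\|_\Gamma$ gives $\|x_t\|_\Gamma \le \|\theta\|_\Gamma e^{Ct}$ on $[0,\epsilon_\theta)$; thus the solution cannot escape every bounded set in finite time, and Theorem~2.3.1 of \cite{hale1993introduction} forces $\epsilon_\theta=\infty$.

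Third, for nonnegativity I would imitate the first-zero-crossing argument at the end of the proof of Lemma \ref{lemazimm}. Suppose $t_1>0$ is the least time at which some coordinate $x_m^i$ reaches $0$ for the first time. At $t_1$, every other coordinate (including delayed values $x_\ell^j(t_1-\tau)$ with $\tau>0$) is still nonnegative. Inspecting (\ref{6})--(\ref{7}) coordinate by coordinate, the $-c\,x_m^i(t)$-type contributions vanish at $t=t_1$, while all the remaining terms are nonnegative: for $I_h^i$ and $A_h^i$ the forcing is a nonnegative multiple of $I_m^i(t_1-\tau_h)\ge 0$, and for $A_h^i$ the cross-patch coupling $\sum_{l\ne i}m_{li}A_h^l(t_1)$ is also nonnegative; for $L_I^i$ the forcing is a nonnegative multiple of $I_m^i(t_1)\ge 0$ (using $L_s^{i*}(t)\le K^i(t)$ from Lemma \ref{mainlemma}); and for $I_m^i$ the forcing involves only nonnegative delayed values of $I_h^i$, $A_h^i$, $I_m^i$. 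Hence $\dot{x}_m^i(t_1)\ge 0$, ruling out a sign change at $t_1$, so $x(t,\theta)\in\Gamma^+$ for all $t\ge 0$.

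The conceptual content is routine given Lemma \ref{lemazimm}; the only genuine bookkeeping obstacle I foresee is keeping track of the different admissible initial-history intervals in $\Theta_i$ versus the uniform interval in $\Gamma_i$, and verifying that the coefficient $1-{L_s^{i_j}}^*(t)/K^{i_j}(t)\ge 0$ (which is what makes $L_I^{i_j}$ and $I_m^{i_j}$ nonnegative at their first zero) follows from Lemma \ref{mainlemma} together with the boundedness assumption (\ref{assumption}); otherwise the argument is fully parallel to Lemma \ref{lemazimm}.
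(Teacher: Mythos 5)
Your proposal is correct in substance but reaches global existence by a different route than the paper. You adapt the machinery of Lemma~\ref{lemazimm}: pose (\ref{6})--(\ref{7}) as an abstract functional differential equation on a common history space, invoke Hale's local existence theorem via local Lipschitz continuity, and then use linearity plus Gronwall to get the a priori bound $\|x_t\|\le\|\theta\|e^{Ct}$ that rules out finite-time blow-up. The paper instead uses the method of steps: setting $\hat{\tau}:=\min\{\tau_h,\tau_l^i(0),\tau^i(0)\}_{i}>0$, on $[0,\hat{\tau}]$ every delayed argument $t-\tau^i(t)$ lies in $[-\tau^i(0),0]$, so the system reduces to a nonautonomous linear ODE with known forcing from $\theta$, solved by variation of parameters, and induction on intervals $[0,k\hat{\tau}]$ gives the global solution. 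Both are valid; your Gronwall argument is arguably cleaner for showing global existence of a linear system, while the method of steps is more elementary and sidesteps the history-space mismatch entirely. The nonnegativity arguments coincide (first zero crossing), and your explicit check that $1-{L_s^{i_j}}^{*}(t)/K^{i_j}(t)\ge 0$ is a worthwhile detail the paper leaves implicit.

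The one point you flag as a ``bookkeeping obstacle'' but do not resolve is in fact the crux of why the lemma can be stated with initial data on the \emph{short} intervals $[-\tau^i(0),0]$ rather than $[-\sup_t\tau^i(t),0]$: there is no canonical embedding of $C([-\tau^i(0),0])$ into $C([-\widetilde{\tau},0])$, and an arbitrary extension could in principle change the solution. The resolution, which the paper states explicitly, is that $1-\dot{\tau}^i(t)\ge 0$ makes $t\mapsto t-\tau^i(t)$ nondecreasing, so $t-\tau^i(t)\ge -\tau^i(0)$ for all $t\ge 0$; the right-hand side therefore never samples the history on $[-\widetilde{\tau},-\tau^i(0))$, the extension is irrelevant, and uniqueness on $[0,\infty)$ holds for data given only on the short intervals. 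You should add this observation; with it, your argument is complete.
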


\begin{proof}
Let $\hat{\tau}:=\min\{\tau_{h},\tau_{l}^{i}(0), \tau^{i}(0)\}_{i=1,...,n}$. Since $1-\dot{\tau}^{i}(t),\ 1-\dot{\tau_{l}}^{i}(t)\geq 0$, we have $-{\tau}^{i}(0)\leq \hat{\tau}-{\tau}^{i}(\hat{\tau}) \leq 0$ and $-{\tau_{l}}^{i}(0)\leq \hat{\tau}-{\tau_{l}}^{i}(\hat{\tau}) \leq 0, \ \forall 1\leq i \leq n.$ Thus, given $\theta \in \Theta,$ by the Variation of Parameters  Formula, the solution $x(t,\theta)=(x_{1}(t,\theta),...,x_{n}(t,\theta))$ with $x(0,\theta)=\theta$ exists and is unique on $[0,\hat{\tau}]$, since for $t\in [0,\hat{\tau}]$, $x_{m}^{i_{j}}(t-\tau^{i_{j}}(t))=\theta_{m}^{i_{j}}(t-\tau^{i_{j}}(t))$ for $m=1,2$ and $x_{4}^{i}(t-\tau^{i}(t))=\theta_{4}^{i}(t-\tau^{i}(t))$ for all $i=1,...,n,\ 1\leq j\leq k<n.$ Analogously, by taking $t \in [\hat{\tau},2\hat{\tau}],$ we can prove existence and uniqueness for the interval $[0,2\hat{\tau}].$ Thus, by induction, $\forall k \in \mathbb{N},$ the solution is unique on $[0,k\hat{\tau}]$. Hence, $x(t,\theta)$ is unique on $[0,\infty )$. The nonnegativity of $x(t,\theta)$ comes from the fact that if $x_{i}(t,\theta)=0$ for some $t> 0,$ then $\frac{dx_{i}(t,\theta)}{dt}\geq0,\ \forall i$.
\end{proof}

\begin{remark}
Let $\varphi=(\varphi_{1},\dots,\varphi_{n}) \in \Gamma$, $\theta=(\theta_{1},\dots,\theta_{n}) \in \Theta$, where $\theta_{i}=\varphi_{i}$ on the intervals where both are defined, for $i=1,\dots,n$. So, by Lemma \ref{lema7} and analogous to Lemma \ref{lemazimm}, the unique nonnegative solution maps $x(t,\varphi), y(t,\theta)$ of system (\ref{6})--(\ref{7}) satisfy $x(t,\varphi)=$ $y(t,\theta),\ \forall t\geq 0$.\end{remark}

Let $P(t)\theta:=x(t,\theta)$ be the solution map of system (\ref{6})--(\ref{7}), defined on $[0,\infty)$. Thus, analogous to Lemma 3.5 in \cite{lou2017}, the solution $P(t)$ is a $\omega-$periodic semiflow on $\Theta$. Define its Poincar\'e map $\mathcal{P}:=\mathcal{P}(w)$.
\begin{remark}

Given $t \geq 0,$ 
$\theta_1, \theta_2 \in \Theta,$ such that $\theta_1 \geq \theta_2$, then $\theta_1-\theta_2 \in \Theta$, and due to Lemma \ref{lema7}, 
 $\mathcal{P}(t)(\theta_1 ) - \mathcal{P}(t)(\theta_2 )=\mathcal{P}(t)(\theta_1 - \theta_2) \geq 0.$ 
 Thus, $\forall t\geq 0$, $\mathcal{P}(t)$ is monotone over $\Theta.$
\end{remark}

Consider $\tilde{\mathcal{P}}:\Gamma \rightarrow \Gamma, \ \mathcal{P}: \Theta \rightarrow \Theta$ are Poincar\'e maps of system (\ref{6})--(\ref{7}). Similar to Lemma \ref{lema9}, it follows that both are compact operators. In addition, by \cite{xiaoMain} and \cite{lou2017} 
we have the following conclusion.

\begin{lemma}
$r(\tilde{\mathcal{P}})=r(\mathcal{P}).$ Hence, ${\rm sgn}(\mathcal{R}_{0}-1)={\rm sgn}(r(\mathcal{P})-1).$
\end{lemma}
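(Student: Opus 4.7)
The plan is to deduce the identity $r(\tilde{\mathcal{P}})=r(\mathcal{P})$ from the fact that the two Poincar\'{e} maps are defined on phase spaces that differ only by redundant initial data, and then to combine this with the Wang--Zhao style characterization of $\mathcal{R}_{0}$ already set up through the operator $L$ in \eqref{R0}. Because $\Gamma_{i}=C([-\tilde{\tau},0],\mathbb{R}_{+}^{4})$ uses a single delay window $\tilde{\tau}$ for all four components, while $\Theta_{i}$ uses the component-specific (and smaller) windows $[-\tau^{i}(0),0]$, $\{0\}$ and $[-\bar{\tau}^{i},0]$, there is a canonical continuous restriction $\pi:\Gamma\rightarrow\Theta$ and a continuous extension $j:\Theta\rightarrow\Gamma$ (e.g.\ by constant extension to the left). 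The Remark following Lemma \ref{lema7} already shows that when $\varphi\in\Gamma$ and $\theta\in\Theta$ coincide on the sub-intervals where both are defined, the two solution maps produce the same trajectory; in semigroup language this means $\pi\circ\tilde{\mathcal{P}}=\mathcal{P}\circ\pi$ and, after one period $w$, $\pi\circ\tilde{\mathcal{P}}\circ j=\mathcal{P}$.

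My first step would be to promote this intertwining into a spectral correspondence. Given any nonzero eigenpair $\mathcal{P}\theta^{*}=\lambda\theta^{*}$ produced by the Krein--Rutman theorem (applicable since $\mathcal{P}$ is a compact positive operator on the cone $\Theta$), I would define $\tilde{\theta}^{*}:=\tilde{\mathcal{P}}(j\theta^{*})\in\Gamma$. Using the $w$-periodicity of the semiflow and the intertwining $\pi\circ\tilde{\mathcal{P}}=\mathcal{P}\circ\pi$, one checks that $\tilde{\mathcal{P}}\tilde{\theta}^{*}=\lambda\tilde{\theta}^{*}$, which shows $r(\mathcal{P})\le r(\tilde{\mathcal{P}})$. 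The reverse inequality is obtained by the symmetric argument: if $\tilde{\mathcal{P}}\varphi^{*}=\mu\varphi^{*}$ with $\mu=r(\tilde{\mathcal{P}})$, then $\theta^{*}:=\pi\varphi^{*}\in\Theta$ satisfies $\mathcal{P}\theta^{*}=\pi\tilde{\mathcal{P}}\varphi^{*}=\mu\theta^{*}$, and one must verify that $\theta^{*}\neq 0$; this is where the content of the argument really lies, because a purely trivial restriction would only give one-sided information. Positivity of the eigenvector $\varphi^{*}$ in $\Gamma^{+}$, together with the fact that the dynamics actually reads $\varphi^{*}$ through the delay kernels on exactly the subintervals defining $\Theta$, is what forces $\pi\varphi^{*}\neq 0$; otherwise $\tilde{\mathcal{P}}\varphi^{*}$ would vanish and so would $\mu\varphi^{*}$.

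With $r(\tilde{\mathcal{P}})=r(\mathcal{P})$ established, the second assertion follows by invoking the general principle in \cite{xiaoMain} and \cite{lou2017}: for a periodic delay system whose next-generation operator is $L$ defined in \eqref{R0}, one has
\begin{equation*}
\mathrm{sgn}(\mathcal{R}_{0}-1)=\mathrm{sgn}(r(L)-1)=\mathrm{sgn}(r(\tilde{\mathcal{P}})-1).
\end{equation*}
This reduction relies on the dissipativity and compactness already verified (Lemma \ref{R0} for $L$, Lemma \ref{lema9}-type argument for $\tilde{\mathcal{P}}$), together with the decomposition of the linearized infective subsystem as $F(t)-V(t)$ with $l(\Phi)<0$. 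Combining the two displays yields $\mathrm{sgn}(\mathcal{R}_{0}-1)=\mathrm{sgn}(r(\mathcal{P})-1)$, as required.

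The main obstacle I expect is the verification in the second step that $\pi\varphi^{*}\neq 0$, and more broadly the bookkeeping needed to make the intertwining $\pi\circ\tilde{\mathcal{P}}=\mathcal{P}\circ\pi$ precise across the different delay intervals $\tau^{i}(0)$, $\bar{\tau}^{i}$ and $\tilde{\tau}$. A careless extension $j$ may produce an initial history in $\Gamma$ whose trajectory over $[0,w]$ involves components that were not specified in $\Theta$, so one must check that these extra components are not read by the vector field of \eqref{6}--\eqref{7}. Once one isolates, component by component, the minimal history window each delay term in $F(t)$ actually queries, this compatibility is automatic, and the eigenvalue transfer goes through cleanly.
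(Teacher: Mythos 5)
Your proposal is correct and follows essentially the route the paper intends: the paper itself gives no argument beyond noting that both Poincar\'e maps are compact and citing \cite{xiaoMain} and \cite{lou2017}, and your restriction/extension intertwining $\pi\circ\tilde{\mathcal{P}}=\mathcal{P}\circ\pi$ with the Krein--Rutman eigenvector transfer is precisely the content of the cited Lou--Zhao lemma, while the identity $\mathrm{sgn}(\mathcal{R}_{0}-1)=\mathrm{sgn}(r(L)-1)=\mathrm{sgn}(r(\tilde{\mathcal{P}})-1)$ is, as you say, the general threshold theorem for periodic delay systems that the paper also delegates to those references. The one point to tighten is that the identities $\tilde{\mathcal{P}}\tilde{\theta}^{*}=\lambda\tilde{\theta}^{*}$ and ``$\pi\varphi^{*}=0\Rightarrow\tilde{\mathcal{P}}\varphi^{*}=0$'' only hold once the history window has been flushed, i.e.\ for $\tilde{\mathcal{P}}^{n}$ with $nw\geq\widetilde{\tau}$ rather than for $\tilde{\mathcal{P}}$ itself when $\widetilde{\tau}>w$; since $r(\tilde{\mathcal{P}}^{n})=r(\tilde{\mathcal{P}})^{n}$ this is harmless, but it should be stated.
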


Given $\epsilon > 0,$ let us consider the following pertubation of system (\ref{6})--(\ref{7}).

For $\ j \in {1,...,k}$, where $1<i_{1}<...<i_{k}<n$ are such that $\mathcal{R}_{m}^{i_{j}}>1$:
\begin{equation}
\begin{aligned}[t]
\frac{dI_{h}^{{i}_{j}}}{dt} =&(1-a)\beta_{mh}b_{m}^{i_{j}}(t-\tau_{h}) ({S_{h}^{i_{j}}}^{*}+\epsilon) I_{m}^{i_{j}}(t-\tau_{h})e^{-\mu_{h}\tau_{h}}-(\eta_h + \mu_{h})I^{i_{j}}_{h}(t)\\
\frac{dA_{h}^{i_{j}}}{dt} =& a\beta_{mh}b_{m}^{i_{j}}(t-\tau_{h}) ({S_{h}^{i_{j}}}^{*}+\epsilon)I_{m}^{i_{j}}(t-\tau_{h})e^{-\mu_{h}\tau_{h}} - (\eta_{h} + \mu_{h}+\sum_{l \neq i_{j}} m_{i_{j}l})A_{h}^{i_{j}}(t)+ \sum_{l \neq i_{j}} m_{li_{j}}A_{h}^{l}(t) \\
\frac{dL_{I}^{i_{j}}}{dt} =&\mu^{i_{j}}_{b}(t)\left(1-\frac{{L_{s}^{i_{j}}}^{*}(t)-\epsilon}{K^{i_{j}}(t)}\right)pI_{m}^{i}(t)-\mu_{l}(t)L_{I}^{i_{j}}(t)\\
\frac{dI_{m}^{i_{j}}}{dt} =& \left(1-\dot{\tau}^{i_{j}}_{l}(t)\right)\mu_{b}^{i_{j}}(t-\tau^{i_{j}}_{l}(t))\left(1-\frac{{{L_{s}^{i_{j}}}^{*}}(t-\tau^{i_{j}}_{l}(t))-\epsilon}{K^{i_{j}}(t-\tau^{i_{j}}_{l}(t))}\right)pI_{m}^{i_{j}}(t-\tau^{i_{j}}_{l}(t))e^{-\int_{t-\tau^{i_{j}}_{l}(t)}^{t} \mu^{i_{j}}_{l}(s) \, ds } \\
&+ \left(1-\dot{\tau}^{i_{j}}(t)\right)\beta_{hm}b^{i_{j}}(t-\tau^{i_{j}}(t))(I_{h}^{i_{j}}(t-\tau^{i_{j}}(t))+A_{h}^{i_{j}}(t-\tau^{i_{j}}(t)))({S_{m}^{i_{j}}}^{*}(t-\tau^{i_{j}}(t))+\epsilon)e^{-\int_{t-\tau^{i_{j}}(t)}^{t} \mu^{i_{j}}_{l}(s) \, ds }\\
& -\mu_{m}^{i_{j}}(t) I_{m}^{i_{j}}(t), 
\end{aligned}\label{lastsystem}
\end{equation}
and for $1\leq i\leq n$ such that $\mathcal{R}_{m}^{i}\leq 1$:
\begin{equation}
\begin{aligned}[t]
\frac{dI_{h}^{i}}{dt} =&(1-a)\beta_{mh}b_{m}^{i}(t-\tau_{h}) ({S_{h}^{i}}^{*} +\epsilon)I_{m}^{i}(t-\tau_{h})e^{-\mu_{h}\tau_{h}}-(\eta_h + \mu_{h})I^{i}_{h}(t)\\
\frac{dA_{h}^{i}}{dt} =& a\beta_{mh}b_{m}^{i}(t-\tau_{h}) ({S_{h}^{i}}^{*}+\epsilon)I_{m}^{i}(t-\tau_{h})e^{-\mu_{h}\tau_{h}} - (\eta_{h} + \mu_{h}+\sum_{l \neq i} m_{il})A_{h}^{i}(t)+ \sum_{l \neq i} m_{li}A_{h}^{l}(t) \\
\frac{dL_{I}^{i}}{dt} =&\mu^{i}_{b}(t)pI_{m}^{i}(t)-\mu^{i}_{l}(t)L_{I}^{i}(t)\\
\frac{dI_{m}^{i}}{dt} =& \left(1-\dot{\tau}^{i}_{l}(t)\right)\mu_{b}^{i}(t-\tau^{i}_{l}(t))pI_{m}^{i}(t-\tau^{i}_{l}(t))e^{-\int_{t-\tau^{i}_{l}(t)}^{t} \mu_{l}^{i}(s) \, ds }   \\
& +\left(1-\dot{\tau}^{i_{j}}(t)\right)\beta_{hm}b^{i_{j}}(t-\tau^{i_{j}}(t))(I_{h}^{i_{j}}(t-\tau^{i_{j}}(t))+A_{h}^{i_{j}}(t-\tau^{i_{j}}(t))) \ \epsilon \ e^{-\int_{t-\tau^{i_{j}}(t)}^{t} \mu^{i_{j}}_{l}(s) \, ds }\\
&   -\mu_{m}^{i}(t) I_{m}^{i}(t).
\end{aligned}\label{lastsystem1}
\end{equation}
Analogous to Lemma \ref{lema7}, for any fixed $\theta \in \Theta,$ this system admits a unique nonnegative solution $x^{\epsilon}(t)$ in $[0,\infty),$ where $x^{\epsilon}(0)=\theta$.

Fix $\epsilon > 0$, $t \geq 0$ let $\hat{\Theta}:=\prod_{i=1}^{n}\hat{\Theta}_{i},$ where $\hat{\Theta}_{i}=C([-\tau^{i}(0),0],\mathbb{R}^{2}) \times \mathbb{R} \times C([-\bar{\tau}^{i},0],\mathbb{R})$ and consider the solution map $((\mathcal{P}_{\epsilon})_{2}(t),(\mathcal{P}_{\epsilon})_{3}(t),(\mathcal{P}_{\epsilon})_{4}(t),(\mathcal{P}_{\epsilon})_{8}(t))=\mathcal{P}_{\epsilon}(t):\hat{\Theta} \rightarrow \hat{\Theta},$ which is a linear operator and leaves the space $\Theta \subset \hat{\Theta}$ invariant. Let us prove the following lemma regarding the compactness of the linear operator $\mathcal{P}_{\epsilon}.$ 

\begin{lemma}\label{lema9}
Fix $t \geq 0,\ \epsilon >0$. Then the linear operator $\mathcal{P}_{\epsilon}(t)$ is compact.
\end{lemma}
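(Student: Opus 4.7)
The plan is to apply the Arzelà--Ascoli theorem componentwise to the image of an arbitrary bounded set, in exactly the spirit of the compactness argument for $L$ in Lemma~\ref{R0}. Take a bounded sequence $\{\theta_n\}_{n\geq 1}\subset\hat{\Theta}$ and let $x^\epsilon(\cdot,\theta_n)$ be the corresponding unique nonnegative solutions of the linearized perturbed system (\ref{lastsystem})--(\ref{lastsystem1}) produced by an analogue of Lemma~\ref{lema7}. The goal is to extract a subsequence along which $\mathcal{P}_\epsilon(t)\theta_{n_k}$ converges in $\hat{\Theta}$.

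The first step is uniform boundedness. The system (\ref{lastsystem})--(\ref{lastsystem1}) is linear in the infective variables with $w$-periodic continuous coefficients, and the periodic equilibrium components ${S_h^i}^*$, ${L_s^{i_j}}^*(t)$, ${S_m^{i_j}}^*(t)$ supplied by Lemma~\ref{mainlemma}, together with $b_m^i(\cdot)$, $\mu_b^i(\cdot)$, $\mu_m^i(\cdot)$, $\mu_l^i(\cdot)$ and the exponential kernels $e^{-\int_{t-\tau^i(t)}^{t}\mu_m^i(s)\,ds}$, $e^{-\int_{t-\tau_l^i(t)}^{t}\mu_l^i(s)\,ds}$, are uniformly bounded on $[0,t]$ under assumption (\ref{assumption}). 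Passing to the integrated form of (\ref{lastsystem})--(\ref{lastsystem1}) and invoking Gronwall's inequality then yields a constant $C_1(t)>0$, independent of $n$, such that $\|x^\epsilon(s,\theta_n)\|_{\mathbb{R}^{4n}}\leq C_1(t)\|\theta_n\|_{\hat{\Theta}}$ for every $s\in[-\tilde{\tau},t]$. The second step is equicontinuity: substituting this bound into the right-hand sides of (\ref{lastsystem})--(\ref{lastsystem1}) shows $\bigl|\tfrac{d}{ds}x^\epsilon(s,\theta_n)\bigr|\leq C_2(t)\|\theta_n\|_{\hat{\Theta}}$ for $s\in[0,t]$, hence equicontinuity of $\{x^\epsilon(\cdot,\theta_n)\}$ on $[0,t]$ by the mean value theorem.

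With uniform boundedness and equicontinuity in hand, I would apply the Arzelà--Ascoli theorem to each continuous-function factor of the product $\hat{\Theta}$ and the Bolzano--Weierstrass theorem to the scalar $L_I^i$ factors, and combine the finitely many diagonal subsequences to produce a convergent subsequence of $\{\mathcal{P}_\epsilon(t)\theta_n\}$ in $\hat{\Theta}$. The main obstacle is the portion of the history segment corresponding to $s\in[-\tilde{\tau},-t)$ when $t<\tilde{\tau}$: there $(\mathcal{P}_\epsilon(t)\theta_n)(s)=\theta_n(t+s)$ is merely a translate of the initial datum, and a bounded sequence in $C([-\tilde{\tau},0])$ is not automatically equicontinuous. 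This is handled by noting that compactness is really invoked only at the Poincaré time $t=w$, which can be assumed to satisfy $w\geq\tilde{\tau}$ so that the entire window $[-\tilde{\tau},0]$ corresponds to forward-in-time solution values governed by the smoothing ODE dynamics; for smaller $t$ the semigroup identity $\mathcal{P}_\epsilon(t+kw)=\mathcal{P}_\epsilon(t)\circ\mathcal{P}_\epsilon(w)^k$ together with boundedness of $\mathcal{P}_\epsilon(t)$ still yields compactness of the iterates, which is what is actually needed in the subsequent Krein--Rutman and threshold-dynamics arguments.
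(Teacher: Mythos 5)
Your proposal is correct, and it takes a genuinely different --- and more careful --- route than the paper. The paper's own proof is very short: it invokes the Uniform Boundedness Principle to obtain $M:=\sup_{t,\norm{\theta}\leq 1}\norm{\mathcal{P}_{\epsilon}(t)\theta}<\infty$, observes that $\norm{\mathcal{P}_{\epsilon}(t)\theta_{1}-\mathcal{P}_{\epsilon}(t)\theta_{2}}\leq M\norm{\theta_{1}-\theta_{2}}$, calls the family $\{\mathcal{P}_{\epsilon}(t)\}_{t}$ ``equicontinuous,'' and concludes compactness from Ascoli--Arzel\`a. That is equicontinuity of the operators as functions of the \emph{initial datum} $\theta$, which is automatic for any uniformly bounded family of linear operators and cannot by itself yield compactness (otherwise every bounded linear operator would be compact); what Ascoli--Arzel\`a actually requires is equicontinuity of the image segments $s\mapsto(\mathcal{P}_{\epsilon}(t)\theta_{n})(s)$ in the history variable $s$, uniformly over the bounded set of data. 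Your proof supplies exactly that: Gronwall gives the uniform bound on $[0,t]$, the differential equation then bounds $\frac{d}{ds}x^{\epsilon}(s,\theta_{n})$ uniformly, and Ascoli--Arzel\`a applies factor by factor. You also correctly flag the one genuine obstruction, which the paper's statement glosses over: for $t<\widetilde{\tau}$ part of the history segment is a mere translate of $\theta_{n}$ and carries no smoothing, so $\mathcal{P}_{\epsilon}(t)$ is not compact for small $t$ in general. Your resolution --- that compactness is only ever invoked at the Poincar\'e time $w$ (or for sufficiently high iterates), where the whole window $[-\widetilde{\tau},0]$ consists of forward solution values --- is the standard fix and is consistent with the paper's own later appeal to Theorem 3.6.1 of Hale for compactness of iterates of the Poincar\'e map. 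In short, your argument proves the statement the paper actually needs, whereas the paper's written proof does not establish compactness as stated.
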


\begin{proof}
Let $\epsilon\geq 0$, and consider the solution map of system (\ref{lastsystem})--(\ref{lastsystem1}) $x^{\epsilon} \colon \mathbb{R}_{+}\times \hat{\Theta}\rightarrow \hat{\Theta} $, given by $x^{\epsilon}(t,\theta)=\mathcal{P}_{\epsilon}(t)(\theta).$ Let $\mathcal{F}_{\epsilon}=\{\mathcal{P}_{\epsilon}(t)\}_{t \in \mathbb{R}^{+}}$ be a family of continuous linear operators. Given $\theta \in \hat{\Theta},$ similar to the proof of Lemma \ref{lemazimm}, it is clear that $\{\mathcal{P}_{\epsilon}(t)(\theta)\}_{t \in \mathbb{R}^{+}}$ is a bounded set in $\hat{\Theta}$. Therefore, by the Uniform Boundedness Principle, $M:=\sup_{t\in \mathbb{R}_{+}, \norm{\theta}\leq 1}\norm{\mathcal{P}_{\epsilon}(t)(\theta)}_{\hat{\Theta}} <\infty$. Now, it is clear that $\mathcal{F}_{\epsilon}$ is an equicontinuous family, since given $\varepsilon>0,$ $$\norm{\mathcal{P}_{\epsilon}(t)(\theta_{1})-\mathcal{P}_{\epsilon}(t)(\theta_{2})}\leq M\norm{\theta_{1}-\theta_{2}}<\varepsilon,$$ if $\norm{\theta_{1}-\theta_{2}}<\varepsilon /M.$
Lastly, by Ascoli-Arzel\`{a} Theorem, for $t\geq 0$, $\mathcal{P}_{\epsilon}(t)$ is a compact linear operator.
\end{proof}

Now we are in position to prove the Threshold Dynamics Theorem, using arguments presented in \cite{xiaoMain}. 
Let $\tilde{\tau}^{i}:=\max \{\tau_{h}, \tau^{i}(0), \tau_{l}^{i}(0)\}$, $\hat{\tau}^{i}:=\max \{\tau^{i}(0), \tau_{l}^{i}(0)\}$, and let $t \geq 0$. Therefore, we define
\begin{align*}
\Delta(t) := \Bigg\{ & \varphi \in \prod_{i=1}^{n}\mathbb{R}_{+}\times C([-\tau^{i}(0),0],\mathbb{R}_{+}^{2})\times C([-\tau_{l}^{i}(0),0],\mathbb{R}_{+}^{2}) \\
& \times C([-\hat{\tau}^{i},0],\mathbb{R}_{+}) \times \mathbb{R}_{+} \times  C([-\tilde{\tau}^{i},0],\mathbb{R}_{+}) \;|\; \varphi_{4}^{i}(s)+\varphi_{5}^{i}(s)\leq K^{i}(t+s), \forall s\in [-\tilde{\tau}^{i},0], t \in \mathbb{R}_{+}, \\ & \varphi_{7}^{i}(0)=\int_{-\tau^{i}(0)}^{0}b_{m}^{i}(x)\beta_{hm}S_{m}^{i}(x) (I_{h}^{i}(x)+A_{h}^{i}(x))dx
\Bigg\}.
\end{align*}

\begin{theorem}\label{principalthm} 
If $\mathcal{R}_{0}< 1$, then $\gamma(t)=(x_{1},...,y_{i_{1}}(t),...,y_{i_{k}}(t),...,x_{n})$, a periodic solution of system (\ref{0})--(\ref{0-1}) without the $E_h^i$ and $R_h^i$ equations, is globally stable in $\Delta(0)-\{(x_{1},...,x_{n})  \} $. If $\mathcal{R}_{0}> 1,$ then there exists $\delta>0$ such that  
$$ 
\forall \varphi \in \{ v \in  \Delta(0):\ v_l^{i_j}(0)>0, \forall \ 1\leq j  \leq k, \ l\ =\ 2,3,4,8 \},
$$ 
we have 
$$ 
\liminf_{t\to\infty}(I_{h}^{i_j}(t), A_{h}^{i_j}(t),L_{I}^{i_j}(t),I_{m}^{i_j}(t))\geq (\delta, \delta, \delta, \delta), \forall \ 1\leq j\leq k,
$$

where $(S_h^1(t), \dots , I_m^1(t), \dots, S_h^n(t), \dots, I_m^n(t) ) = x(t, \varphi)$ is the unique solution of system (\ref{0}). 
\end{theorem}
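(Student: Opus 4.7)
The plan is to treat the two cases separately by following the general threshold dynamics framework of \cite{xiaoMain} (based on Zhao's persistence machinery from \cite{zhao2017}), using the Poincaré map $\mathcal{P}$ and its perturbation $\mathcal{P}_{\epsilon}$ from Lemma \ref{lema9} and the already established identification $\operatorname{sgn}(\mathcal{R}_{0}-1)=\operatorname{sgn}(r(\mathcal{P})-1)$.

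For the case $\mathcal{R}_{0}<1$, I would first invoke Lemma \ref{mainlemma} to reduce to the dynamics of the infective compartments $(I_{h}^{i},A_{h}^{i},L_{I}^{i},I_{m}^{i})$, whose trajectories are eventually dominated componentwise by solutions of the linearization (\ref{6})--(\ref{7}) around $\gamma(t)$. More precisely, given a solution of the full system (\ref{0})--(\ref{0-1}) starting in $\Delta(0)$, Lemma \ref{mainlemma} guarantees that the susceptible mosquito and larvae components converge to the disease-free periodic trajectory, so that for any $\epsilon>0$ and $t$ large enough the infective equations are bounded above (coefficientwise) by the $\epsilon$-perturbed linear system (\ref{lastsystem})--(\ref{lastsystem1}). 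By the comparison principle for cooperative delay systems, the infective components of $x(t,\varphi)$ are dominated by the corresponding solution of the linear system, whose Poincaré map is $\mathcal{P}_{\epsilon}$. Since $r(\mathcal{P})<1$ and the spectral radius depends continuously on $\epsilon$, we can fix $\epsilon>0$ small enough so that $r(\mathcal{P}_{\epsilon})<1$, which forces the linearized trajectory to decay geometrically along $t=nw$; combined with the $w$-periodicity of $\mathcal{P}_{\epsilon}(\cdot)$ and the uniform boundedness of solutions, the infective compartments decay to zero as $t\to\infty$. Feeding this back into Lemma \ref{mainlemma} gives global attraction to $\gamma(t)$, while local stability follows from the linearized Poincaré map having $r(\mathcal{P})<1$.

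For the case $\mathcal{R}_{0}>1$, I would set up a uniform persistence argument along the lines of \cite[Thm.~3.1.1 and Thm.~1.3.1]{zhao2017}. Define the phase space $X=\Delta(0)$ with the decomposition
\begin{equation*}
X_{0}:=\{v\in\Delta(0):\,v_{l}^{i_{j}}(0)>0,\ 1\le j\le k,\ l=2,3,4,8\},\qquad \partial X_{0}:=X\setminus X_{0}.
\end{equation*}
Invariance of $X_{0}$ under the Poincaré map $\mathcal{P}$ of the full infective system follows from the positivity statements of Lemma \ref{lema7} together with the explicit structure of the transmission terms (any time an infective variable vanishes, its derivative is nonnegative, and a strictly positive infective initial datum cannot return to zero). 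Dissipativity, i.e.\ the existence of a compact global attractor for $\mathcal{P}|_{X}$, follows from Lemma \ref{lemazimm} together with the compactness of $\mathcal{P}$ (Lemma \ref{lema9}). The decisive step is then to show that $\gamma$ is a uniform weak repeller in $X_{0}$: there exists $\eta>0$ such that
\begin{equation*}
\limsup_{n\to\infty}\,\bigl\|\mathcal{P}^{n}\varphi - \gamma(0)\bigr\|\ge\eta\quad\text{for every }\varphi\in X_{0}.
\end{equation*}
Arguing by contradiction, if $\limsup$ were smaller than any given $\eta$ for some $\varphi\in X_{0}$, then for sufficiently large $t$ the full nonlinear infective equations would dominate the $\epsilon$-perturbed linear system (\ref{lastsystem})--(\ref{lastsystem1}) in the opposite direction, producing a subsolution whose Poincaré map is $\mathcal{P}_{\epsilon}$. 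By continuity of the spectral radius, choose $\epsilon>0$ so small that $r(\mathcal{P}_{\epsilon})>1$; the Krein--Rutman eigenvector of $\mathcal{P}_{\epsilon}$ lies in the positive cone (Lemma \ref{lema9} and its consequences), so projecting the subsolution against this eigenvector yields a scalar sequence that grows geometrically, contradicting the assumed smallness of $\limsup$. Once weak uniform persistence is obtained, strong uniform persistence and the existence of $\delta>0$ with $\liminf_{t\to\infty}(I_{h}^{i_{j}},A_{h}^{i_{j}},L_{I}^{i_{j}},I_{m}^{i_{j}})\ge(\delta,\delta,\delta,\delta)$ follow from the general persistence theorems in \cite{zhao2017}, since $\mathcal{P}$ is compact and $X_{0}$ is positively invariant with $\gamma(0)$ as an isolated invariant set in $\partial X_{0}$ whose stable manifold misses $X_{0}$.

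The main obstacle is the uniform weak persistence step for $\mathcal{R}_{0}>1$, specifically the quantitative comparison between the fully coupled infective subsystem and the $\epsilon$-perturbed linearization. What has to be controlled carefully is the cross-patch coupling via the travel matrix $(m_{ij})$, because zero-infective patches ($\mathcal{R}_{m}^{i}\le 1$) only receive infection through migration, so the $\mathcal{P}_{\epsilon}$-eigenvector argument must yield strict positivity across \emph{all} patches and not merely on the index set $\{i_{1},\dots,i_{k}\}$; irreducibility-type assumptions on $(m_{ij})$ (implicit in the definition of $X_{0}$, which only requires positivity on the $i_{j}$-coordinates) are what ultimately transfer the exponential growth from the patches with $\mathcal{R}_{m}^{i_{j}}>1$ to the whole cross-patch infective vector and close the contradiction.
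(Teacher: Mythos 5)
Your proposal follows essentially the same route as the paper: domination of the infective subsystem by the $\epsilon$-perturbed linear systems (\ref{lastsystem})--(\ref{lastsystem1}), continuity of the spectral radius of the compact Poincar\'e maps $\mathcal{P}_{\epsilon}$ together with Krein--Rutman, and Zhao's persistence machinery (isolated invariant set, empty stable-manifold intersection, uniform persistence) for the case $\mathcal{R}_{0}>1$. The only step the paper makes explicit that you gloss over is that Lemma \ref{mainlemma} applies to the full system only after passing to the totals $\mathcal{M}^{i}=S_{m}^{i}+E_{m}^{i}+I_{m}^{i}$ and $L^{i}=L_{s}^{i}+L_{I}^{i}$, which satisfy a system topologically equivalent to (\ref{3}); this is a routine fix.
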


\begin{proof}
Let $\mathcal{R}_{0}<1,\ \epsilon > 0,$  $\tilde{P}\colon\mathbb{R}_{+} \times \Delta(0)\rightarrow \Delta(t)$ and ${P}_{\epsilon} \colon \mathbb{R}_{+} \times \hat{\Theta}^{+}\rightarrow \hat{\Theta}^{+}.$ According to Lemma \ref{lema9}, $\forall t\geq 0,$ ${P}_{\epsilon}(t)$ is a compact operator. Defining ${P}_{\epsilon}:={P}_{\epsilon}(w),$ by Krein-Rutman Theorem we know that $r({P}_{\epsilon})<\infty$ is an eigenvalue of $P_{\epsilon}$ whose eigenvector lies on $\hat{\Theta}^{+}\setminus \{0\}.$ 

As $\forall \epsilon \geq 0$, $P_{\epsilon}$ is a compact operator, by the continuity of the spectral radius of compact operators and by \cite{xiaoMain}, take $\epsilon>0$ small enough such that $r(P_{\epsilon})<1.$ By Lemma 1 of \cite{wang2017dynamics}, given $\varphi \in \hat{\Theta}^{+},$ we have 
\begin{equation}\label{epsilon}
P_{\epsilon}(t,\varphi)\rightarrow 0\in \hat{\Theta}^{+} \text {as } t\rightarrow \infty.
\end{equation}

On the other hand, in system (\ref{0})--(\ref{0-1}) without the $E_h^i$ and $R_h^i$ equations define the change of variables $\mathcal{M}^{i}(t)=(S_{m}^{i}+E_{m}^{i}+I_{m}^{i})(t),$ $L^{i}(t)=(L^{i}_{s}+L^{i}_{I})(t)$, that will produce the following system: \begin{equation}
\begin{aligned}[t]
\frac{{dL}^{i}}{dt} &= \mu^{i}_{b}(t)\left(1-\frac{L^{i}(t)}{K^{i}(t)}\right)\mathcal{M}^{i}(t)-\left(\eta^{i}_{l}(t)+\mu^{i}_{l}(t)\right)L^{i}(t) \\
\frac{d\mathcal{M}^{i}}{dt} &=((1-\dot{\tau}^{i}_{l}(t))\mu^{i}_{b}(t-\tau^{i}_{l}(t))\left(1-\frac{L^{i}(t-\tau^{i}_{l}(t))}{K^{i}(t-\tau^{i}_{l}(t))}\right) \\
&\quad \times \left(\mathcal{M}^{i}(t-\tau^{i}_{l}(t))\right)e^{-\int_{t-\tau_{l}(t)}^{t} \mu_{l}^{i}(s) \, ds } - \mu_{m}^{i}(t)\mathcal{M}^{i}(t),
\end{aligned}\label{SISTEMA}
\end{equation}which is topologically equivalent to system (\ref{3}). Thus, similar to Lemma \ref{lemmaRm}, as $t \rightarrow \infty$, $$(\mathcal{M}^{1}(t),L^{1}(t),\dots,\mathcal{M}^{n}(t),L^{n}(t)) \rightarrow \tilde{\gamma}=(\tilde{\gamma}^{1},\dots, \tilde{\gamma}^{n}),$$for some $w$--periodic curve $s\mapsto \tilde{\gamma}(s)$, where $({\mathcal{M}^{i}}^{*}(s), {L^{i}}^{*}(s)):=\tilde{\gamma}^{i}(s), \ \forall 1\leq i\leq n.$

Therefore, $\forall 1\leq i \leq n, \ \phi \in \Delta (0),$ as $t\rightarrow \infty$ we have 
$$
\left\|\left(\sum_{k=6}^{8}{\tilde{P}_{k}^{i}(t,\phi)}, \sum_{k=4}^{5}{\tilde{P}_{k}^{i}(t,\phi)} \right) - \tilde{\gamma}^{i}\right\| \rightarrow (0,0).
$$
Notice in system (\ref{0})--(\ref{0-1}) without the $E_h^i$ and $R_h^i$ equations that the disease free plane is invariant under the solution map $\tilde{P}^{i}$. Consider the natural inclusion $\iota_t:{\mathcal{I}^{i}_{2}}^{+} \hookrightarrow \Delta(t)$ as well as the natural projection $\pi: \Delta(0) \rightarrow {\mathcal{I}_{2}^{i}}^{+}$, $\iota:=\iota_0.$ Therefore, $\forall \ \phi \in \Delta(0),$ 
$$
\tilde{P}^{i}(t,\iota(\pi(\phi)))=\iota_t (\bar{P}^{i}(t, \pi(\phi))).
$$

Now suppose without loss of generality that $\iota(\bar{P}^{i}(t, \pi(\phi)))=\bar{P}^{i}(t, \pi(\phi)),$ i.e, ${\mathcal{I}_{2}^{i}}^{+}$ is naturally included in $\Delta(t), \ \forall t \geq 0.$ In fact, as $t \rightarrow \infty$, we have that $\forall \phi \in \Delta(0)$,
\[
\norm{\mathcal{M}^{i}(t,\phi)-\tilde{\gamma}^{i}} = \left\|{\sum_{k=6}^{8}\tilde{P}_{k}^{i}(t,\phi)-\tilde{\gamma}^{i}} \right\| \rightarrow 0,
\]
and thus in particular as $t \rightarrow \infty$, 
\[
\norm{\mathcal{M}^{i}(t,\pi(\phi))-\tilde{\gamma}^{i}} \rightarrow 0.
\]However, due to the invariance of the disease free plane and uniqueness of solutions, $\mathcal{M}^{i}(t,\iota(\pi(\phi)))=\bar{P}_{6}^{i}(t,\pi(\phi))$ and, consequently, we obtain 
\[
\norm{\bar{P}^{i}(t,\pi(\phi))-\tilde{\gamma}^{i}} \rightarrow 0 \quad \text{as} \quad t \rightarrow \infty.
\]
Therefore, due to Lemma \ref{lemmaRm}, $({\mathcal{M}^{i_{j}}}^{*}(s),{L^{i_{j}}}^{*}(s))=({L_{s}^{i_{j}}}^{*}(s),{S_{m}^{i_{j}}}^{*}(s))$ for all $s \geq 0$, $1 \leq j \leq k < n$, and $({\mathcal{M}^{i}}^{*}(s),{L^{i}}^{*}(s))=(0,0)$ for all $s \geq 0$, where $i \in \{1,\dots,n\} - \{i_{1},\dots,i_{k}\}$.

In particular, $\forall  \epsilon>0,$ there exists $t>0$ sufficiently large such that $\forall 1 \leq j \leq k,$ $$ \tilde{P}^{i_{j}}_{6}(t,\iota(\pi(\phi)))\leq \epsilon + {S_{m}^{i_{j}}}^{*}(t),$$ 
$$ 
{L_{s}^{i_{j}}}^{*}(t)-\epsilon  \leq \tilde{P}^{i_{j}}_{4}(t,\iota(\pi(\phi)))+\tilde{P}^{i_{j}}_{5}(t,\iota(\pi(\phi)))\leq {L_{s}^{i_{j}}}^{*}(t)+\epsilon,
$$
and $\forall i \in \{1,\dots,n   \}- \{  i_{1},\dots, i_{k}\},\ \text{and}\ m=4,5,6, $ 
$$ 
\norm{\tilde{P}^{i}_{m}(t,\iota(\pi(\phi)))}\leq \epsilon.
$$ 
Now, for $t>0$ sufficiently large we obtain 
\begin{align*}
\frac{d\tilde{P}^{i_{j}}_{m}(t,\phi)}{dt} &\leq \frac{d(P_{\epsilon}^{i_{j}})_{m}(t,\bar{\iota}(\bar{\pi}(\phi)))}{dt},\  1 \leq j \leq k,\ m=2, 4,8, \\
\frac{d\tilde{P}^{i}_{m}(t,\phi)}{dt} &\leq \frac{d(P_{\epsilon}^{i})_{m}(t,\bar{\iota}(\bar{\pi}(\phi)))}{dt},\  i \in \{1, \dots, n\} - \{i_{1}, \dots, i_{k}\},\ m=2,4,8,
\end{align*}
where ${\bar{\pi}}(\phi)=(\phi^{i}_{2},\phi^{i}_{3},\phi^{i}_{4},\phi^{i}_{8})_{i=1,\dots,n}$ and $\bar{\iota}({\bar{\pi}}(\phi))=(\phi_{2}^{i},\phi_{3}^{i},\phi_{4}^{i},\phi_{8}^{i})_{i=1,\dots,n}\in \hat{\Theta},$ in which  $\bar{\pi} \colon \Delta(0)\rightarrow \bar{\pi}(\Delta(0)),\ \bar{\iota} \colon \bar{\pi}(\Delta(0))\rightarrow \hat{\Theta}$ are natural projection and natural inclusion, respectively. By uniqueness of solutions of systems (\ref{0})--(\ref{0-1}) without the $E_h^i$ and $R_h^i$ equations and (\ref{lastsystem})--(\ref{lastsystem1}), we conclude by (\ref{epsilon}) that there exists \( t_{0} > 0 \) such that
\[
t \geq t_{0} \implies \tilde{P}^{i_{j}}_{m}(t,\phi) \leq (P_{\epsilon})_{m}^{i_{j}}(t,\bar{\iota}(\bar{\pi}(\phi))) \implies \lim_{t \to \infty} \tilde{P}^{i_{j}}_{m}(t,\phi) = 0 \quad  \ 1 \leq j \leq k, \ m=2,4,8,
\]and for all \( i \in \{1,\dots,n\} - \{i_{1},\dots,i_{k}\} \),\  \begin{equation}
    t \geq t_{0} \implies \tilde{P}^{i}_{m}(t,\phi) \leq (P_{\epsilon})_{m}^{i}(t,\bar{\iota}(\bar{\pi}(\phi))) \implies \lim_{t \to \infty} \tilde{P}^{i}_{m}(t,\phi) = 0,\ m=2,4,8. \label{man}\end{equation} Similarly, we can conclude that \begin{equation}
    t \geq t_{0} \implies \sum_i\tilde{P}^{i}_{3}(t,\phi) \leq \sum_i(P_{\epsilon})_{3}^{i}(t,\bar{\iota}(\bar{\pi}(\phi))) \implies \lim_{t \to \infty} \sum_i\tilde{P}^{i}_{3}(t,\phi) = 0\implies \lim_{t \to \infty} \tilde{P}^{i}_{3}(t,\phi) = 0, \ \forall \ i=1,\dots,n. \label{manAh}\end{equation}

This concludes the proof for the case $\mathcal{R}_{0}< 1,$ as $\phi \in \Delta(0)$ was chosen arbitrarily.

Let $\mathcal{R}_{0}>1,\ \phi \in \Delta(0)$. Analogous to the case $\mathcal{R}_{0}<1,$ we have
$$
\frac{d\tilde{P}^{i}_{m}(t,\phi)}{dt} \leq \frac{d(P_{\epsilon}^{i})_{m}(t,\bar{i}(\bar{\pi}(\phi)))}{dt},\ \ \forall 1 \leq i \leq n,\ m=2,3,4,8, \ \forall \ t>0\ \text{sufficiently large},
$$ 
and $\tilde{P}^{i}_{m}(t,\phi)$ are ultimately bounded for $m=1,5,6,\ \forall 1\leq i \leq n$, particularly due to the arguments established by system (\ref{SISTEMA}) in the case $\mathcal{R}_{0}<1$.


According to Lemma \ref{lema9}, the operator $P_{\epsilon}$ is compact. Hence, it is a bounded operator. Therefore, by the above inequality and the Comparison Principle, we find that the solution map $\tilde{P}$ is ultimately bounded, and thus it is point dissipative. Due to Theorem 3.6.1 of \cite{hale1993introduction}, the compositions of the Poincar\'e map, defined in $\Delta(0)$,  
$$
P^{(n)}(\phi):=\tilde{P}(\phi,nw)
$$
are compact maps, $\forall n\in\mathbb{N}$ sufficiently large. By Theorem 1.1.3 from \cite{zhao2017}, $P:=P^{(1)}$ has a strong global attractor in $\Delta(0).$ Furthermore, analogous to Lemma \ref{precompact}, every orbit is precompact. Henceforth, due to Lemma 1.2.1 in \cite{zhao2017}, the $\omega-$limit set of any orbit is an internally transitive chain.

Define 
$$
\mathcal{S}_{0}=\{ v\in \Delta(0): v^{i}_{l}(0)>0,\ \forall \ 1\leq i\leq n,\ l=2,3,4,8     \},
$$
and therefore 
$$
\partial \mathcal{S}_{0}=\{ v\in \Delta(0): v^{i}_{l}(0)=0,\ \exists \  1\leq i\leq n,\ l=2,3,4,8   \}. 
$$
Also define 
$$
\mathcal{M}_{\partial}:=\{ \phi \in  \partial\mathcal{S}_{0}: P^{(n)}(\phi)\in \partial\mathcal{S}_{0},\ \forall n \in \mathbb{N} \}.
$$
We should prove that 
\begin{equation}
\label{ultimamah}  
\{  \gamma(t):t\in[0,w]\}\supset\bigcup_{x\in \mathcal{M}_{\partial} }\omega(x).
\end{equation} 
Given $x\in \mathcal{M}_{\partial},$ $\omega(x)$ is invariant under $P$ and thus $P^{(n)}(x),\ \forall n>0.$ Consequently,
$$
\mathcal{M}_{\partial}=\bigcap_{n\in \mathbb{N}}(P^{(n)})^{-1}(\partial \mathcal{S}_{0}),
$$
which implies that $\mathcal{M}_{\partial} $ is a closed set in $\Delta(0)$. We obtain $\omega(x) \subset \mathcal{M}_{\partial},$ and hence, 
$$
\bigcup_{x\in \mathcal{M}_{\partial} }\omega(x)\subset\mathcal{M}_{\partial}. 
$$ 

Define a set 
$$
\mathcal{A}:=\{\phi \in \partial\mathcal{S}_{0}:\phi_{l}^{j}=0,\ \forall \ 1\leq j\leq n, \ l=2,3,8      \}.
$$
We claim that $\mathcal{M}_{\partial}\subset \mathcal{A}.$ If $\phi\in \partial \mathcal{S}_{0}$ is such that $\phi^{i_j}_{2}(0)>0,\ \exists \ 1\leq j \leq k,$ then by analyzing system  (\ref{0})--(\ref{0-1}) without the $E_h^i$ and $R_h^i$ equations, there exists $t_{0}>0$ such that $\tilde{P}^{i_j}_{8}(t,\phi)>0,\ \forall \ t\geq t_{0},$ and hence there exists a $t_1 > 0$ such that $\tilde{P}^{i_j}_{2}(t,\phi)>0,\ \forall t\geq t_{1},$  and there exists $t_2 >0$ such that $\tilde{P}^{i_j}_{3}(t,\phi)>0$ for all $t \geq t_2$.

If $\exists  \phi \in \partial \mathcal{S}_{0}$ where $\phi^{i_j}_{l}(0)>0$ for $  l=3,  4,  8$ and  $1$ $\leq j \leq k$  then the result is analogous. In any of these cases, we conclude that $\phi \notin \mathcal{M}_{\partial}.$ This proves the claim. Particularly, we find that 
$$
\bigcup_{x\in \mathcal{M}_{\partial} }\omega(x)\subset\bigcup_{x\in \mathcal{A} }\omega(x).
$$ 
Now given $x \in \mathcal{A}$ and $\phi \in \omega(x),$ we find that $\phi^{i_j}_{l}=0,\ \forall 1\leq j\leq k,\ l = 2,3,4,8$. Therefore, by Lemma \ref{lemmaRm}, we have that $(\tilde{P}_{1}^{i_{j}}(t,\phi),\dots,,\tilde{P}_{8}^{i_{j}}(t,\phi))=y_{i_{j}}(t),$ and $ (\tilde{P}_{1}^{l}(t,\phi),\dots,\tilde{P}_{8}^{l}(t,\phi))=x_{l}(t),$ $\forall \ 1\leq j\leq k,\ l \in \{ 1,\dots, n \}-\{ i_{1},\dots, i_{k}  \}$. Thus, $\omega(x)\subset \{\gamma(t):t\in[0,w]  \}.$ Finally, we have 
$$\bigcup_{x\in \mathcal{A} }\omega(x) \subset \{\gamma(t):t\in[0,w]  \},
$$ 
which proves (\ref{ultimamah}).
     

In the following, we prove that $W^{s}(\{\gamma(t): t\in [0,w]  \})\cap \mathcal{S}_{0}=\emptyset$ and $\{\gamma(t): t\in [0,w]  \}$ is an isolated invariant set for $P$ in $\Delta(0).$ Suppose for every $\varepsilon>0$, there exist a $\varphi\in \mathcal{S}_{0}$ and a sufficiently large $N \in \mathbb{N}$ such that
$$
\norm{P^{(n)}\varphi - \gamma(t)}<\varepsilon,\ \forall n\geq N.
$$
Due to the continuity and boundedness of $\tilde{P}$, and invariance of $\{\gamma(t): t \in [0,w]   \}$, we can consider $\varphi\in \mathcal{S}_{0}$ such that for $\ n\in \mathbb{N}$ sufficiently large, we have
$$
\norm{ \tilde{P}(\hat{t}+nw,\varphi) - \tilde{P}(\hat{t},\gamma({t}))}\leq \norm{\tilde{P}(\hat{t})}\norm{P^{(n)}\varphi - \gamma(t)} <\varepsilon, \ \forall \hat{t}\in [0,w]. 
$$

Therefore, for $1\leq j \leq k,$ and for $t>0$ sufficiently large we obtain that
$$ \norm{{S_{h}^{i_{j}}}^{*} - \tilde{P}_{1}^{i_{j}}(t,\varphi)}<\varepsilon$$
$$ \norm{{S_{m}^{i_{j}}}^{*} - \tilde{P}_{6}^{i_{j}}(t,\varphi)}<\varepsilon \ \text{and} $$
$$  \norm{{L_{s}^{i_{j}}}^{*} - \tilde{P}_{5}^{i_{j}}(t,\varphi)}<\varepsilon. 
$$ 

Thus, similar to system (\ref{lastsystem})--(\ref{lastsystem1}), we have the system for $\ j \in \{1,...,k\}$, where $1<i_{1}<...<i_{k}<n$ are such that $\mathcal{R}_{m}^{i_{j}}>1$:
\begin{equation}
\begin{aligned}[t]
\frac{dI_{h}^{{i}_{j}}}{dt} =&(1-a)\beta_{mh}b_{m}^{i_{j}}(t-\tau_{h}) ({S_{h}^{i_{j}}}^{*}-\varepsilon) I_{m}^{i_{j}}(t-\tau_{h})e^{-\mu_{h}\tau_{h}}-(\eta_h + \mu_{h})I^{i_{j}}_{h}(t)\\
\frac{dA_{h}^{i_{j}}}{dt} =& a\beta_{mh}b_{m}^{i_{j}}(t-\tau_{h}) ({S_{h}^{i_{j}}}^{*}-\varepsilon)I_{m}^{i_{j}}(t-\tau_{h})e^{-\mu_{h}\tau_{h}} - (\eta_{h} + \mu_{h}+\sum_{l \neq i_{j}} m_{i_{j}l})A_{h}^{i_{j}}(t)+ \sum_{l \neq i_{j}} m_{li_{j}}A_{h}^{l}(t) \\
\frac{dL_{I}^{i_{j}}}{dt} =& \mu^{i_{j}}_{b}(t)\left(1-\frac{{L_{s}^{i_{j}}}^{*}(t)+\varepsilon}{K^{i_{j}}(t)}\right)pI_{m}^{i}(t)-\mu_{l}(t)L_{I}^{i_{j}}(t)\\
\frac{dI_{m}^{i_{j}}}{dt} =& \left(1-\dot{\tau}^{i_{j}}_{l}(t)\right)\mu_{b}^{i_{j}}(t-\tau^{i_{j}}_{l}(t))\left(1-\frac{{{L_{s}^{i_{j}}}^{*}}(t-\tau^{i_{j}}_{l}(t))+\varepsilon}{K^{i_{j}}(t-\tau^{i_{j}}_{l}(t))}\right)pI_{m}^{i_{j}}(t-\tau^{i_{j}}_{l}(t))e^{-\int_{t-\tau^{i_{j}}_{l}(t)}^{t} \mu^{i_{j}}_{l}(s) \, ds } \\
&+ \left(1-\dot{\tau}^{i_{j}}(t)\right)c\beta_{hm}b^{i_{j}}(t-\tau^{i_{j}}(t))(I_{h}^{i_{j}}(t-\tau^{i_{j}}(t))+A_{h}^{i_{j}}(t-\tau^{i_{j}}(t)))({S_{m}^{i_{j}}}^{*}(t-\tau^{i_{j}}(t))-\varepsilon)e^{-\int_{t-\tau^{i_{j}}(t)}^{t} \mu^{i_{j}}_{l}(s) \, ds } \\
& -\mu_{m}^{i_{j}}(t) I_{m}^{i_{j}}(t).
\end{aligned}
\end{equation}
Hence we have the inequalities 
\begin{equation}
\begin{aligned}[t]
\frac{d\tilde{P}_{2}^{i_{j}}}{dt}(t,\varphi) \geq& (1-a)\beta_{mh}b_{m}^{i_{j}}(t-\tau_{h}) ({S_{h}^{i_{j}}}^{*}-\varepsilon) I_{m}^{i_{j}}(t-\tau_{h})e^{-\mu_{h}\tau_{h}}-(\eta_h + \mu_{h})I^{i_{j}}_{h}(t)\\
\frac{d\tilde{P}_{3}^{i_{j}}}{dt}(t,\varphi) \geq& a\beta_{mh}b_{m}^{i_{j}}(t-\tau_{h}) ({S_{h}^{i_{j}}}^{*}-\varepsilon)I_{m}^{i_{j}}(t-\tau_{h})e^{-\mu_{h}\tau_{h}} - (\eta_{h} + \mu_{h}+\sum_{l \neq i_{j}} m_{i_{j}l})A_{h}^{i_{j}}(t)+ \sum_{l \neq i_{j}} m_{li_{j}}A_{h}^{l}(t) \\
\frac{d\tilde{P}_{4}^{i_{j}}}{dt}(t,\varphi) \geq& \mu^{i_{j}}_{b}(t)\left(1-\frac{{L_{s}^{i_{j}}}^{*}(t)+\varepsilon}{K^{i_{j}}(t)}\right)pI_{m}^{i}(t)-\mu_{l}(t)L_{I}^{i_{j}}(t)\\
\frac{d\tilde{P}_{8}^{i_{j}}}{dt}(t,\varphi) \geq& \left(1-\dot{\tau}^{i_{j}}_{l}(t)\right)\mu_{b}^{i_{j}}(t-\tau^{i_{j}}_{l}(t))\left(1-\frac{{{L_{s}^{i_{j}}}^{*}}(t-\tau^{i_{j}}_{l}(t))+\varepsilon}{K^{i_{j}}(t-\tau^{i_{j}}_{l}(t))}\right)pI_{m}^{i_{j}}(t-\tau^{i_{j}}_{l}(t))e^{-\int_{t-\tau^{i_{j}}_{l}(t)}^{t} \mu^{i_{j}}_{l}(s) \, ds } \\
&+ \left(1-\dot{\tau}^{i_{j}}(t)\right)c\beta_{hm}b^{i_{j}}(t-\tau^{i_{j}}(t))(I_{h}^{i_{j}}(t-\tau^{i_{j}}(t))+A_{h}^{i_{j}}(t-\tau^{i_{j}}(t)))({S_{m}^{i_{j}}}^{*}(t-\tau^{i_{j}}(t))-\varepsilon)e^{-\int_{t-\tau^{i_{j}}(t)}^{t} \mu^{i_{j}}_{l}(s) \, ds } \\
&-\mu_{m}^{i_{j}}(t) I_{m}^{i_{j}}(t),
\end{aligned}
\end{equation}
for $t>0$ sufficiently large. Thus, similar to the case $\mathcal{R}_{0}<1,$ due to the Comparison Principle and Lemma 1 from \cite{wang2017dynamics}, we conclude that for $\mathcal{R}_{0}> 1,$ 
$$ 
\tilde{P}_{l}^{i_{j}}(t,\varphi)\rightarrow\infty \ \text{as}\ t\rightarrow \infty \ \text{for } l\text{ = 2, 3, 4, 8 and }1\leq j \leq k,
$$ 
that is a contradiction to the boundedness of the solution map $\tilde{P}$ for sufficiently small $\varepsilon>0$. Hence, there exists $\varepsilon>0$ such that 
$$
\limsup_{n\to\infty}\norm{P^{(n)}\varphi - \gamma(t)}\geq \varepsilon,\ \forall \varphi \in \mathcal{S}_{0},\ t\in [0,w].
$$

Consequently, $\{ \gamma(t):t \in [0,w]   \}$ is isolated for $P$ in $\Delta(0)$ and $W^{s}(\{\gamma(t): t\in [0,w]  \})\cap \mathcal{S}_{0}=\emptyset$. Thus, as $\{\gamma(t):t\in [0,w] \}$ forms no cycle in $\partial\mathcal{S}_{0},$ by Theorem 1.3.1 from \cite{zhao2017}, $P$ is uniformly persistent with respect to $(\mathcal{S}_{0},\partial\mathcal{S}_{0})$. 
By Theorem 3.6.1 in \cite{hale1993introduction}, for each fixed $t,$ there exists a $t_0 > 0$ sufficiently large such that the map $\tilde{P}(t)$ is a $k-$contraction and a compact map for $t \geq t_0$. Due to Theorem 4.5 in \cite{magal2005global}, by letting $\rho(x):=d(x,\partial \mathcal{S}_{0})$, it follows that $P|_{\partial\mathcal{S}_{0}}: \mathcal{S}_{0}\rightarrow \mathcal{S}_{0}$ has a compact invariant global attractor $S\subset \mathcal{S}_{0}.$
    
Let us define $\mathcal{J}:=[t_{0},t_{0}+w]\times S$. Therefore, $\mathcal{J}$ is compact in $\mathbb{R}_{+}\times \mathcal{S}_{0}$, and since $\tilde{P}(t)$ is compact operator by Theorem 3.6.1 in \cite{hale1993introduction}, $\text{for }t>t_{0},$ we obtain that $B:=\cup_{t\in [t_{0},t_{0}+w]}\tilde{P}(t,S)$ is compact. Hence, 
$$
\lim_{t\rightarrow \infty}d(\tilde{P}(t)\varphi,C)=0,\ \forall \ \varphi \in \mathcal{S}_{0}.
$$
Define a continuous function $p \colon \Delta(0)\rightarrow \mathbb{R}_{+}$ by $p(\varphi)=\min_{\ 1\leq i\leq n,\ j=2,3,4,8}(\varphi^{i}_{j}(0)).$ Then $p$ is continuous and thus $\exists \phi \in B$ such that $\inf_{\varphi \in B}p(\varphi)=\phi>0.$ Consequently, there exists $\delta>0$ such that in system (\ref{0})--(\ref{0-1}) without the $E_h^i$ and $R_h^i$ equations, we find 
$$
\liminf_{t\to\infty}\tilde{P}^{{i}_{j}}_{l}(t,\varphi)\geq \delta>0,\  \forall \varphi \in \Delta(0),\ l=2,3,4,8,\ 1\leq j\leq k.
$$
This completes the proof.    
\end{proof} 



\section{Numerical Analysis}
\setcounter{equation}{0}\setcounter{figure}{0}

\subsection{Data Collection and Parameters}
In this analysis, we focus on those states in Brazil where CHIKV is endemic and where at least one of its cities has direct flights to Miami. We consider the union of the endemic Brazilian regions as one patch. The total population of Miami is denoted as $N_{h0M}$, while the combined population of these Brazilian states is represented as $N_{h0B}$. The Brazilian states considered in the present work are Ceara, Pernambuco, Bahia, Amazonas, Para, Rio de Janeiro, Sao Paulo, Distrito Federal and Minas Gerais. The number of infectious cases data from these specific Brazilian states were collected from \href{
http://tabnet.datasus.gov.br/cgi/tabcgi.exe?sinannet/cnv/chikunbr.def}{Data SUS} \cite{tabnet_chikungunya}. 

The following table lists the temperature dependent parameters and constants used in the present model.
\begin{table}[htbp]\label{table}
\centering
\caption{Features and Parameters in the Model}
\label{table}
\begin{tabular}{|c|c|}
\hline
\textbf{Parameter} & \textbf{Constants and Temperature Dependent Formula} \\
\hline
$\mu_{h}$ & $\frac{1}{74.5*365} $ ${\rm day}^{-1}$ \cite{WorldBank2024}\\ 
$N_{h0M}$ & $2700000$ \cite{WorldPopulationReview2024}
\\
$N_{h0B}$ & $109966000$ \cite{IBGE2024}
\\
$R^{BR}$ & $\frac{109966000}{77*365}\ {\rm day}^{-1}$
\\
$R^{MIA}$ & $\frac{2700000}{80*365}\ {\rm day}^{-1}$
\\
$\mu_{m}(T)$ & $0.8692 - 0.1590T + 0.01116(T^{2}) -0.0003408(T^{3})+3.809(10^{-6})T^{4}\ {\rm day}^{-1} $ \cite{McLean1974}
\\
$b_{m}(T)$ & $\frac{(0.0943 + 0.0043 * T)}{N} \ {\rm day}^{-1}$, where N is the patch's total human population \cite{Focks1995} \\
$\beta_{mh}$ & $0.13$ \cite{Liu2020} \\
$\beta_{hm}$ & $0.3$ \cite{Liu2020} \\ 
$\mu_{l}(T)$ & $2.130 - 0.3797 \cdot T + 0.02457 \cdot T^2 - 0.0006778 \cdot T^3 + 0.000006794 \cdot T^4\ {\rm day}^{-1}$ \cite{Valdez2018} \\ 
$\tau(T)$ & $4 + e^{5.15 - 0.123 \cdot T}\ {\rm day}$ \cite{Focks1995} \\ 
$\tau_{h}$ & $5 \  {\rm day},$ \href{https://wwwnc.cdc.gov/travel/yellowbook/2024/infections-diseases/chikungunya} {CDC} \cite{CDC} \\

$\mu_{b}(T)$ & $  -0.32 + 0.11 \cdot T - 0.12 \cdot T^{2} + 0.01 \cdot T^{3} - 0.00015 \cdot T^{4} \ {\rm day}^{-1}$ \cite{Valdez2018}
\\ 
$a$ & 0.25 \cite{Mourad2022} \\
\hline
\end{tabular}
\end{table}
Given the temperature data collected from \cite{power_dav}, it was utilized Fast Fourier Transform from SciPy.FFT library in Python to estimate the periodic and time dependent average temperature functions in the Brazilian states  considered on this work and in Miami as well, $T_{BR}(t) $ and $T_{MIA}(t)$, respectively, defined as:
\[
T_{BR}(t) = 2.85 \cdot \sin\left( \frac{2\pi (t+59)}{365} + 2.2\right) + 24.182,
\]
\[
T_{MIA}(t) =
3.6375\cdot\sin\left(\frac{2\pi (t+59)}{365} + 3.7\right)+25.91,
\] 
where the unit of the time $t$ is $day$, and the temperature unit is in Celsius ($^{\circ}C$).
Based on Table \ref{table} we define the per-capita biting rate time dependent function for Brazil as
$$
b_m^{BR}(t) = 
\frac{0.0943 + 0.0043 \cdot T_{BR}(t)}{N_{BR}}.
$$

The use of air conditioning not only makes the environment less accessible to mosquitoes but also affects their survival and distribution, leading to lower biting rates and reduced disease transmission in the U.S. compared to other countries, as addressed in \cite{reiter2003texas}. Thus, following the lifting of lockdown restrictions in Florida in September 2020 as seen in \cite{PhaseThreeFlorida2020}, we consider the per-capita biting rate in Miami as: $$
b_m^{MIA}(t) = 
\begin{cases} 
\frac{{0.0335 + 0.0015 \cdot {T_{MIA}(t)}}}{N_{MIA}}& \text{if } \text{January 1, 2020} \leq t \leq \text{ July 27, 2020}, \\ \\
\frac{(0.0335 + 0.0015 \cdot T_{MIA}(210)) + (t-209)(b_m^{MIA}(209)-b_m^{MIA}(210))}{N_{MIA}} & \text{if } \text{ July 27, 2020}<t<\text{ July 28, 2020}, \\ \\
\frac{{0.07544 + 0.00344 \cdot T_{MIA}(t)}}{N_{MIA}} & \text{if } \text{ July 28, 2020} \leq t.

\end{cases}
$$Notice that, in opposite to Miami-Dade county, the per-capita biting rate in Brazil was not regarded as a piecewise defined function throughout the course of the COVID-19 pandemic during 2020. This is specially because in Brazil the transmission of Dengue virus and CHIKV mainly occurs through Aedes mosquitoes living nearby one's own home, as stated in \cite{goias_dengue}. In Brazil, most households do not have air conditioning, which results in greater reliance on open windows and doors for ventilation \cite{agenciabrasil2024dengue}. This practice increases human exposure to mosquitoes, especially in urban and peri-urban areas. This suggests that even under restricted lockdown policies, as the ones that occurred throughout the year of 2020, the biting rate was not much affected.

Now, utilizing linear model from Scikit-Learn library in Python, we fitted a Linear Regression model in the data provided in \cite{couret2014temperature}, as shown in Figure \ref{fig:larvae_development_rate}. Thus, we provide a linear formula for the temperature dependent Larvae Development Rate $\eta_{l}(T) = 0.004173T-0.007629$, whose inverse is the time delay for the larvae development stage $\tau_{l}(T) := \frac{1}{\eta_{l}(T)} = \frac{1}{0.004173T-0.007629}$.
\begin{figure}[htp]
    \centering
\includegraphics[height=6cm, width=11cm]{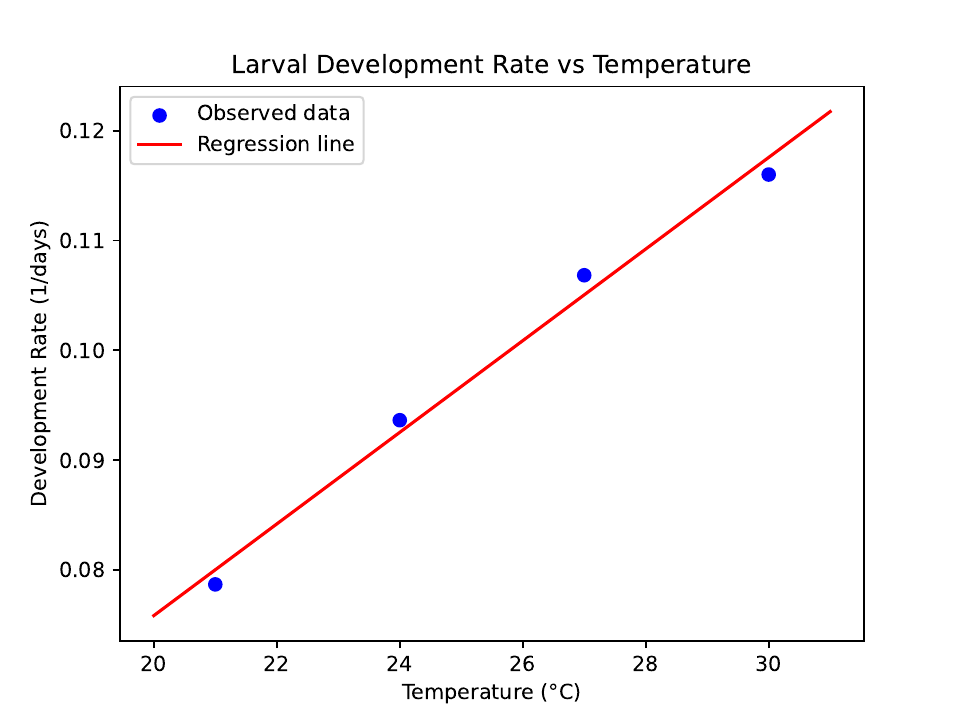}
    \caption{Larval Development Rate vs Temperature.}
\label{fig:larvae_development_rate}
\end{figure}

In light of the U.S. reopening its borders to fully vaccinated international travelers, including Brazilians, on November 8, 2021, as stated in \cite{cnn2021} and following \cite{arino2015epidemiological}, we define the traveling rates from Miami to the selected regions in Brazil and from these regions to Miami, respectively, as:
\[
m_{MB}(t) =
\begin{cases}
0 & \text{if } \text{January 1, 2020} \leq t \leq \text{November 7, 2021} \\
(t - 676) \cdot \left(-\log\left(1 - \frac{700000}{365 \cdot N_{h0M}}\right)\right) & \text{if }\  \text{November 7, 2021} < t < \text{November 8, 2021 } \\
-\log\left(1 - \frac{700000}{365 \cdot N_{h0M}}\right) & \text{if } t \geq \text{November 8, 2021},
\end{cases}
\]
\[
m_{BM}(t) =
\begin{cases}
0 & \text{if } \text{January 1, 2020} \leq t \leq \text{November 7, 2021} \\
(t - 676) \cdot \left(-\log\left(1 - \frac{700000}{365 \cdot N_{h0B}}\right)\right) & \text{if }  \text{November 7, 2021} < t < \text{November 8, 2021 }\\
-\log\left(1 - \frac{700000}{365 \cdot N_{h0B}}\right) & \text{if } t \geq \text{November 8, 2021}.
\end{cases}
\]

According to \cite{visitflorida}, around 1 million travelers arrived in Florida from Brazil in 2023, out of those we consider that in average 700.000 Brazilians visit Miami-Dade per year since November 2021.

The days in the functions definitions in the model, as well as the initial day of the model correspond to the following dates in chronological order:  $ t=0, \ t=76,\ t=209,\ t=677, $ correspond, respectively, to January 1st, 2020; March 17, 2020; July 27, 2020; November 8, 2021.

Utilizing Scipy.optimize.minimize from Python, it was minimized the Root Mean Squared Error for the infectious humans compartment $I_h$ based on the provided data and the numerical solution computed in subsection \ref{4.2}, and then computed the optimum values for the carrying capacities for both patches to be approximately $K^{BR}=1.9\cdot 10^{8}$, $K^{MIA} = 2.47 \cdot 10^6$, and the for vertical probability to be approximately $p = 0.05$.

\subsection{Data Fitting and Sensitive Analysis}\label{4.2}

Define the average formula for any $w-$periodic real function $h(t)$ for $w>0$, as 
$$
h:=\frac{1}{w}\int_{0}^{w}h(t)dt.
$$ 
From now on, by letting $w=365,$ we consider the time delay averages for both patches, thus by using the average formula, we calculate $\tau^{BR}  = 13.08$ days,
$\tau^{MIA} = 11.48$ days,
$\tau_{l}^{BR}  = 10.80 $ days, and 
$\tau_{l}^{MIA}  = 10.06 $ days. These time frames are close to the values shown in \cite{queensland_health_chikungunya} and \cite{cdc_aedes_lifecycle}.

Henceforth, we arrive in the system that we will be solving numerically:
\begin{footnotesize}
\begin{align*}
\frac{dS_h^{MIA}}{dt} &= R^{MIA} - b_m^{MIA}(t)\beta_{mh} S_h^{MIA}(t) I_m^{MIA}(t) - \mu_h S_h^{MIA}(t) + m_{MB}S_h^{j}(t) - m_{BM}S_h^{MIA}(t) \\
\frac{dI_h^{MIA}}{dt} &= (1-a)\beta_{mh}b_m^{MIA}(t-\tau^{MIA}) S_h^{MIA}(t-\tau^{MIA}) I_m^{MIA}(t-\tau^{MIA}) e^{-\mu_h \tau^{MIA}} - (\eta_h + \mu_h) I_h^{MIA}(t) \\
\frac{dA_h^{MIA}}{dt} &= a\beta_{mh}b_m^{MIA}(t-\tau^{MIA}) S_h^{MIA}(t-\tau^{MIA}) I_m^{MIA}(t-\tau^{MIA}) e^{-\mu_h \tau^{MIA}} - (\eta_h + \mu_h) A_h^{MIA}(t) \\
&\quad + m_{BM} A_h^{BR}(t) - m_{MB} A_h^{MIA}(t) \\
\frac{dL_I^{MIA}}{dt} &= \mu_b^{MIA}(t) \left(1 - \frac{L_s^{MIA}(t) + L_I^{MIA}(t)}{K^{MIA}}\right) p I_m^{MIA}(t) - \mu_l^{MIA}(t) L_I^{MIA}(t) \\
\frac{dL_s^{MIA}}{dt} &= \mu_b^{MIA}(t) \left(1 - \frac{L_s^{MIA}(t) + L_I^{MIA}(t)}{K^{MIA}}\right)\left(S_m^{MIA}(t) + E_m^{MIA}(t) + (1-p) I_m^{MIA}(t)\right) - \mu_l^{MIA}(t) L_s^{MIA}(t) \\
\frac{dS_m^{MIA}}{dt} &= \mu_b^{MIA}(t-\tau_l^{MIA}) \left(1 - \frac{L_s^{MIA}(t-\tau_l^{MIA}) + L_I^{MIA}(t-\tau_l^{MIA})}{K^{MIA}}\right)\\
&\quad \times\left(S_m^{MIA}(t-\tau_l^{MIA}) + E_m^{MIA}(t-\tau_l^{MIA}) + (1-p) I_m^{MIA}(t-\tau_l^{MIA})\right)e^{-\int_{t-\tau_l^{MIA}}^{t} \mu_l^{MIA}(s) \, ds}\\
&\quad - b_m^{MIA}(t)\beta_{hm}S_m^{MIA}(t) (I_h^{MIA}(t) + A_h^{MIA}(t)) - \mu_m^{MIA}(t) S_m^{MIA}(t) \\
\frac{dE_m^{MIA}}{dt} &= b_m^{MIA}(t)\beta_{hm} S_m^{MIA}(t) (I_h^{MIA}(t) + A_h^{MIA}(t)) - \mu_m^{MIA}(t) E_m^{MIA}(t) \\
&\quad -  \beta_{hm} b_m^{MIA}(t-\tau^{MIA}) (I_h^{MIA}(t-\tau^{MIA}) + A_h^{MIA}(t-\tau^{MIA})) S_m^{MIA}(t-\tau^{MIA}) e^{-\int_{t-\tau^{MIA}}^{t} \mu_l^{MIA}(s) \, ds} \\
\frac{dI_m^{MIA}}{dt} &= \mu_b^{MIA}(t - \tau_l^{MIA}) \left(1 - \frac{L_s^{MIA}(t - \tau_l^{MIA}) + L_I^{MIA}(t - \tau_l^{MIA})}{K^{MIA}}\right) p I_m^{MIA}(t - \tau_l^{MIA}) e^{-\int_{t - \tau_l^{MIA}}^{t} \mu_l^{MIA}(s) \, ds} \\
&\quad +   \beta_{hm} b_m^{MIA}(t - \tau^{MIA}) (I_h^{MIA}(t - \tau^{MIA}) + A_h^{MIA}(t - \tau^{MIA})) S_m^{MIA}(t - \tau^{MIA}) e^{-\int_{t - \tau^{MIA}}^{t} \mu_l^{MIA}(s) \, ds} \\
&\quad - \mu_m^{MIA}(t) I_m^{MIA}(t)\\
\frac{dS_h^{BR}}{dt} &= R^{BR} - b_m^{BR}(t)\beta_{mh} S_h^{BR}(t) I_m^{BR}(t) - \mu_h S_h^{BR}(t) + m_{MB}S_h^{j}(t) -  m_{BM}S_h^{BR}(t) \\
\frac{dI_h^{BR}}{dt} &= (1-a)\beta_{mh}b_m^{BR}(t-\tau^{BR}) S_h^{BR}(t-\tau^{BR}) I_m^{BR}(t-\tau^{BR}) e^{-\mu_h \tau^{BR}} - (\eta_h + \mu_h) I_h^{BR}(t) \\
\frac{dA_h^{BR}}{dt} &= a\beta_{mh}b_m^{BR}(t-\tau^{BR}) S_h^{BR}(t-\tau^{BR}) I_m^{BR}(t-\tau^{BR}) e^{-\mu_h \tau^{BR}} - (\eta_h + \mu_h) A_h^{BR}(t) + m_{MB} A_h^{BR}(t) -  m_{BM} A_h^{BR}(t) \\
\frac{dL_I^{BR}}{dt} &= \mu^{BR}_b(t) \left(1 - \frac{L_s^{BR}(t) + L_I^{BR}(t)}{K^{BR}}\right) p I_m^{BR}(t) - \mu_l^{BR}(t) L_I^{BR}(t) \\
\frac{dL_s^{BR}}{dt} &= \mu_b^{BR}(t) \left(1 - \frac{L_s^{BR}(t) + L_I^{BR}(t)}{K^{BR}}\right)\left(S_m^{BR}(t) + E_m^{BR}(t) + (1-p) I_m^{BR}(t)\right) - \mu_l^{BR}(t) L_s^{BR}(t) \\
\frac{dS_m^{BR}}{dt} &= \mu_b^{BR}(t - \tau_l^{BR}) \left(1 - \frac{L_s^{BR}(t - \tau_l^{BR}) + L_I^{BR}(t - \tau_l^{BR})}{K^{BR}}\right) [S_m^{BR}(t - \tau_l^{BR}) + E_m^{BR}(t - \tau_l^{BR}) \\
&\quad + (1 - p) I_m^{BR}(t - \tau_l^{BR})] e^{-\int_{t - \tau_l^{BR}}^{t} \mu_l^{BR}(s) \, ds} - b_m^{BR}(t) \beta_{hm} S_m^{BR}(t) (I_h^{BR}(t) + A_h^{BR}(t)) - \mu_m^{BR}(t) S_m^{BR}(t) \\
\frac{dE_m^{BR}}{dt} &= b_m^{BR}(t) \beta_{hm} S_m^{BR}(t) (I_h^{BR}(t) + A_h^{BR}(t)) - \mu_m^{BR}(t) E_m^{BR}(t) \\
&\quad - \beta_{hm} b_m^{BR}(t - \tau^{BR}) (I_h^{BR}(t - \tau^{BR}) + A_h^{BR}(t - \tau^{BR})) S_m^{BR}(t - \tau^{BR}) e^{-\int_{t - \tau^{BR}}^{t} \mu_l^{BR}(s) \, ds} \\
\frac{dI_m^{BR}}{dt} &= \mu_b^{BR}(t - \tau_l^{BR}) \left(1 - \frac{L_s^{BR}(t - \tau_l^{BR}) + L_I^{BR}(t - \tau_l^{BR})}{K^{BR}}\right) p I_m^{BR}(t - \tau_l^{BR}) e^{-\int_{t - \tau_l^{BR}}^{t} \mu_l^{BR}(s) \, ds} \\
&\quad + \beta_{hm} b_m^{BR}(t - \tau^{BR}) (I_h^{BR}(t - \tau^{BR}) + A_h^{BR}(t - \tau^{BR})) S_m^{BR}(t - \tau^{BR}) e^{-\int_{t - \tau^{BR}}^{t} \mu_l^{BR}(s) \, ds} - \mu_m^{BR}(t) I_m^{BR}(t).
\end{align*}
\end{footnotesize}
Now we shall consider the initial value problem for the delay differential system as a constant vector in $\mathbb{R}^{16}$, calculated at the model starting date March $1^{st}$, 2020, taking into account that in Miami there was a DFE equilibrium at this specific date, so we can calculate the initial number of larvae and susceptible mosquitoes in Miami. Also, according to \cite{tabnet_chikungunya}, the cumulative number of infectious humans due to CHIKV from January $1^{st}$ 2017 until March $1^{st}$ 2020 is 863127. 
We use the same equations to approximate the numbers of susceptible mosquitoes and larvae in Brazil, assuming that due to a high mosquito population number, the numbers of susceptible mosquitoes and non-infected larvae are sufficiently larger than the numbers of infected mosquitoes and infected larvae, respectively, so we can estimate these numbers by the DFE equilibrium
\begin{align*}
L^{BR}_{s0} &= K^{BR} \left(1 - \frac{\mu_{m}(T_{BR}(0))e^{\int^{0}_{-\tau_{l}^{BR}(0)\mu_{l}^{BR}(s)ds}}}{\mu_{b}^{BR}(-\tau^{BR}_{l}(0))}\right), \\
    S_{m0}^{BR}&= \frac{L^{BR}_{s0}\cdot(\eta_{l}(T_{BR}(0))+\mu_{l}(T_{BR}(0))}{\mu_{b}(t)\left(1 -  \frac{L^{BR}_{s0}}{K^{BR}} \right) },\\
    L^{MIA}_{s0} &= K^{MIA} \left(1 - \frac{\mu_{m}(T_{MIA}(0))e^{\int_{t-\tau_{l}}^{t}\mu_{l}(s)\,ds}}{\mu_{b}(t-\tau_{l})}\right),\\
    S_{m0}^{MIA} &= \frac{L^{MIA}_{s0} \cdot (\eta_{l}(T_{MIA}(0)) + \mu_{l}(T_{MIA}(0)))}{\mu_{b}(t) \left(1 - \frac{L^{MIA}_{s0}}{K^{MIA}}\right)}.
\end{align*}
Hence, we arrive on the initial condition:
\begin{align*}
\varphi =& \left(  N_{h0M}, 0, 0, 0 , L_{s0}^{MIA}, S_{m0}^{MIA}, 0, 0, N_{h0B} - 863127, 860, 215, \right.\\
&\quad \left. 15000 , L_{s0}^{BR}, S_{m0}^{BR}, 1000, 5000  \right)\in\mathbb{R}^{16},
\end{align*}following 
In Figure \ref{eita mah}, we plot the number of infectious human population in Brazil from March $1^{st}$ 2020 to July 2024, and project into August 2025. It is worth mentioning that the COVID-19 pandemic likely disrupted arbovirus surveillance in Brazil. As suggested in \cite{fiocruz2021pandemia}, during COVID-19 pandemic, limited healthcare capacity and redirected attention may have led to underreporting of Arbovirus diseases.
\begin{figure}[htp]
\centering
\includegraphics[height=6cm, width=11cm]{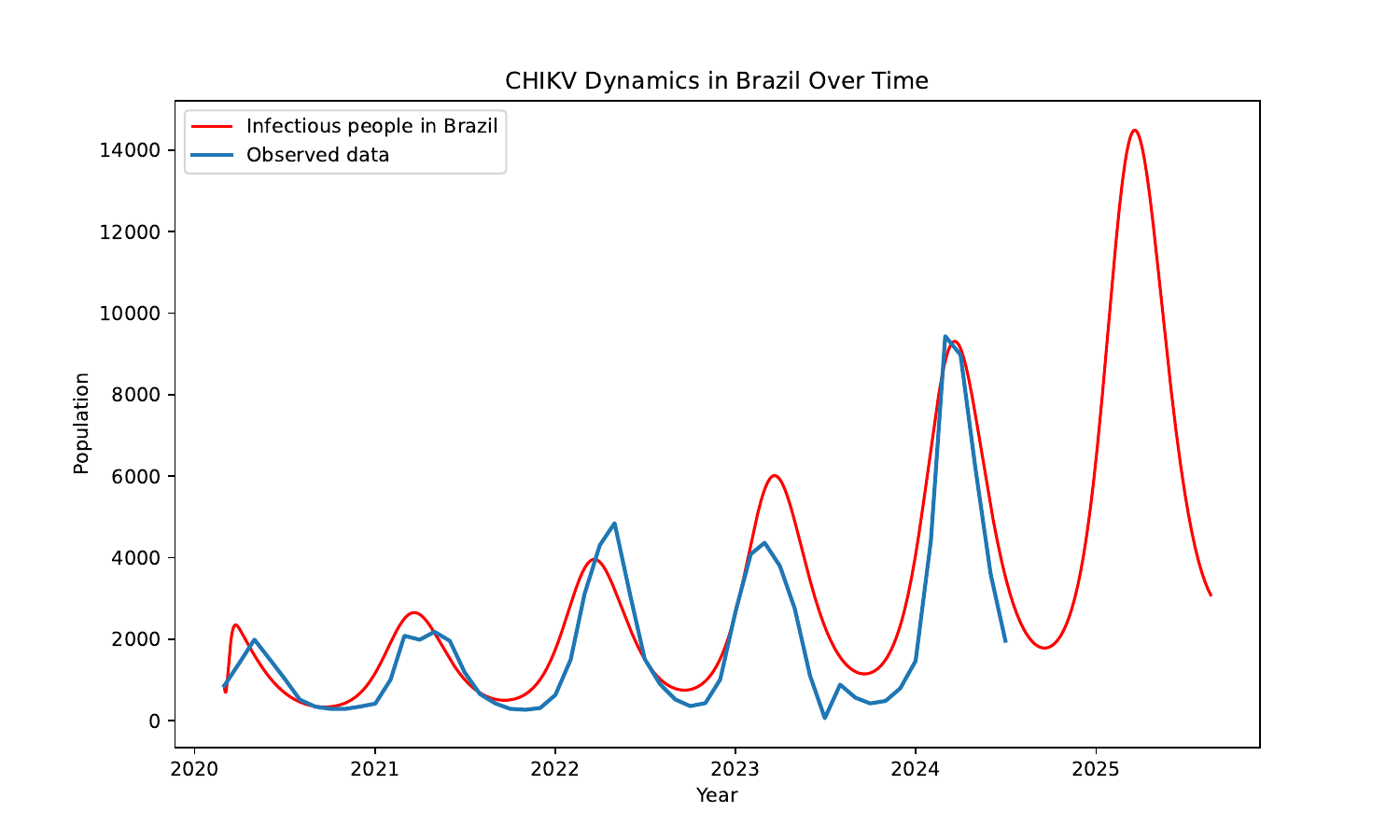}
\caption{Simulation of the number of symptomatic infectious human population in Brazil from March 2020 to August 2025. The blue curve represents the data and the red curve is the simulation of the model.}
\label{eita mah}
\end{figure}

Now let us define the per capita biting rate multiplier, mosquito death rate multiplier and larvae death rate multiplier as positive constants $1\geq b_{mult}>0,\ \mu_{m_{mult}}\geq 1,\ \mu_{l_{mult}}\geq 1$ that multiplies $b_{m}(t)$, $\mu_{m}(t),\ \mu_l (t)$, respectively, and which will play the role of control strategies aiming to reduce the biting rate, and increase the mortality rates, producing the new time dependent rates $b_{mult} \cdot b_{m}(t),\ \mu_{m_{mult}} \cdot \mu_{m}(t), \ \mu_{m_{mult}} \cdot \mu_{l}(t) $ at each patch. In Figure \ref{manuu} we do a sensitive analysis on $b_{mult}$ to emphasize the importance of reducing the mosquitoes biting rates, through  proving that effective reduction of mosquito bites can significantly decrease the transmission of CHIKV, as the virus spreads primarily through mosquito bites. In Figure \ref{miami} we project the cumulative number of infectious human population in Miami, proving that the use of multiple techniques to reduce the per capita biting rate will play an important role on the spread of the disease.
\begin{figure}[htp]
\centering
\begin{subfigure}[b]{0.49\textwidth}  
\centering
\includegraphics[width=\textwidth]{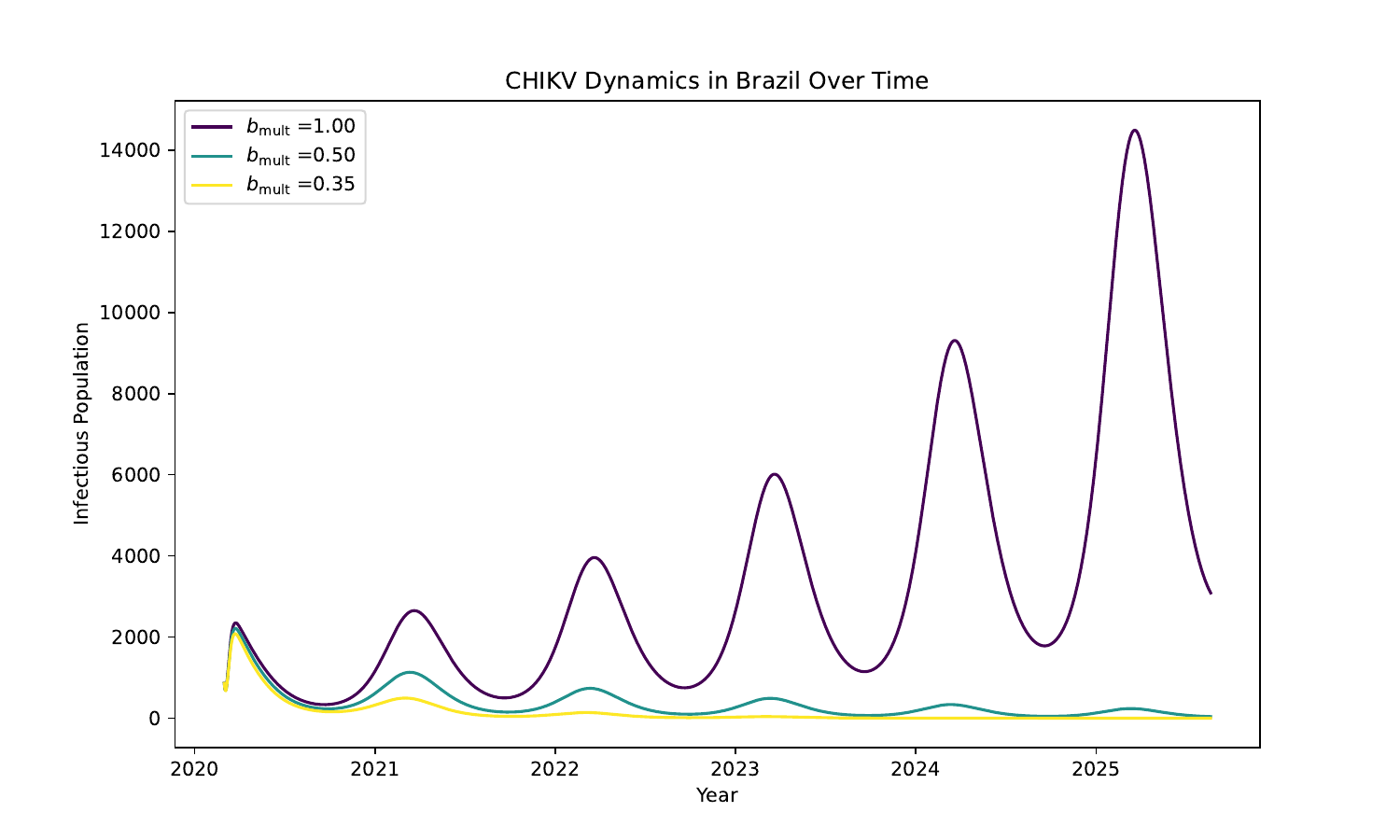}
\caption{Per capita biting rate sensitive analysis of Infectious and Symptomatic Human Population in Brazil ($I_h^{BR}$), projected until August 2025.}
\label{manuu}
\end{subfigure}
\hfill  
\begin{subfigure}[b]{0.49\textwidth}  
\centering 
\includegraphics[width=\textwidth]{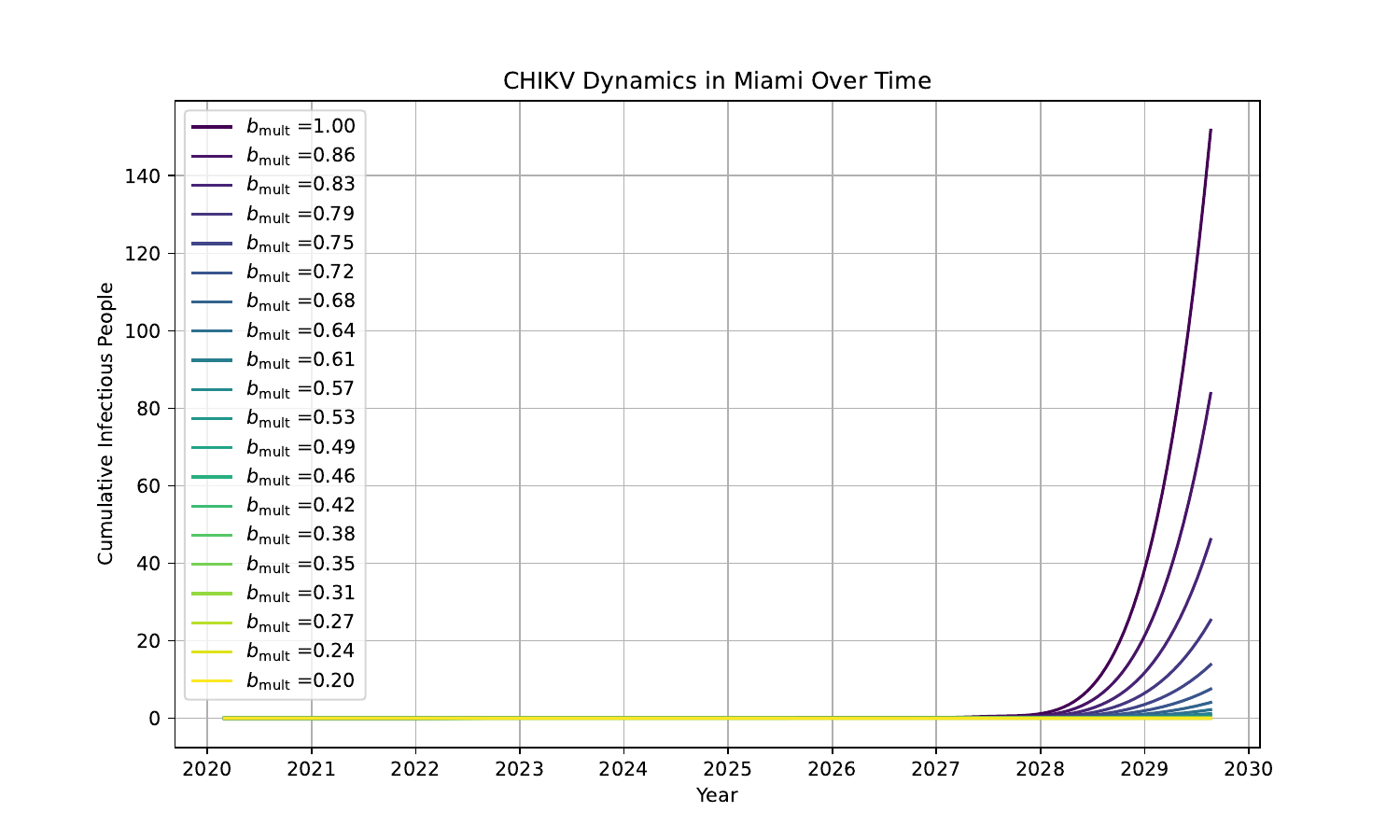}
\caption{Sensitive analysis of the per capita biting rate  of Infectious and Symptomatic Human Population in Miami ($I_h^{MIA}$), projected until 2030.}
\label{miami}
\end{subfigure}
\caption{Sensitive analysis of the per capita biting rates.}
\label{combined_figures}
\end{figure}

In Figure \ref{BRAZILIU} we conclude that both control strategies aiming to decrease the number of mosquitoes and larvae, e.g, through larvicides or pesticides, are significant to reduce the number of infectious cases. Also notice that in Brazil, by comparing Figure \ref{manuu} and Figure \ref{BRAZILIU}, we can understand  that small reductions in the per capita biting rate produce a similar change in the disease dynamics as in larger increases in the death rates of mosquitoes and larvae.
\begin{figure}[htp]
\centering
\begin{subfigure}[a]{0.48\textwidth}
\centering
(a)\includegraphics[width=\textwidth]{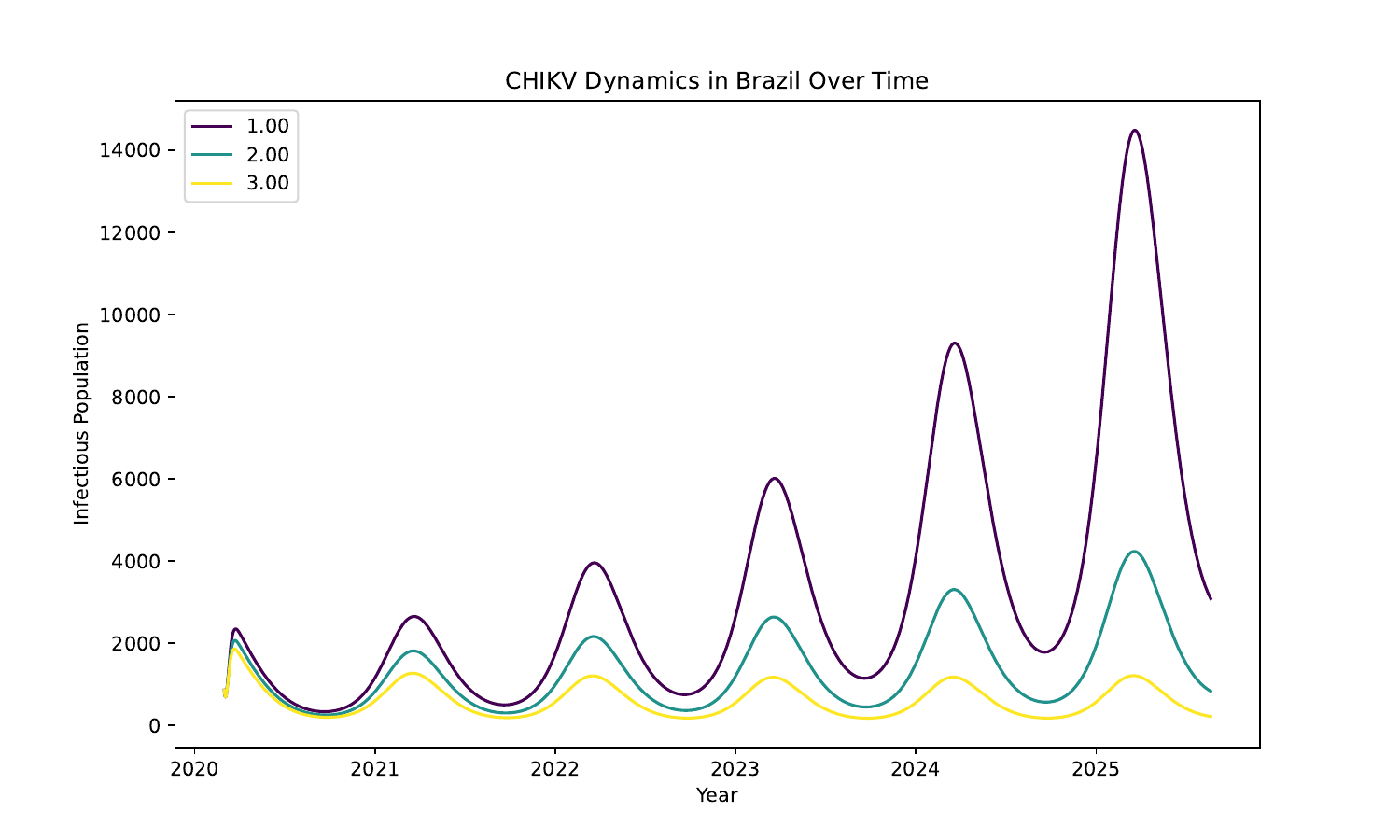}
\end{subfigure}
\hfill  
\begin{subfigure}[a]{0.48\textwidth}
\centering
(b)\includegraphics[width=\textwidth]{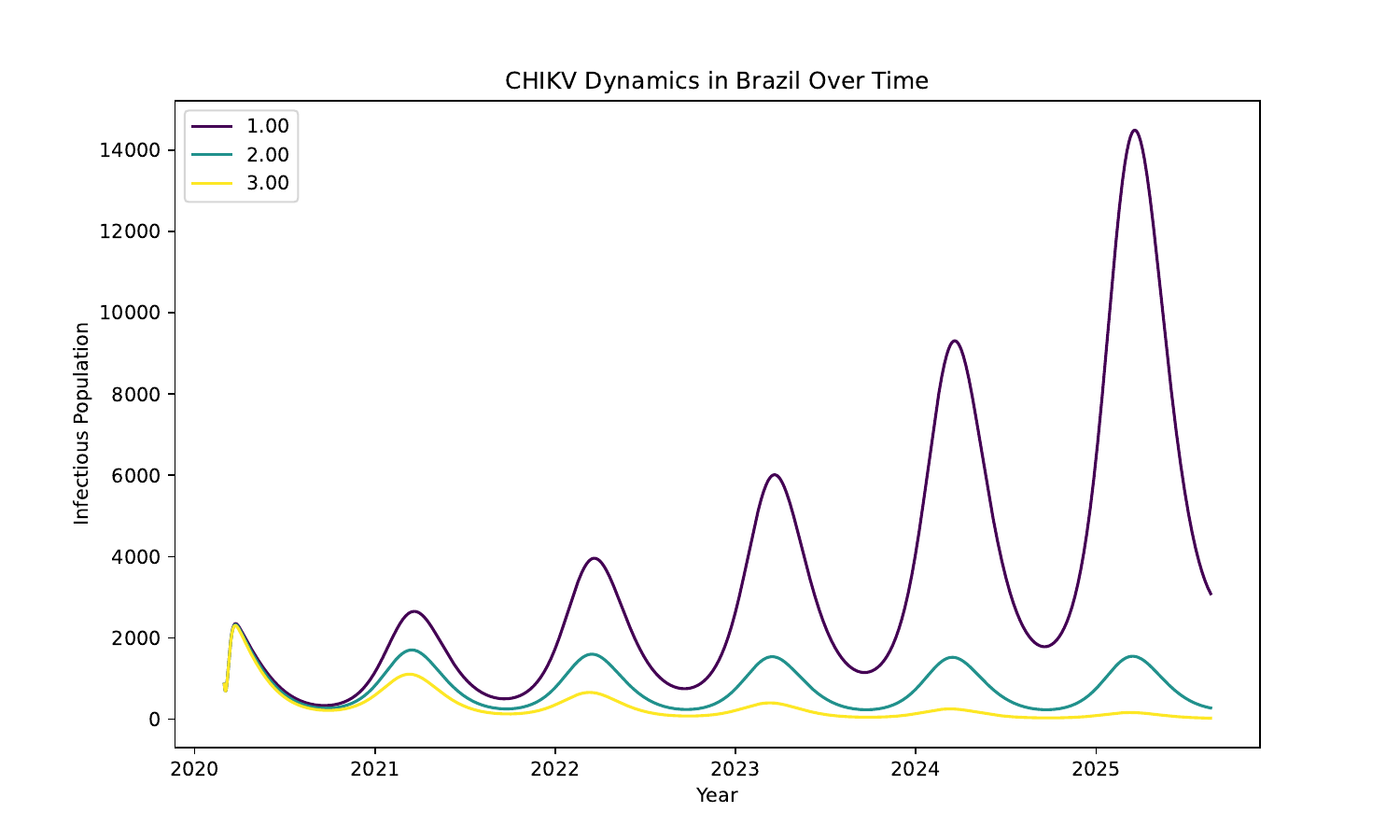}
\end{subfigure}
\caption{Sensitive analysis on (a) larvae and (b) mosquitoes death rate multiplier of Infectious and Symptomatic Human Population in Brazil ($I_h^{BR}$), projected until August 2025.}\label{BRAZILIU}
\end{figure}

The dynamics in Miami, much like in Brazil, are highly sensitive to biting rate multipliers. This suggests that the use of insecticidal sprays and spatial repellents is essential, along with potential advancements from new studies aimed at reducing the per capita biting rate. Furthermore, as evidenced by the projections presented below and in Figure \ref{miami}, controlling the mosquito biting rate and mosquito mortality rate may prove to be more effective than focusing on larvae death rates. This fact is essentially analogous to the situation in Brazil, as mentioned earlier.
\begin{figure}[htp]
\centering
\begin{subfigure}[a]{0.48\textwidth}
\centering
(a)\includegraphics[width=\textwidth]{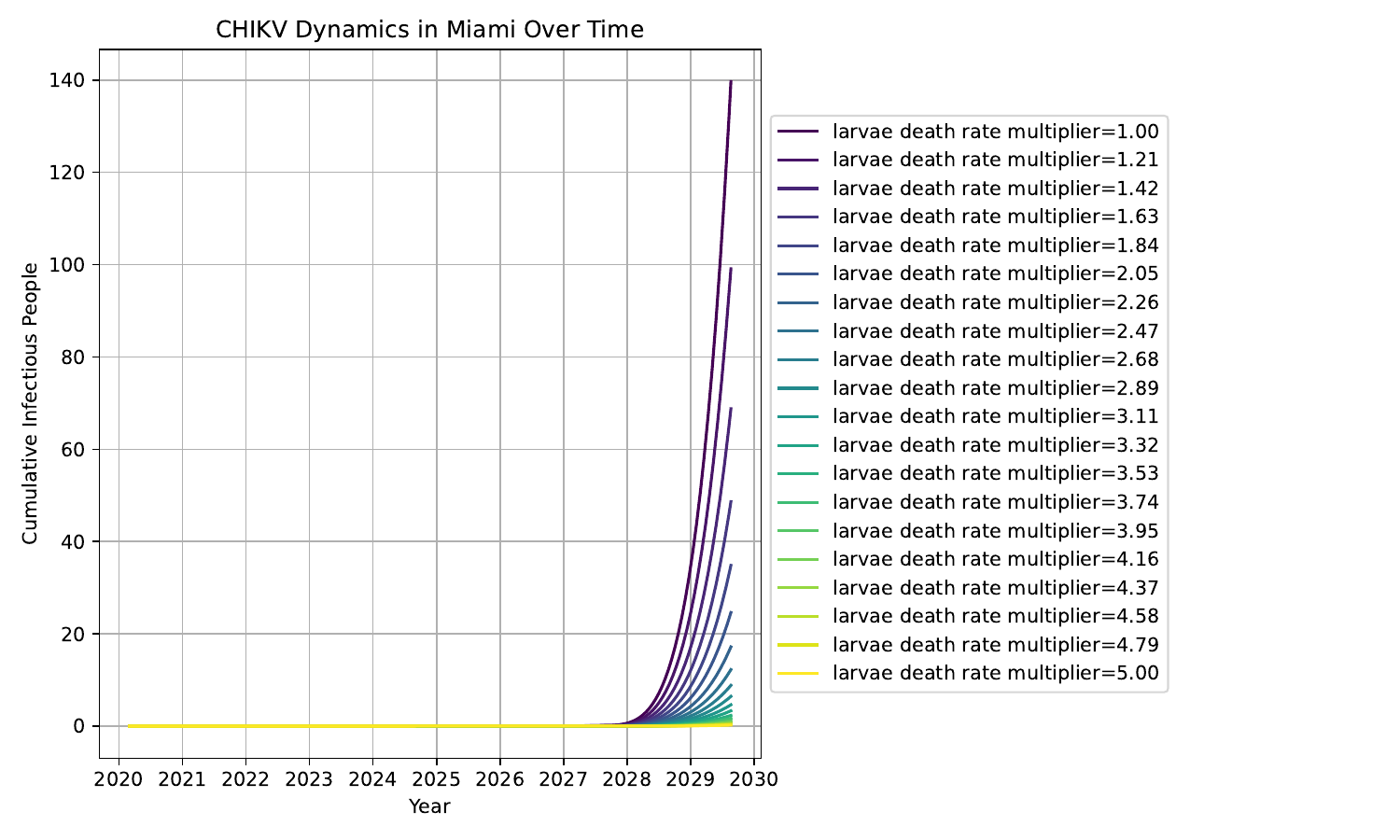}
\end{subfigure}
\hfill  
\begin{subfigure}[a]{0.48\textwidth}
\centering
(b)\includegraphics[width=\textwidth]{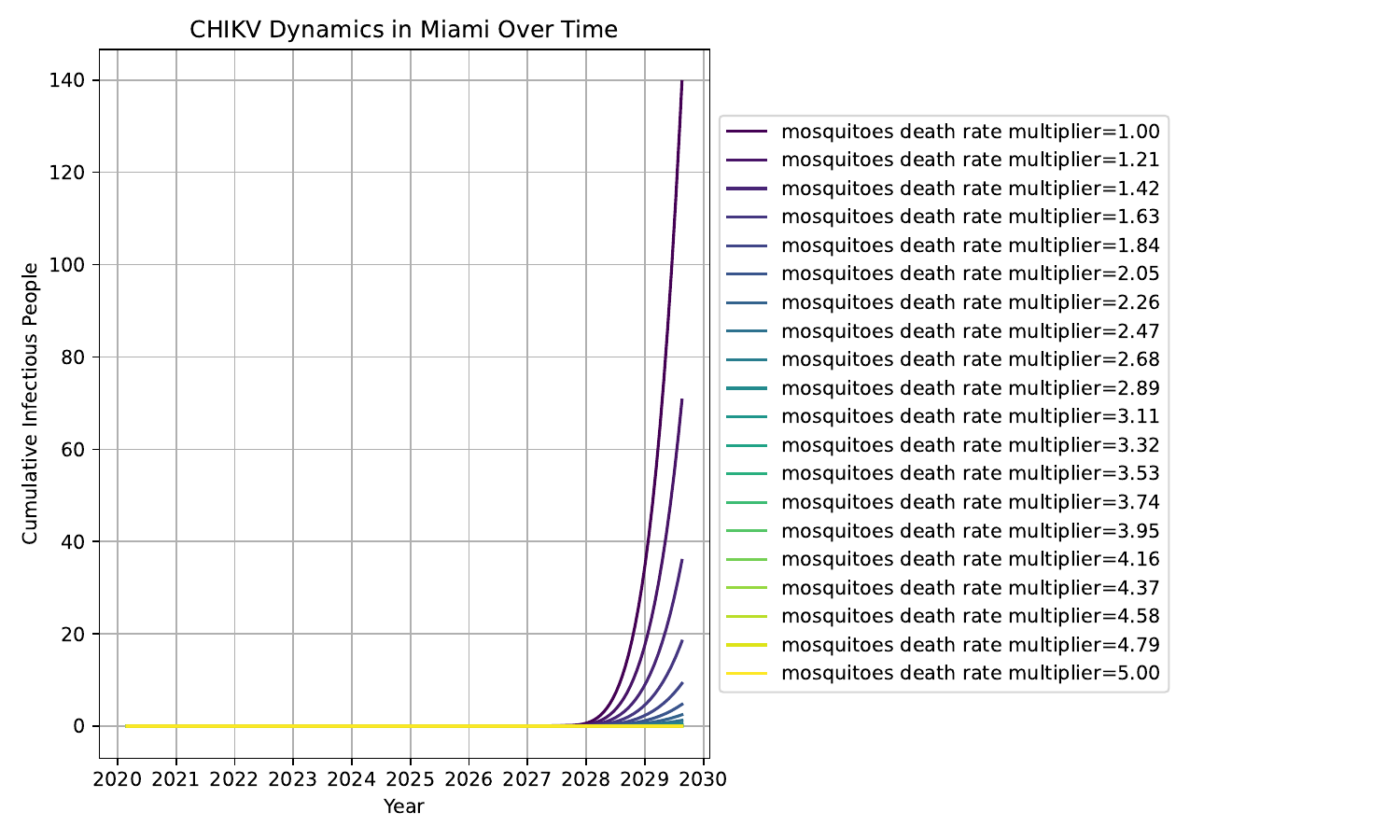}
\end{subfigure}
\caption{Sensitive analysis on (a) larvae and (b) mosquitoes death rate multiplier of Infectious and Symptomatic Human Population in Miami ($I_h^{MIA}$), projected until 2030.}\label{miamiz}
\end{figure}
    
Referencing Figure \ref{eita mah}, we consider the implementation of public health strategies starting in January 2025, specifically focusing on how the disease dynamics evolve in response to reductions in mosquito biting rates. In Figure \ref{proj_Br}, we examine variations in $b_{m_{\text{mult}}}$ from 1 to 0.7, which represents different levels of biting rate control strategies that may be adopted in Brazil throughout 2025, and observe a significant reduction on the number of infectious humans. It is important to note that in all plots of infectious individuals presented in this section, when varying one of the multipliers while performing sensitivity analysis, all other multipliers are held constant at a value of one, ensuring a controlled comparison of each intervention's impact on disease transmission.
\begin{figure}[htp]
\centering
\includegraphics[width=10cm]{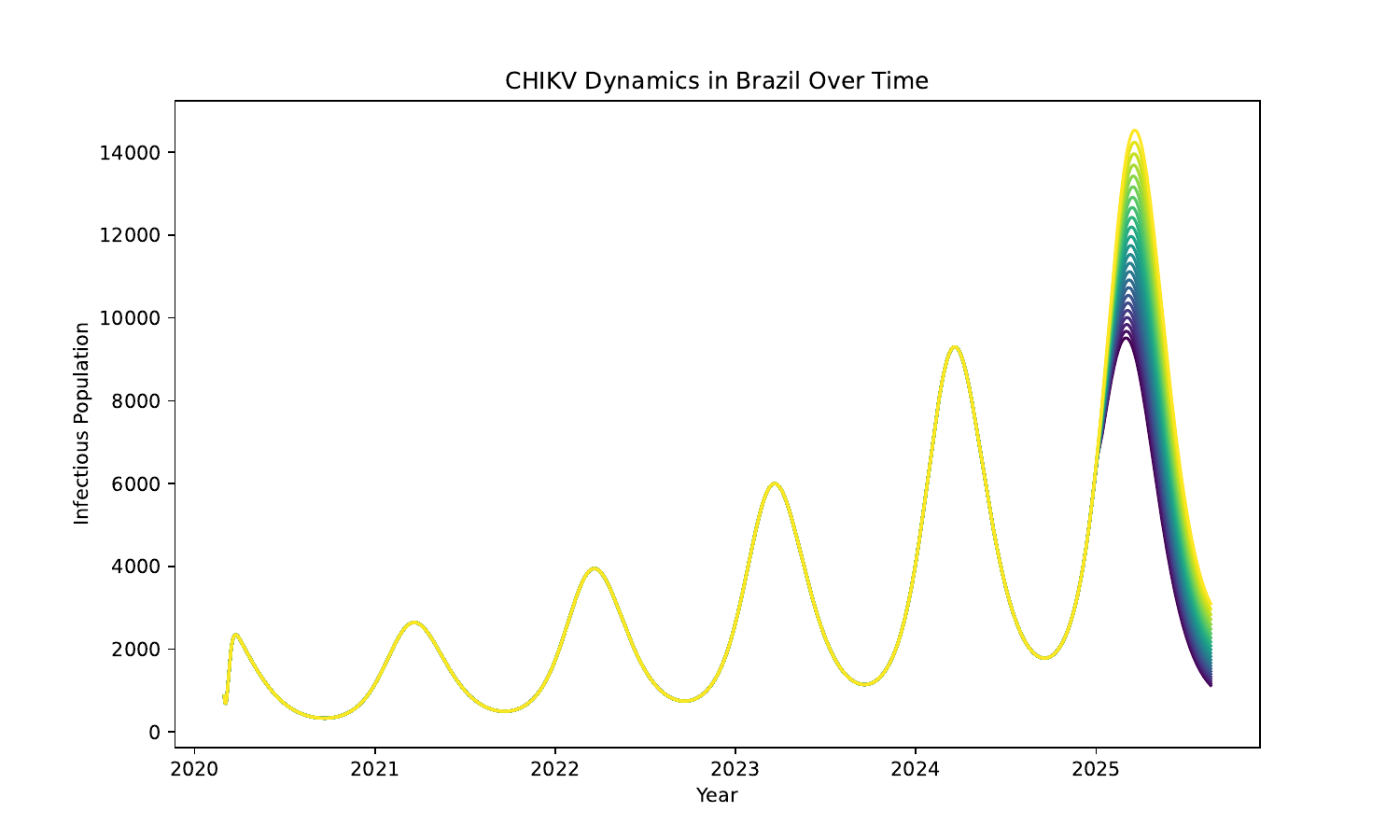}
\caption{Projection of Infectious and Symptomatic Human Population in Brazil ($I_h^{BR}$), forecast until August 2025, with varying biting rate control measures adopted before the peak of infections in 2025.}
\label{proj_Br}
\end{figure}

\subsection{Sensitive Analysis of the Mosquito and Basic Reproduction Numbers} 

From now on, we average all of the time periodic parameters of our model, to arrive in the autonomous DDE system: 
\begin{equation}\label{plu}
\begin{aligned}[t]
\frac{dS_h^{MIA}}{dt} &= R^{MIA} - b_m^{MIA} \beta_{mh} S_h^{MIA}(t) I_m^{MIA}(t) - \mu_h S_h^{MIA}(t) + m_{MB}S_h^{j}(t) - m_{BM}S_h^{MIA}(t), \\
\frac{dI_h^{MIA}}{dt} &= (1-a) \beta_{mh} b_m^{MIA} S_h^{MIA}(t-\tau^{MIA}) I_m^{MIA}(t-\tau^{MIA}) e^{-\mu_h \tau^{MIA}} - (\eta_h + \mu_h) I_h^{MIA}(t), \\
\frac{dA_h^{MIA}}{dt} &= a \beta_{mh} b_m^{MIA} S_h^{MIA}(t-\tau^{MIA}) I_m^{MIA}(t-\tau^{MIA}) e^{-\mu_h \tau^{MIA}} - (\eta_h + \mu_h) A_h^{MIA}(t) + m_{BM} A_h^{BR}(t) \\
&\quad  - m_{MB} A_h^{MIA}(t), \\
\frac{dL_I^{MIA}}{dt} &= \mu_b^{MIA} \left(1 - \frac{L_s^{MIA}(t) + L_I^{MIA}(t)}{K^{MIA}}\right) p I_m^{MIA}(t) - \mu_l^{MIA} L_I^{MIA}(t), \\
\frac{dL_s^{MIA}}{dt} &= \mu_b^{MIA} \left(1 - \frac{L_s^{MIA}(t) + L_I^{MIA}(t)}{K^{MIA}}\right)\left(S_m^{MIA}(t) + E_m^{MIA}(t) + (1-p) I_m^{MIA}(t)\right) - \mu_l^{MIA} L_s^{MIA}(t), \\
\frac{dS_m^{MIA}}{dt} &= \mu_b^{MIA} \left(1 - \frac{L_s^{MIA}(t - \tau_l^{MIA}) + L_I^{MIA}(t - \tau_l^{MIA})}{K^{MIA}}\right) \left(S_m^{MIA}(t - \tau_l^{MIA}) + E_m^{MIA}(t - \tau_l^{MIA}) \right.\\
&\quad \left.+ (1-p) I_m^{MIA}(t - \tau_l^{MIA})\right) e^{-\mu_l^{MIA} \tau_l^{MIA}} - b_m^{MIA} \beta_{hm} S_m^{MIA}(t) (I_h^{MIA}(t) + A_h^{MIA}(t)) - \mu_m^{MIA} S_m^{MIA}(t), \\
\frac{dE_m^{MIA}}{dt} &= b_m^{MIA} \beta_{hm} S_m^{MIA}(t) (I_h^{MIA}(t) + A_h^{MIA}(t)) - \mu_m^{MIA} E_m^{MIA}(t) \\
&\quad - \beta_{hm} b_m^{MIA} S_m^{MIA}(t - \tau^{MIA}) (I_h^{MIA}(t - \tau^{MIA}) + A_h^{MIA}(t - \tau^{MIA})) e^{-\mu_l^{MIA} \tau^{MIA}}, \\
\frac{dI_m^{MIA}}{dt} &= \mu_b^{MIA} \left(1 - \frac{L_s^{MIA}(t - \tau_l^{MIA}) + L_I^{MIA}(t - \tau_l^{MIA})}{K^{MIA}}\right) p I_m^{MIA}(t - \tau_l^{MIA}) e^{-\mu_l^{MIA} \tau_l^{MIA}} \\
&\quad + \beta_{hm} b_m^{MIA} S_m^{MIA}(t - \tau^{MIA}) (I_h^{MIA}(t - \tau^{MIA}) + A_h^{MIA}(t - \tau^{MIA})) e^{-\mu_l^{MIA} \tau^{MIA}} - \mu_m^{MIA} I_m^{MIA}(t)\\
\frac{dS_h^{BR}}{dt} &= R^{BR} - b_m^{BR} \beta_{mh} S_h^{BR}(t) I_m^{BR}(t) - \mu_h S_h^{BR}(t) + m_{MB}S_h^{j}(t) - m_{BM}S_h^{BR}(t), \\
\frac{dI_h^{BR}}{dt} &= (1-a) \beta_{mh} b_m^{BR} S_h^{BR}(t-\tau^{BR}) I_m^{BR}(t-\tau^{BR}) e^{-\mu_h \tau^{BR}} - (\eta_h + \mu_h) I_h^{BR}(t), \\
\frac{dA_h^{BR}}{dt} &= a \beta_{mh} b_m^{BR} S_h^{BR}(t-\tau^{BR}) I_m^{BR}(t-\tau^{BR}) e^{-\mu_h \tau^{BR}} - (\eta_h + \mu_h) A_h^{BR}(t) + m_{MB} A_h^{MIA}(t) - m_{BM} A_h^{BR}(t), \\
\frac{dL_I^{BR}}{dt} &= \mu_b^{BR} \left(1 - \frac{L_s^{BR}(t) + L_I^{BR}(t)}{K^{BR}}\right) p I_m^{BR}(t) - \mu_l^{BR} L_I^{BR}(t), \\
\frac{dL_s^{BR}}{dt} &= \mu_b^{BR} \left(1 - \frac{L_s^{BR}(t) + L_I^{BR}(t)}{K^{BR}}\right)\left(S_m^{BR}(t) + E_m^{BR}(t) + (1-p) I_m^{BR}(t)\right) - \mu_l^{BR} L_s^{BR}(t) \\
\frac{dS_m^{BR}}{dt} &= \mu_b^{BR} \left(1 - \frac{L_s^{BR}(t - \tau_l^{BR}) + L_I^{BR}(t - \tau_l^{BR})}{K^{BR}}\right) \left(S_m^{BR}(t - \tau_l^{BR}) + E_m^{BR}(t - \tau_l^{BR}) \right. \\
&\quad \left. + (1-p) I_m^{BR}(t - \tau_l^{BR})\right) e^{-\mu_l^{BR} \tau_l^{BR}} - b_m^{BR} \beta_{hm} S_m^{BR}(t) (I_h^{BR}(t) + A_h^{BR}(t)) - \mu_m^{BR} S_m^{BR}(t), \\
\frac{dE_m^{BR}}{dt} &= b_m^{BR} \beta_{hm} S_m^{BR}(t) (I_h^{BR}(t) + A_h^{BR}(t)) - \mu_m^{BR} E_m^{BR}(t) \\
&\quad - \beta_{hm} b_m^{BR} S_m^{BR}(t - \tau^{BR}) (I_h^{BR}(t - \tau^{BR}) + A_h^{BR}(t - \tau^{BR})) e^{-\mu_l^{BR} \tau^{BR}}, \\
\frac{dI_m^{BR}}{dt} &= \mu_b^{BR} \left(1 - \frac{L_s^{BR}(t - \tau_l^{BR}) + L_I^{BR}(t - \tau_l^{BR})}{K^{BR}}\right) p I_m^{BR}(t - \tau_l^{BR}) e^{-\mu_l^{BR} \tau_l^{BR}} \\
&\quad + \beta_{hm} b_m^{BR} S_m^{BR}(t - \tau^{BR}) (I_h^{BR}(t - \tau^{BR}) + A_h^{BR}(t - \tau^{BR})) e^{-\mu_l^{BR} \tau^{BR}} - \mu_m^{BR} I_m^{BR}(t).
\end{aligned}
\end{equation}

Now, according to \cite{zhao2024linear} and following the same notations, we will compute the reproduction numbers for humans and mosquitoes populations, individually. First, define
$$\hat{F}_{v}^{BR}:=\begin{pmatrix}
0 & \mu_{b}^{BR} \\
0 & e^{-\tau_{l}^{BR} \mu_l^{BR}}
\end{pmatrix},
\quad \hat{V}_{v}^{BR}:=\begin{pmatrix}
\mu_{l}^{BR} & 0 \\
0 & \mu_{m}^{BR}
\end{pmatrix},$$
$$\hat{F}_{v}^{MIA}:=\begin{pmatrix}
0 & \mu_{b}^{MIA} \\
0 & e^{-\tau_{l}^{MIA} \mu_l^{MIA}}
\end{pmatrix},
\quad \hat{V}_{v}^{MIA}:=\begin{pmatrix}
\mu_{l}^{MIA} & 0 \\
0 & \mu_{m}^{MIA}
\end{pmatrix}.$$Thus we are in the position to compute the mosquito reproduction number for both patches as 
$$
\mathcal{R}_{m}^{BR} = \rho(\hat{F}_{v}^{BR} \cdot (\hat{V}_{v}^{BR})^{-1}), \;\; \mathcal{R}_{m}^{MIA} = \rho(\hat{F}_{v}^{MIA} \cdot (\hat{V}_{v}^{MIA})^{-1}).
$$
Now, depending on $\operatorname{sign}(\mathcal{R}_{m}^{MIA}-1)$ and $\operatorname{sign}(\mathcal{R}_{m}^{BR}-1)$, we define the basic reproduction number $\mathcal{R}_{0}$ for the multi-patch system. First let us define the following matrices, for $i,j\in \{1,2\},\ i\neq j$, and considering $m\in \{m_{ij}, m_{ji}\}$:
\[
F_{0_{kl}}^{i}:=\begin{bmatrix}
0 & 0 & 0 & (1-a)\beta_{mh} b_m^i S_{h,\text{DFE}}^i e^{-\mu_h \tau_h} \\
0 & m_{kl} & 0 & a\beta_{mh} b_m^i S_{h,\text{DFE}}^i e^{-\mu_h \tau_h} \\
0 & 0 & 0 & p\ \mu_b^{i} \\
0 & 0 & 0 & \mu_b^i p\ e^{-\tau_l^i \mu_l^i}
\end{bmatrix},
\]
\[
M^{i} = \mu_{b}^{i} \left( 1 - \frac{L_{s,\text{DFE}}^i}{K^{i}} \right) p\  e^{-\tau_l^i \mu_l^i},
\]
\[
F_{1_{kl}}^{i}:=\begin{bmatrix}
0 & 0 & 0 & (1-a)\beta_{mh}b_m^i S_{h,\text{DFE}}^{i} e^{-\mu_h \tau_h} \\
0 & m_{kl} & 0 & a\beta_{mh} b_m^i S_{h,\text{DFE}}^i e^{-\mu_h \tau_h} \\
0 & 0 & 0 & p\ \mu_b^{i} \left( 1 - \frac{L_{s,\text{DFE}}^i}{K^{i}} \right) \\
\beta_{hm} b_m^i S_{m,\text{DFE}}^i e^{-\tau^i \mu_l^i} & \beta_{hm} b_m^i S_{m,\text{DFE}}^i e^{-\tau^i \mu_l^i} & 0 & M^i
\end{bmatrix},
\]
where we compute the DFE for the autonomous system originated from (\ref{pem}) to obtain
\begin{align*}
    S_{h,\text{DFE}}^{BR} &= 
\frac{m_{BM} \cdot \mu_h \cdot N_{h0B} + \mu_h \cdot N_{h0M}}{(\mu_h + m_{MB}) (\mu_h + m_{BM}) - m_{MB} \cdot m_{BM}},\\
    S_{h,\text{DFE}}^{MIA} &=  
\frac{m_{MB} \cdot S_{h1,\text{DFE}} + \mu_h \cdot N_{h0B}}{\mu_h + m_{MB}},\\
    L_{s,\text{DFE}}^i &=   
K^{i} \left( 1 - \mu_m^{i} \frac{e^{-\tau_l^{i} \mu_l^{i}}}{\mu_b^{i}} \right),\\
S_{m,\text{DFE}}^i &=   
 \frac{L_{s,\text{DFE}}^i\ \mu_l^i}{\mu_b^i \left( 1 - \frac{L_{s,\text{DFE}}^i}{K^i} \right)}.
\end{align*}
Also, 
\begin{align*}   
m_{MB}=-\log\left(1-\frac{(700000/365)}{N_{h0M}}\right),\;\;
m_{BM}=-\log\left(1-\frac{(700000/365)}{N_{h0B}}\right).
\end{align*}
Furthermore, let $$V := 
\begin{pmatrix}
\eta_h + \mu_h & 0 & 0 & 0 & 0 & 0 & 0 & 0 \\
0 & \eta_h + \mu_h + m_{MB} & 0 & 0 & 0 & 0 & 0 & 0 \\
0 & 0 & \mu_l^{MIA} & 0 & 0 & 0 & 0 & 0 \\
0 & 0 & 0 & \mu_m^{MIA} & 0 & 0 & 0 & 0 \\
0 & 0 & 0 & 0 & \eta_h + \mu_h & 0 & 0 & 0 \\
0 & 0 & 0 & 0 & 0 & \eta_h + \mu_h + m_{BM} & 0 & 0 \\
0 & 0 & 0 & 0 & 0 & 0 & \mu_l^{BR} & 0 \\
0 & 0 & 0 & 0 & 0 & 0 & 0 & \mu_m^{BR} \\
\end{pmatrix}.
$$
Now, considering that $0\in Mat(4,4),$ for each value of $\operatorname{sgn}(\mathcal{R}_{m}^i - 1)$ we define a matrix $F$ as follows:
\begin{equation}
 F:=\begin{cases}
    \begin{pmatrix}
    F_{0_{BM}}^{MIA} & 0 \\
    0 & F_{0_{MB}}^{BR}
\end{pmatrix}\ \in Mat(8,8) & \text{if } \mathcal{R}_{m}^{BR} - 1,\ \mathcal{R}_{m}^{MIA} - 1<0, \\
    \begin{pmatrix}
    F_{1_{BM}}^{MIA} & 0 \\
    0 & F_{1_{MB}}^{BR}
\end{pmatrix}\ \in Mat(8,8) & \text{if } \mathcal{R}_{m}^{BR} - 1,\ \mathcal{R}_{m}^{MIA} - 1>0, \\
    \begin{pmatrix}
    F_{1_{BM}}^{MIA} & 0 \\
    0 & F_{0_{MB}}^{BR}
\end{pmatrix}\ \in Mat(8,8) & \text{if } \mathcal{R}_{m}^{BR} - 1>0,\ \mathcal{R}_{m}^{MIA} - 1<0, \\
    \begin{pmatrix}
    F_{0_{BM}}^{MIA} & 0 \\
    0 & F_{1_{MB}}^{BR}
\end{pmatrix}\ \in Mat(8,8) & \text{if } \mathcal{R}_{m}^{BR} - 1<0,\ \mathcal{R}_{m}^{MIA} - 1>0.
\end{cases}
\end{equation}
Hence, $\mathcal{R}_{0}=\rho(FV^{-1})$.
For each season, we average the time dependent parameters of system (\ref{plu}), to arrive in Figures \ref{Rm1}, \ref{Rm2}.
\begin{figure}[htp]
\centering
\begin{subfigure}[b]{0.45\textwidth}  
\centering
\includegraphics[width=\textwidth]{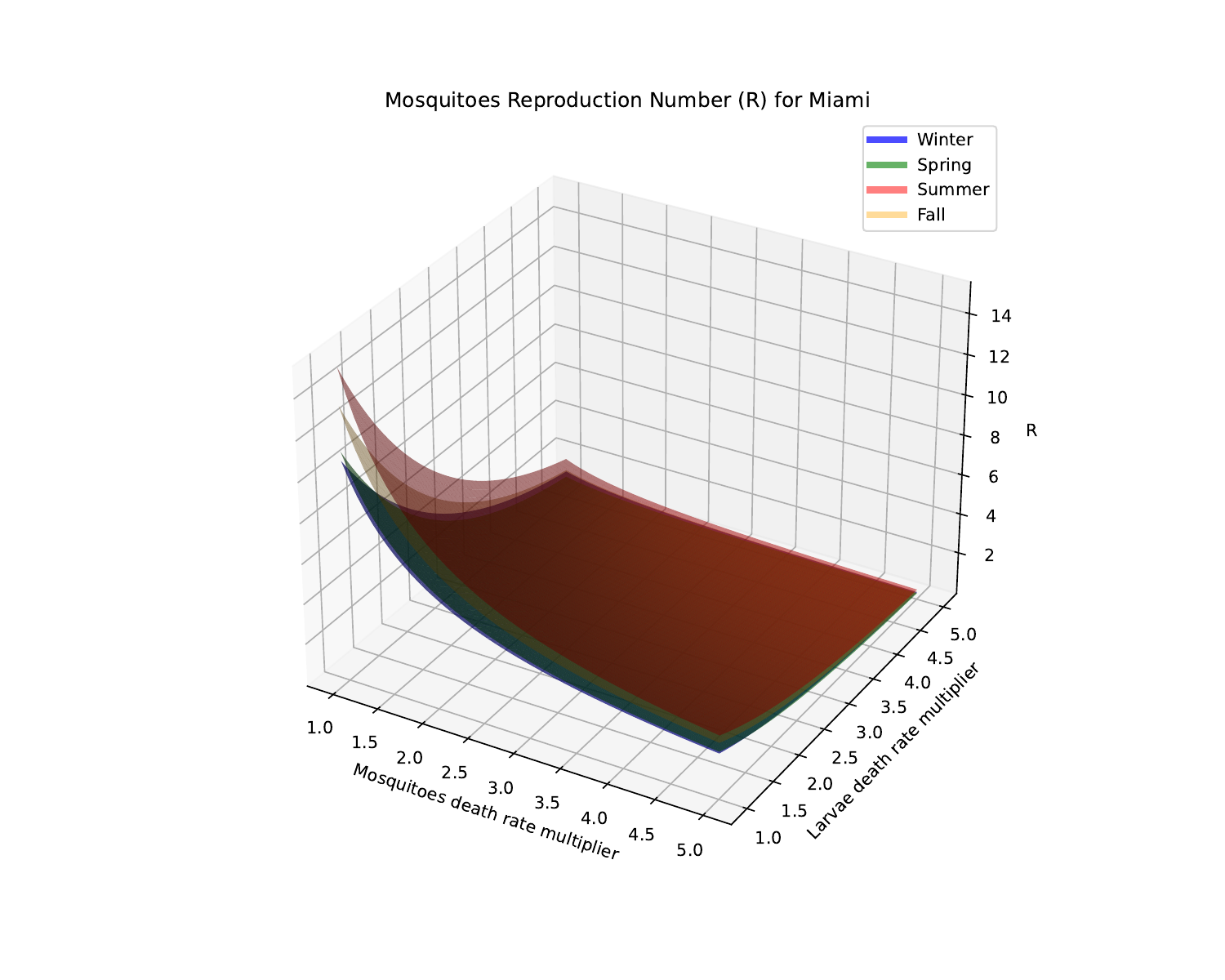}
\caption{$\mathcal{R}_{m}^{MIA}$ vs $(\mu_{m_{\text{mult}}},\ \mu_{l_{\text{mult}}})$}
\label{Rm1}
\end{subfigure}
\hfill  
\begin{subfigure}[b]{0.45\textwidth}  
\centering
\includegraphics[width=\textwidth]{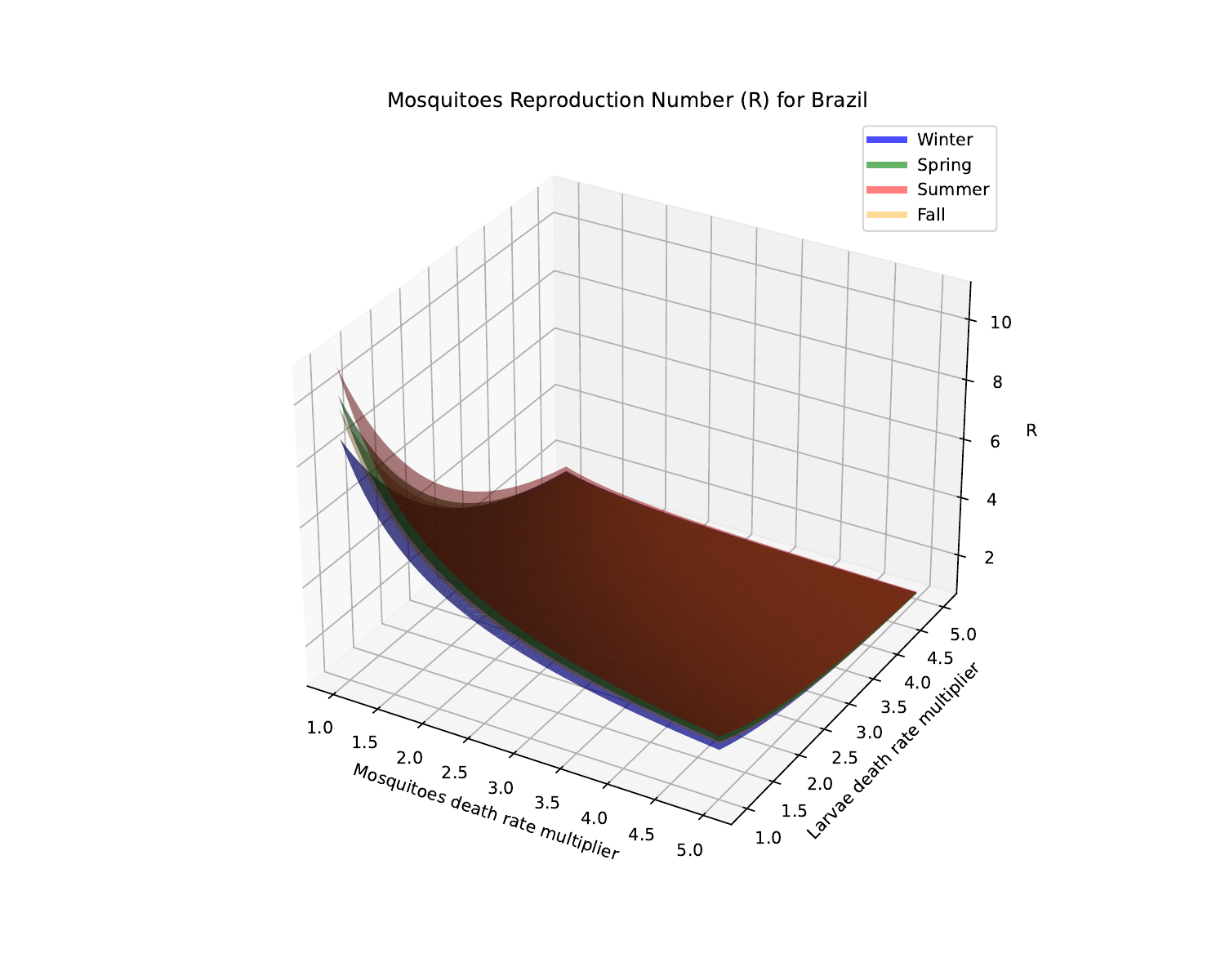}
\caption{$\mathcal{R}_{m}^{BR}$ vs $(\mu_{m_{\text{mult}}},\ \mu_{l_{\text{mult}}})$}
\label{Rm2}
\end{subfigure}
\caption{Sensitive Analysis on the mosquitoes death rate multiplier $(\mu_{m_{\text{mult}}})$ and larvae death rate multiplier $(\mu_{l_{\text{mult}}})$ of $\mathcal{R}_{m}^{MIA}$ and $\mathcal{R}_{m}^{BR}.$ Both Figures (a) and (b) have the plots for each season.}
\label{fig:comparison}
\end{figure}

A key observation from Figures \ref{Rm1} and \ref{Rm2} is that the mosquito dynamics in Miami are conducive to a potential outbreak, given the ideal conditions and the number of infected humans. Specifically, we notice that during the summer, the mosquito reproduction number $\mathcal{R}_{m}$ reaches values even higher than those observed in Brazil. Another observation is that in Brazil, as the majority of the endemic regions are located in the Northeast of the country, there is slight variation in $\mathcal{R}_{m}$ across the fall, winter, and spring. Similarly, in Miami the mosquito reproduction number during the spring and fall is quite close, regardless of the values of the control strategy coefficients $(\mu_{m_{\text{mult}}},\ \mu_{l_{\text{mult}}})$.

In Figures \ref{R0zim} and \ref{R0xb_m}, we perform a sensitive analysis of the global basic reproduction number $\mathcal{R}_{0}$ to better understand its behavior under public interventions of vector control strategies and virus spread. It is straightforward to notice that across all seasons, reducing the biting rate in a proportion $\lambda \leq 0.2$ reduces the $\mathcal{R}_{0}$ to values close to one, and eventually less than one. Also, increasing the $\mu_{m_{mult}}, \mu_{l_{mult}}$ in 5 folds individually reduces the $\mathcal{R}_{0}$ to values close to one in all seasons. And increasing both death rate multipliers to values close to 5 eventually leads to $\mathcal{R}_{0} <1.$
\begin{figure}[htp]
\centering
\begin{subfigure}[b]{0.45\textwidth}  
\centering
\includegraphics[width=\textwidth]{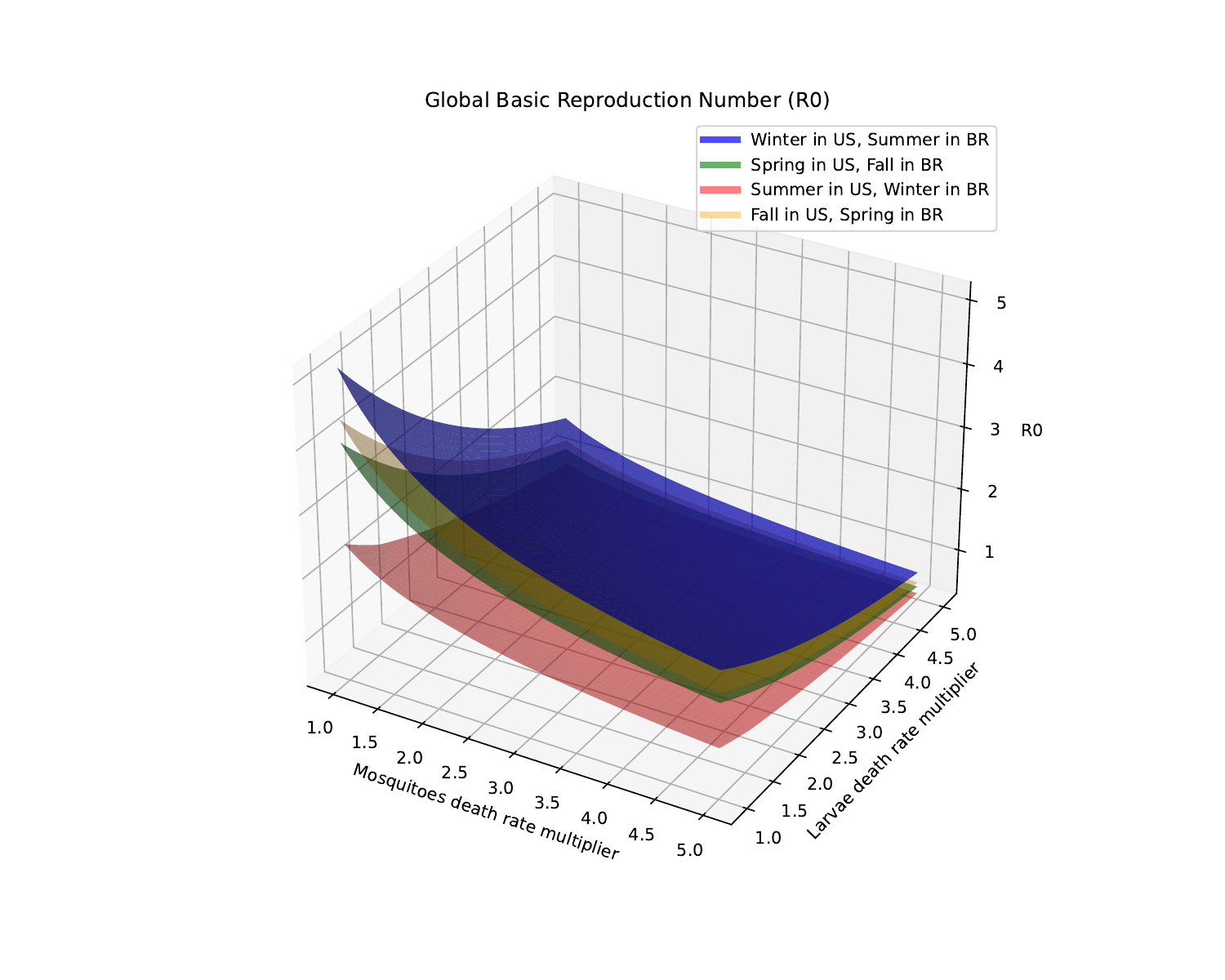}
\caption{$\mathcal{R}_{0}\ vs\ (\mu_{m_{\text{mult}}},\ \mu_{l_{\text{mult}}})$}
\label{R0zim}
\end{subfigure}
\hfill  
\begin{subfigure}[b]{0.45\textwidth}  
\centering
\includegraphics[width=\textwidth]{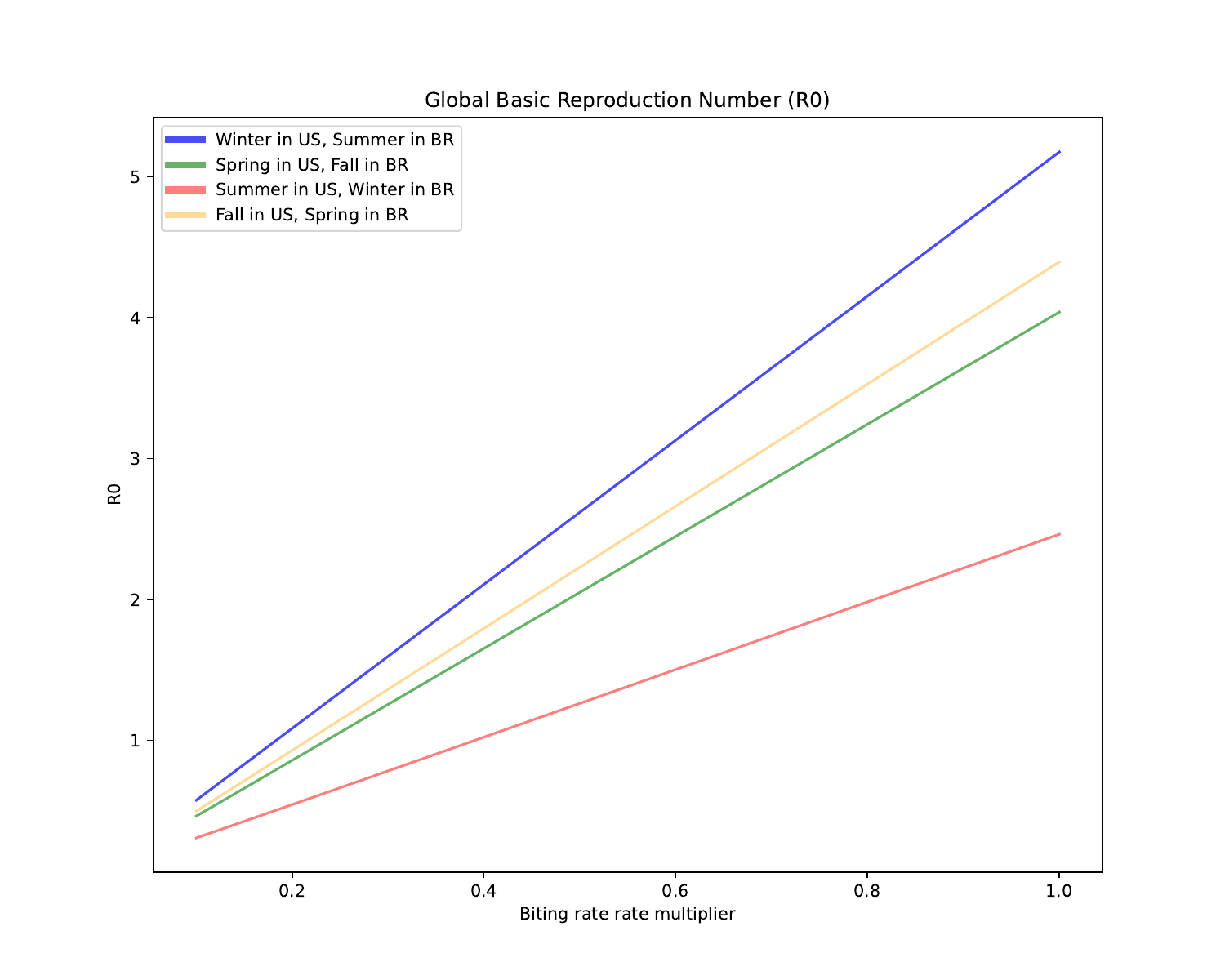}
\caption{$\mathcal{R}_{0}\ vs\ b_{m_{\text{mult}}}$}
\label{R0xb_m}
\end{subfigure}
\caption{In Figure (a) we have $\mathcal{R}_{0}$ in terms of two parameters the mosquitoes death rate multiplier $ (\mu_{m_{mult}})$ and the larvae death rate multiplier $(\mu_{l_{mult}})$. In Figure (b) we have $R_0$ in terms of the biting rate multiplier $b_{m_{mult}}$. Both Figures (a) and (b) have the plots for each season.}
\label{fig:R_0}
\end{figure}

Now, based on Figures \ref{R0zim}, \ref{Rm1} and \ref{Rm2}, consider $\mu_{m_{mult}}=\mu_{l_{mult}}=3$. According to Theorem \ref{principalthm}, the disease will die out, as represented in the Figure \ref{mu=5}. Analogously, in the Figure \ref{bite=.1}, based on \ref{R0xb_m} where $\mathcal{R}_{0}$ decreases linearly as $b_{{m}_{mult}}$ decreases linearly, and considering $b_{{m}_{mult}}=0.1,$ the disease will die out due to Theorem \ref{principalthm} and the fact that the $\mathcal{R}_{0}$ computed in this section is an approximated value of the $\mathcal{R}_{0}$ computed from the original system (\ref{0})--(\ref{0-1}) without the $E_h^i$ and $R_h^i$ equations.
\begin{figure}[htp]
\centering
\begin{subfigure}[b]{0.45\textwidth}  
\centering
\includegraphics[width=\textwidth]{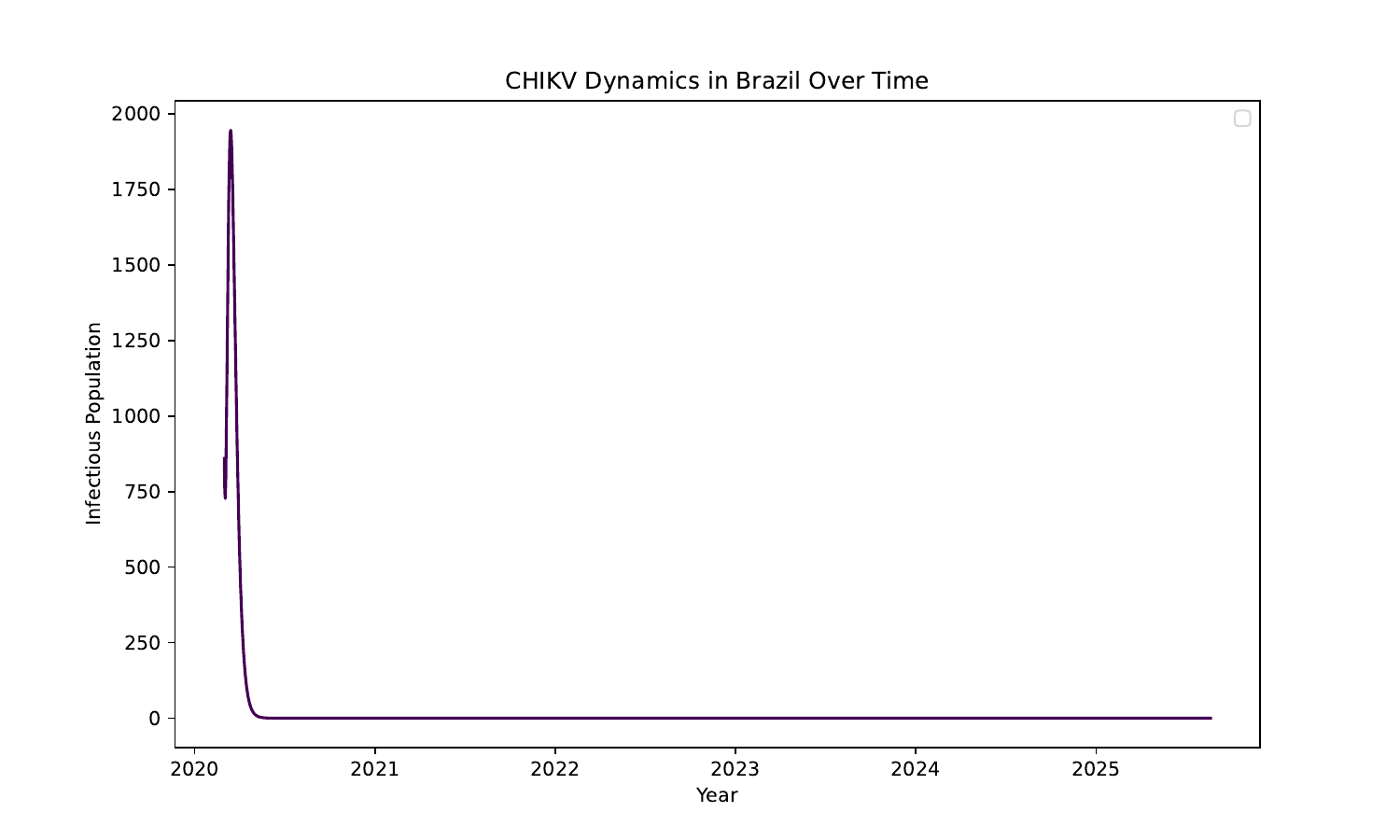}
\caption{Simulation with ${\mu_{l}}_{\text{mult}}={\mu_{m}}_{\text{mult}}=3.$}
\label{mu=5}
\end{subfigure}
\hfill  
\begin{subfigure}[b]{0.45\textwidth}  
\centering
\includegraphics[width=\textwidth]{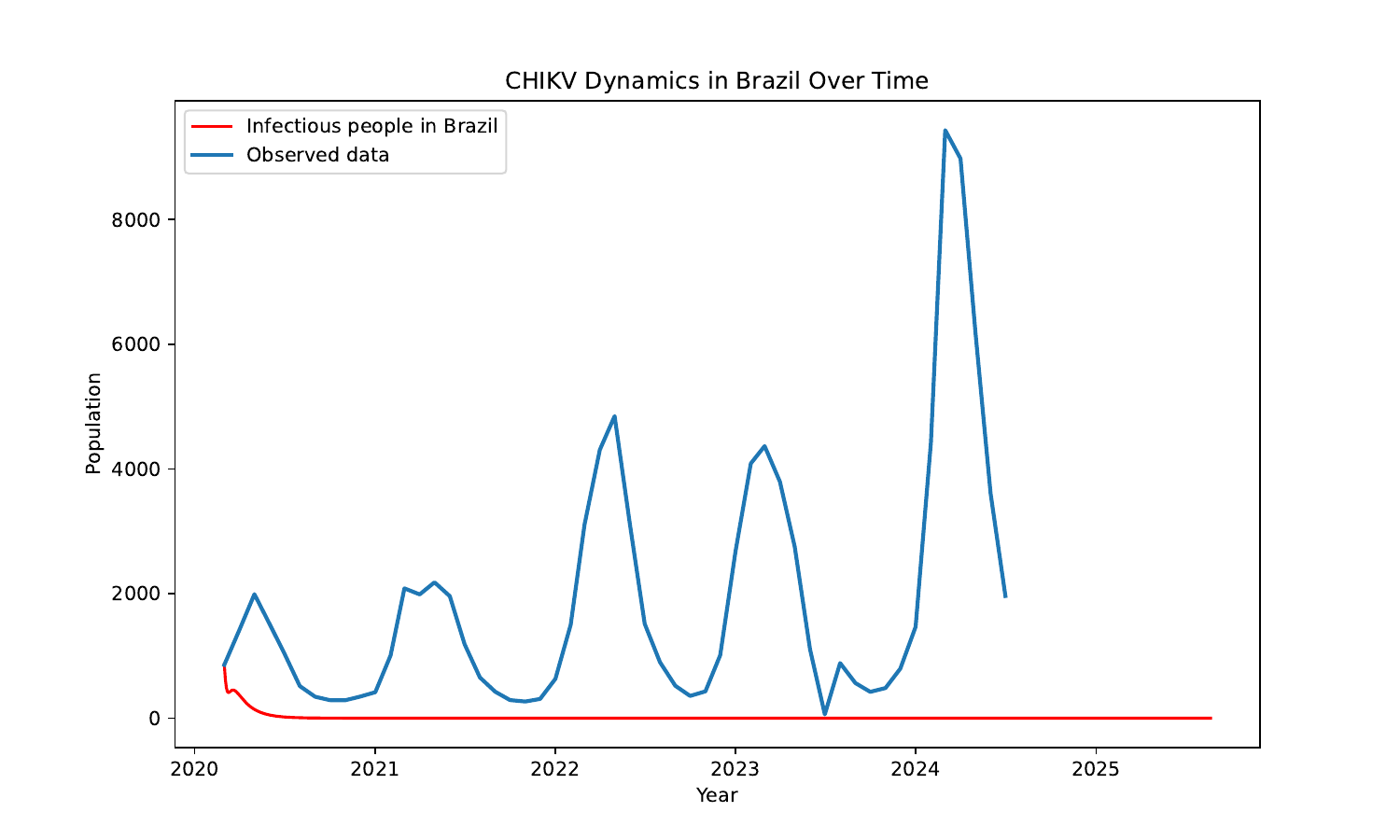}
\caption{Simulation with ${b_{m}}_{\text{mult}}=0.1.$}
\label{bite=.1}
\end{subfigure}
\caption{Simulation of the number of symptomatic infectious human population in Brazil from March 2020 to 2025 in different scenarios. (a) ${\mu_{l}}_{\text{mult}}={\mu_{m}}_{\text{mult}}=3;$ (b) ${b_{m}}_{\text{mult}}=0.1,$ the blue curve represents the data and the red curve is the simulation of the model. } 
\label{fig:simulation}
\end{figure}


\section{Discussion}

From a theoretical perspective, this papers analyses the Threshold Dynamics based on the Mosquito and Basic Reproductions Numbers for a periodical time delay multi-patch model with any finite number of patches. Theorem \ref{principalthm} is essentially an Uniform Persistance Theorem \cite{FREEDMAN1995}, which points out to the fact that as time approaches infinity, based on the $R_0$ threshold, the disease either will remain endemic or die out at each patch, and dependent upon the $R_m^i$ threshold, the mosquitoes and larvae population will coexist with human population or simply die out, for each patch, i.e, $i=1,\dots, n$. Surely, if $\mathcal{R}_m^i<1,\ \exists\ 1\leq i \leq n,$ then not only the mosquitoes and larvae population die out, but the disease will disappear at this specific patch as well. On the other hand, if $\mathcal{R}_m^1>1, \ \exists\ 1\leq i \leq n,$ then the mosquitoes, larvae and human population will always coexist, and the disease may die out or not depending solely upon $sign(\mathcal{R}_0-1)$.

From an applied perspective, considering the fact that CHIK has been imported annually to Miami in the last decade from Central and South America including Brazil, where CHIK is endemic, we proposed a multi-patch model to study the geographic spread of CHIKV, incorporating pivotal factors such as human movement, temperature, vertical transmission, and incubation period. We established crucial correlations between the mosquito reproduction number $\mathcal{R}_{m}$ and the basic reproduction number $\mathcal{R}_{0}$, which helps us to better understand the CHIKV transmission dynamics in complex multi-patch environments.

Through numerical simulations, validated with real population and temperature data, 
we projected the infectious counts under different scenarios, considering alternative control strategies in Miami-Dade through 2030, and forecast the number of infectious individuals in Brazil through 2025, which allows us to identify optimal control strategies for minimizing the number of infections effectively.

The sensitivity analysis highlights that among vector control interventions, reducing mosquito biting rates is particularly crucial for achieving a disease-free state. The analysis points out to the importance of spatial repellents and targeted sprays, especially when applied before seasonal peaks of infectious cases. Such proactive measures can mitigate disease transmission by curbing contact rates between mosquitoes and humans. Therefore, forecasting when a peak can occur in the future and knowing the right time to apply biting rates control strategies is suggested to be a key point to alleviate the infection dynamics burden.

Additionally, the role of vertical transmission in disease dynamics deserves close attention. If the probability of vertical transmission is time-dependent, it could influence outbreak progression, especially under varying environmental or intervention conditions. Exploring whether this variability is temperature-dependent could provide insights into mosquito population dynamics and their contribution to disease persistence. Furthermore, understanding the difference of the delays in developmental stages between infected and healthy larvae may refine predictive models.

Finally, the movement of humans between patches proves to be a critical factor in disease spread, potentially triggering outbreaks in previously disease-free regions, particularly in Miami-Dade, as the model suggests. This mobility, combined with climatic factors, emphasize the need for geographically coordinated public health strategies, by integrating temperature-adaptive measures that shed light on a forecast of the particular time for public interventions to be adopted, as shown in Figure \ref{proj_Br}, aiming to minimize the number of infections due to mosquito-born diseases, more generally.

\bibliographystyle{unsrt}  
\bibliography{references}

\end{document}